\documentclass[12pt]{iopart}

\expandafter\let\csname equation*\endcsname\relax
\expandafter\let\csname endequation*\endcsname\relax

\usepackage{amsmath}

\usepackage{iopams}

\usepackage[utf8]{inputenc}
\usepackage[margin = 1in]{geometry}
\usepackage{amsfonts}
\usepackage{amssymb}
\usepackage{overpic}
\usepackage{epstopdf}
\usepackage{graphicx}
\usepackage{bm}
\usepackage[font = small]{caption}
\usepackage[font = small]{subcaption}
\usepackage[hidelinks]{hyperref}

\usepackage{mathtools}
\usepackage{physics}
\usepackage{cancel}
\newcommand{\mathds}{\mathbb}
\usepackage{amsthm}
\usepackage{tikz}
\newcommand*\circled[1]{\tikz[baseline=(char.base)]{
            \node[shape=circle,draw,inner sep=2pt] (char) {#1};}}

\renewcommand{\bs}{\boldsymbol}
\newcommand{\ol}{\overline}
\newcommand{\mbf}{\mathbf}
\newcommand{\s}{\\}

\newcommand{\eps}{\varepsilon}

\newcommand{\jac}{J_{\mbf u} \, }

\begin{document}

\title[Wave bifurcation in reaction-diffusion systems]
{Amplitude equations for wave bifurcations in reaction-diffusion systems}

\author{Edgardo Villar-Sep\'{u}lveda and Alan Champneys}

\address{Engineering Mathematics, University of Bristol, Bristol BS8 1TR, UK}
\ead{edgardo.villar-sepulveda@bristol.ac.uk}
\ead{a.r.champneys@bristol.ac.uk}
\vspace{10pt}
\begin{indented}
\item Submitted for publication: \today
\end{indented}

\begin{abstract}
    A wave bifurcation is the counterpart to a Turing instability in reaction-diffusion systems, but where the critical wavenumber corresponds to a pure imaginary pair rather than a zero temporal eigenvalue. Such bifurcations require at least three components and give rise to patterns that are periodic in both space and time. Depending on boundary conditions, these patterns can comprise either rotating or standing waves.
    
    Restricting to systems in one spatial dimension, complete formulae are derived for the evaluation of the coefficients of the weakly nonlinear normal form of the bifurcation up to order five, including those that determine the criticality of both rotating and standing waves. The formulae apply to arbitrary $n$-component systems ($n\geq 3$) and their evaluation is implemented in software which is made available as supplementary material.
    
    The theory is illustrated on two different versions of three-component reaction-diffusion models of excitable media that were previously shown to feature super- and subcritical wave instabilities and on a five-component model of two-layer chemical reaction. In each case, two-parameter bifurcation diagrams are produced to illustrate the connection between complex dispersion relations and different types of Hopf, Turing, and wave bifurcations, including the existence of several codimension-two bifurcations. 
\end{abstract}

%
\noindent{\it Keywords}: Reaction-diffusion, nonlinear waves, bifurcation, Turing instability, normal form. 
%
%
%

\section{Introduction}
    Turing instability in reaction-diffusion systems whose diffusion rates differ in magnitude represents a fundamental mechanism for the breakdown of homogeneity and formation of patterns, across life and physical sciences, see \cite{Murray2, Meron} and references therein. The theory and application of such instabilities remain an active area of research, see e.g.~\cite{Krause} for a recent overview.

    In a recent paper \cite{Villar1}, we developed necessary and sufficient conditions for an $n$-component reaction-diffusion system to undergo a {\em Turing bifurcation}, which corresponds to the transition through a parameter value at which there is a critical wavenumber with a zero temporal eigenvalue. An estimate for the amplitude of the ensuing spatial pattern, and indeed whether it is stable or not, can be captured by weakly nonlinear normal-form analysis. So, in another recent paper \cite{TOMS}, we derived expressions for the coefficients of such normal forms for the Turing bifurcation, for arbitrary $n$-component systems of reaction-diffusion eqautions.

    The paper \cite{Villar1} points to another possible Turing-like instability, a {\em (Turing) wave bifurcation}, which would occur when the critical wavenumber corresponds to a pure-imaginary pair of eigenvalues. Whereas, when projected to an appropriate finite-dimensional subspace, the Turing bifurcation is typically a pitchfork bifurcation, the wave bifurcation is essentially a Hopf bifurcation.
        
    There is, however, the subtle question of boundary conditions. On a finite domain, allowable wavenumbers for a pitchfork-like instability are quantised. A Turing bifurcation arising from the continuous dispersion relation in the infinite-domain limit can be considered as an accumulation point of infinitely many such pitchfork bifurcations of different wave numbers \cite{Brena}. For the wave bifurcation, the same situation occurs for standing waves, namely a Turing wave bifurcation on an infinite domain, which would represent the infinite accumulation point of Hopf bifurcation points of systems posed on long finite domains with, say, Neumann boundary conditions. But, if we allow periodic boundary conditions, then the wave bifurcation can also give rise to travelling waves (rotating waves). Moreover, the Lyapunov coefficients that determine the criticality of the two different types of spatio-temporal patterns are not equal, in general. Thus, it is perfectly possible for standing waves to bifurcate supercritically whereas travelling waves are subcritical; or vice versa (see examples in Sec.~\ref{sec:examples} below). Also, at most one such wave can locally be stable, even if both bifurcate supercritical (see Fig.~\ref{fig:knobloch2par} below).

    There are a number of examples in the literature of Turing wave bifurcations, often in the context of models for excitable media, see \cite{Knobloch21,Vanag,Yang2,Yang,YochelisKnobloch}. The analog of such bifurcations in fluid mechanics is also well known (see e.g.~\cite{Knobloch86,Couette}).  But, it would be fair to say that this kind of pattern-forming instability has not received such a comprehensive treatment as the case of the stationary Turing patterns. One reason for this lack may be the fact that the minimum number of components for a reaction-diffusion system to feature a Turing instability is two, whereas at least three are required in the case of a wave bifurcation, see \cite{Villar1}. Thus, most canonical models of reaction-diffusion systems, such as the Brusselator, Gray-Scott model, etc.~having only two components, are incapable of developing wave instabilities.
        
    The purpose of this paper is to derive the general amplitude equation for a wave bifurcation in systems with $n$ components, for any $n\geq 3$. We will provide explicit formulae for the first Lyapunov coefficient, the coefficient of the appropriate cubic term that characterises the criticality of the bifurcation, both for standing-wave and rotating-wave solutions. The amplitude equations are equivalent to the normal form of a Hopf bifurcation with $O(2)$ symmetry, see \cite{GoStSc:88}. This normal form was first derived by Knobloch \cite{Knobloch86} at least up to third-order terms, in the context of convection in binary mixtures. That work also contains a partial unfolding of the dynamics, showing the bifurcation behaviour of standing and travelling waves, which we reproduce here.
     
    Our method of derivation of amplitude equations is quite standard; based on the multiple-scale asymptotic methodology developed by Elphick \emph{et al} \cite{tirapegui}, an extension to our earlier equivalent derivation for Turing bifurcations \cite{TOMS}. The details in the wave bifurcation case also borrow from the treatment in the book by Kuznetsov\cite[sec.~5.6]{Kuznetsov} for computation of Hopf bifurcation normal forms in reaction-diffusion systems. See also \cite{Ahamadi}. The novelty here is our presentation of general expressions for normal-form coefficients for arbitrary reaction-diffusion systems. Also, we go beyond the third-order form, to consider fifth-order terms, which come into play at a codimension-two bifurcation point (referred to as a Bautin bifurcation \cite{Kuznetsov} for a finite-dimensional Hopf bifurcation) where the appropriate third-order coefficient vanishes. The nature of such a normal form relies on the sign of an appropriate fifth-order term. Hence, we will also seek general formulae for these coefficients, both for the cases of standing and travelling waves. Moreover, all of the algorithms have been implemented in software, both {\tt Python} and {\tt Mathematica},    which is made freely available \cite{criticality-wave} for users to perform the necessary calculations on their own examples.
        	
    The rest of the paper is outlined as follows. Section \ref{sec:main} introduces the normal form and its interpretation in terms of bifurcations of standing and travelling waves. Detailed formulae are presented for the normal form coefficients from an arbitrary $n$-component reaction-diffusion system. Section \ref{sec:coefs} outlines the steps taken to derive these formulae, with algebraic details for the higher-order terms relegated to Appendix \ref{ap:order5}. Section \ref{sec:examples} considers several example systems that both illustrate the theory and point to further interactions between wave bifurcations and other instabilities. Finally, Section \ref{sec:concl} draws conclusions and suggests avenues for future work.

\section{The wave bifurcation normal form} \label{sec:main}
    Consider a general $n$-component reaction-diffusion system written in the form:
	\begin{align}
		\partial_t \mbf u &= \mbf f(\mbf u, \mu) + \mathbb D(\mu) \, \partial_{xx} \mbf u, \label{geneq}
	\end{align}
	where $\mbf u = \left(u_1, \ldots, u_n\right)^\intercal$ is a vector of variables in $\mathbb R^n$, $n\geq 3$, $\mbf f = \left(f_1, \ldots, f_n\right)^\intercal\in \mathcal C^5$ is a real-valued vector of smooth functions, $\mu\in \mathbb R$ is a parameter of the system, and $\mathbb D(\mu)$ is a positive-definite diffusion matrix, whose entries may also depend on $\mu$.
	
	Next, let us assume that $\mbf P = \mbf P(\mu)$ is a stable, isolated homogeneous steady state of system \eqref{geneq}, that is $\mbf f(\mbf P(\mu), \mu) = 0$, and if we define the Jacobian matrix of the diffusion-free system at $\mbf P$ by
	\begin{align*}
		\jac \mbf f(\mbf P(\mu), \mu) = \left(\frac{\partial f_i}{\partial u_j}(\mbf P(\mu), \mu)
		\right)_{1\leq i, j\leq n}, \qquad \mbox{then} \quad 
        \det \left(\jac \mbf f(\mbf P(\mu), \mu) \right) \neq 0,
    \end{align*}
    for all $\mu$-values of interest.

    To find eigenfunctions of the full system linearised around $\mbf P$, we seek functions of the form $\exp(ikx + \lambda t) \, \bs \phi_1^{[1]}$ with $\bs \phi_1^{[1]} \in \mathbb C^n$, $k \in \mathbb R$, $\lambda \in \mathbb C$ such that
    \begin{align}
        \det\left(\jac \mbf f(\mbf P(\mu), \mu) - k^2 \, \mathbb D(\mu) - \lambda \, I\right) &= 0. \label{eq:lambdadef}
    \end{align}
    Moreover, as is usual in the theory of nonlinear waves, we define the multivalued \emph{dispersion relation} for each of the branches $\lambda_j(k; \mu)$, $j = 1, \ldots, n$ as $k$ varies from $0$ to $+\infty$. We then define a local bifurcation by the existence of a critical parameter value $\mu^*$ for which there exists $1\leq j \leq n$ and $k^*>0$
    \begin{equation}
        \Re\left[\lambda_j(k^*; \mu^*)\right] = 0, \qquad \frac{\dd}{\dd k} \Re\left[\lambda_j(k^*; \mu^*)\right] = 0 \quad \mbox{ and } \quad \frac{\dd^2}{\dd k^2} \mbox{Re}\left[\lambda_j(k^*; \mu^*\right] < 0. \label{eq:critical}
    \end{equation}
    \begin{figure}
         \centering
         \begin{subfigure}[b]{0.45\textwidth}
             \centering
             \begin{tikzpicture}
	            \node (image) at (0,0) {
                    \includegraphics[width=\textwidth]{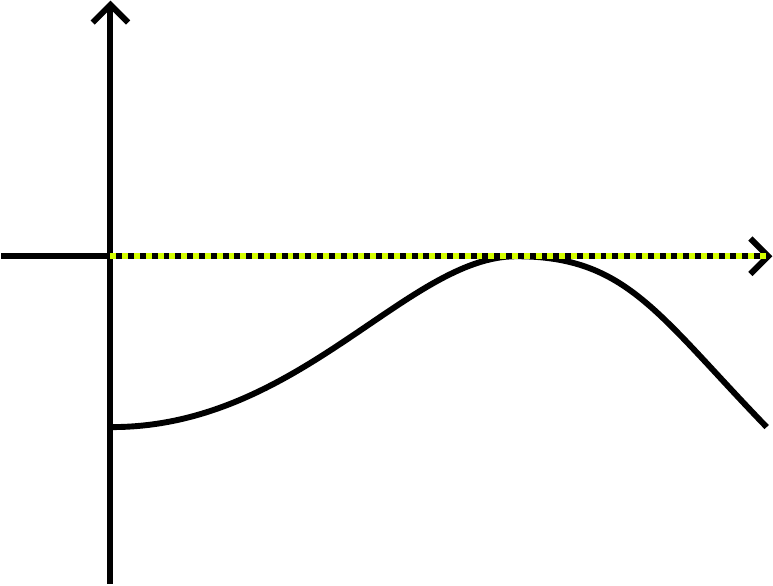}
                };
                \node (text) at (3.9, 0.3) {$k$};
                \node (text) at (-2.4, 2.4) {$\lambda$};
	        \end{tikzpicture}
            \caption{}
         \end{subfigure}
         \hfill
         \begin{subfigure}[b]{0.45\textwidth}
             \centering
             \begin{tikzpicture}
	            \node (image) at (0,0) {
                    \includegraphics[width=\textwidth]{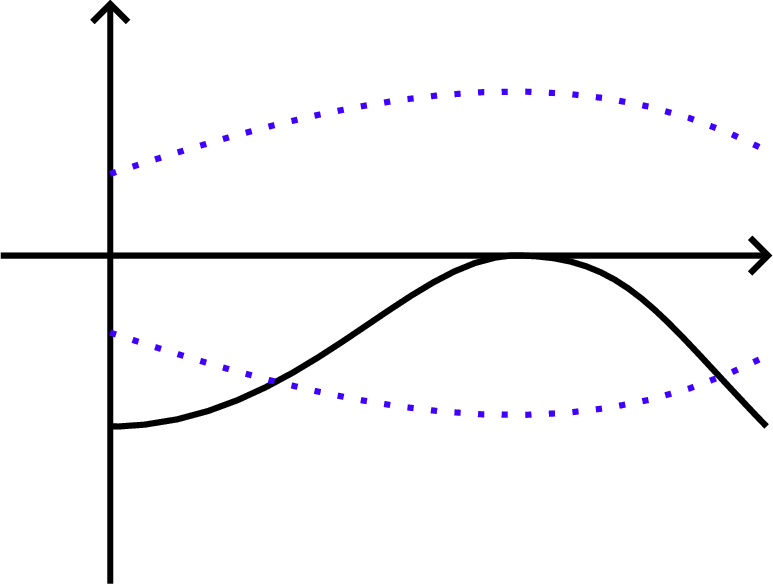}
                };
                \node (text) at (3.9, 0.3) {$k$};
                \node (text) at (-2.4, 2.4) {$\lambda$};
	        \end{tikzpicture}
             \caption{}
         \end{subfigure}
        \caption{One branch of dispersion relations corresponding to (a) Turing bifurcation and (b) wave bifurcation. In (a), the eigenvalue plotted is real whilst in (b), two complex-conjugate eigenvalues are plotted. In each plot, solid lines represent the real part of an eigenvalue, and dotted lines correspond to the imaginary part.}
        \label{fig:dispersion}
    \end{figure}

    We introduce a small parameter  $\eps = \mu - \mu^*$ so that the critical parameter value is $\eps = 0$, and despite the abuse of notation, we rewrite \eqref{geneq} as
    \begin{align}
		\partial_t \mbf u &= \mbf f(\mbf u, \eps) + \mathbb D(\eps) \, \partial_{xx} \mbf u. \label{eq:geneq}
	\end{align}
    Note there are two generic ways in which \eqref{eq:critical} can be satisfied, see Fig.~\ref{fig:dispersion}. If the critical $\lambda_j$ is real, then we have a pitchfork-like-instability which in this context is typically referred to as a Turing bifurcation; a complete derivation of the corresponding normal form in that case was given in our earlier work \cite{TOMS}. In this work, we are interested in the case where the critical $\lambda_j$ has a non-zero imaginary part. Owing to the fact that \eqref{eq:geneq} is real, then the complex conjugate $\ol \lambda_j$ must also be an eigenvalue. Hence, we have a form of spatially-extended Hopf bifurcation, which we have termed a \emph{wave bifurcation}.
        
    The normal form of the wave bifurcation has appeared in a number of publications, where it has also been termed a Hopf bifurcation with $O(2)$-symmetry \cite[Ch.XVII]{GoStSc:88}. One of the earliest treatments was in the pioneering work of Knobloch \cite{Knobloch86}, in the context of convection in a binary fluid mixture. See also the independent work by van Gils and Malet-Paret \cite{VanGils}.
    
    Under appropriate non-degeneracy conditions, we build the normal form via the ansatz 
	\begin{align}
		\mbf u = \left(A \, e^{ikx} \, \bs \phi_1^{[1]} + \bar A \, e^{-ikx} \, \overline{\bs \phi_1^{[1]}}\right) + \left(B \, e^{ikx} \, \ol{\bs \phi_1^{[1]}} + \bar B \, e^{-ikx} \, \bs \phi_1^{[1]}\right) + \mathcal O\left(A, B\right)^2, \label{ansatz}
	\end{align}
	where $\bs \phi_1^{[1]}\in \mathbb C^n$, $\norm{\bs \phi_1^{[1]}} = 1$, $(A, B) = (A(t), B(t))\in \mathbb C^2$ are two complex variables related to the amplitude of a spatiotemporal pattern that arises at the bifurcation point, and $\mathcal O(A, B)^q$ means terms of order $q$ in $A$ and $B$.

    Then, the normal form can be written in the form of equations for $A$ and $B$ as:
	\begin{align}
        \partial_t A &= \begin{multlined}[t][10cm]
            \left(C_1^{[1,0,1]} \, \eps + i\omega\right) A + C_1^{[3,0]} \, |A|^2 \, A + C_1^{[1,2]} \, A \, |B|^2
            \s 
            + C_1^{[5,0]} \, |A|^4 \, A + C_1^{[1,4]} \, |B|^4 \, A + C_1^{[3,2]} \, |A|^2 \, |B|^2 \, A + \mathcal{O}(A,B)^7 + \eps \, \mathcal O(A, B)^3, 
        \end{multlined} \label{eq:nf1}
        \s 
        \partial_t B &= \begin{multlined}[t][10cm]
            \left(\ol{C_1^{[1,0,1]}} \, \eps - i\omega\right) B + \ol{C_1^{[3,0]}} \, |B|^2 \, B + \ol{C_1^{[1,2]}} \, |A|^2 \, B
            \s 
            + \ol{C_1^{[5,0]}} \, |B|^4 \, B + \ol{C_1^{[1,4]}} \, |A|^4 \, B + \ol{C_1^{[3,2]}} \, |A|^2 \, |B|^2 \, B + \mathcal{O}(A,B)^7 + \eps \, \mathcal O(A, B)^3, 
        \end{multlined} \label{eq:nf2}
    \end{align}    
    where 
    $$
        C_1^{[1, 0, 1]} = \partial_\mu \, \lambda_j\left(k^*, \mu^*\right),
    $$
    with $\lambda_j$ being the eigenvalue going through the imaginary axis at the bifurcation point.

    \subsection{The dynamics of the normal form}
        The full dynamics of the normal form are not known, but there is evidence of complex spatio-temporal dynamics including mechanisms that generate chaotic dynamics. It can be viewed as an extension of the normal form for the Takens-Bogdanov bifurcation with $O(2)$-symmetry (which applies in the limit that $\omega \to 0$), for which many complex bifurcation scenarios are known, see \cite{Dangelmayr,Rucklidge}. Our interest here is to be able to find closed-form expressions for the normal-form coefficients to determine whether the simplest kinds of solution bifurcate either super- or subcritically, together with the associated stability of the solutions. Fig.~\ref{fig:knobloch2par} shows a summary of what is known if we impose periodic boundary conditions, so that both travelling and standing waves are allowed in the system. In this case, when both kinds of waves bifurcate supercritically, only one of the associated solutions is stable. On the other hand, with homogeneous Neumann boundary conditions, we can make the reduction $A = \bar B$, and traveling waves are not allowed. In that case, whenever the standing waves bifurcate supercritically (resp.~subcritically), then they are stable (resp.~unstable) right after the bifurcation.
        \begin{figure}
            \begin{center}
            \begin{tikzpicture}
                \node (image) at (0,0) {
                    \includegraphics[scale = 0.7]{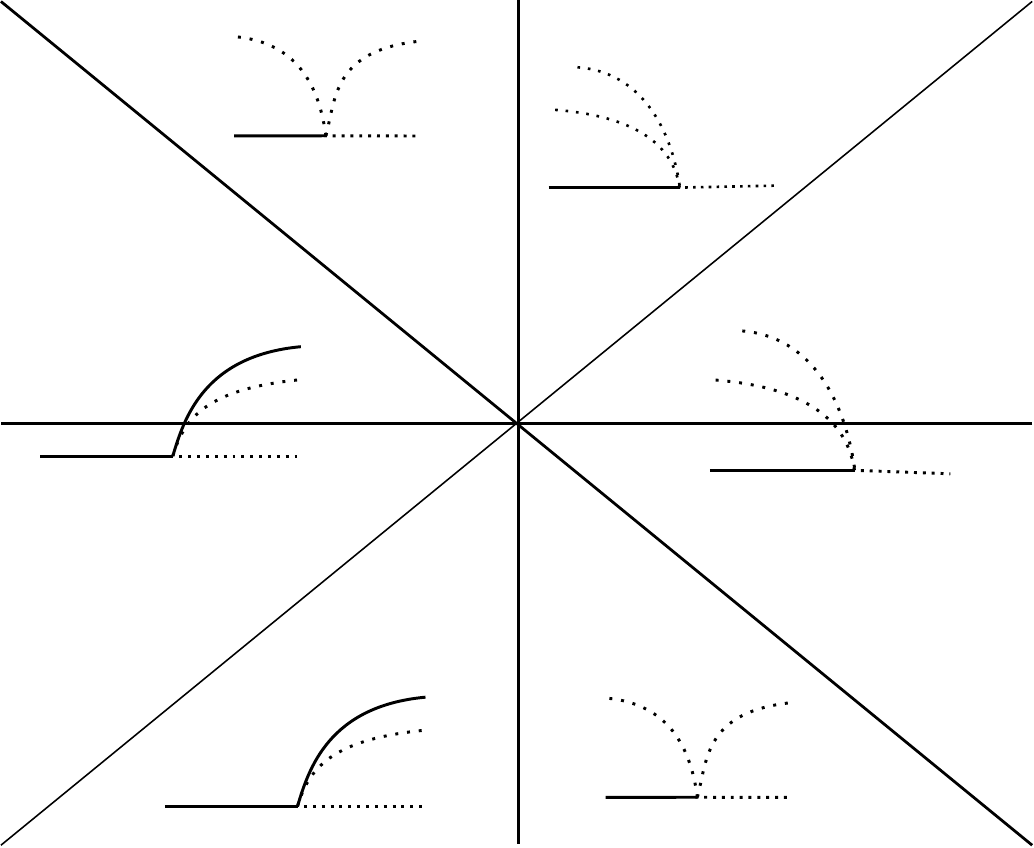}
                };
                \node (text) at (7.25, -0.1) {$\Re\left(C_1^{[3, 0]}\right)$};
                \node (text) at (0.5, 5.4) {$\Re\left(C_1^{[1, 2]}\right)$};
                
                \node (text) at (7.0, -3.8) {$\Re\left(C_1^{[3, 0]} + C_1^{[1, 2]}\right) = 0$};
                \node (text) at (7.2, 3.9) {$\Re\left(C_1^{[3, 0]} - C_1^{[1, 2]}\right) = 0$};
                \node (text) at (-1.0, 4.8) {\footnotesize{TW}};
                \node (text) at (-3.2, 4.8) {\footnotesize{SW}};
                
                \node (text) at (-2.1, 0.4) {\footnotesize{TW}};
                \node (text) at (-2.1, 1.0) {\footnotesize{SW}};
                
                \node (text) at (-1.4, -3.0) {\footnotesize{TW}};
                \node (text) at (-1.4, -3.95) {\footnotesize{SW}};
                
                \node (text) at (1.0, -3.6) {\footnotesize{TW}};
                \node (text) at (3.0, -3.7) {\footnotesize{SW}};
                
                \node (text) at (1.9, 0.5) {\footnotesize{TW}};
                \node (text) at (3.2, 1.3) {\footnotesize{SW}};
                
                \node (text) at (1.7, 4.2) {\footnotesize{TW}};
                \node (text) at (1.1, 3.3) {\footnotesize{SW}};
            \end{tikzpicture}
            \caption{The region of parameter space in which standing waves (SW) and travelling waves (TW) bifurcate super- or subcriticality as a function of the cubic coefficients of the normal form \eqref{eq:nf1} and \eqref{eq:nf1}. This figure is based on \cite[Fig.1]{Knobloch86}.}
            \label{fig:knobloch2par}
            \end{center}
        \end{figure}
        
        To understand more about the bifurcation of travelling and standing waves, observe that the change of variables $(A, B) \to \left(A \, e^{i \omega t}, B \, e^{- i \omega t}\right)$ transforms the system \eqref{eq:nf1} and \eqref{eq:nf2} into
        \begin{align*}
            \partial_t A &= \begin{multlined}[t][10cm]
                C_1^{[1,0,1]} \, \varepsilon A + C_1^{[3,0]} \, |A|^2 \, A + C_1^{[1,2]} \, A \, |B|^2
                \s 
                + C_1^{[5,0]} \, |A|^4 \, A + C_1^{[1,4]} \, |B|^4 \, A + C_1^{[3,2]} \, |A|^2 \, |B|^2 \, A + \mathcal O(A, B)^7 + \eps \, \mathcal O(A, B)^3, 
            \end{multlined}
            \s 
            \partial_t B &= \begin{multlined}[t][10cm]
                \ol{C_1^{[1,0,1]}} \, \varepsilon B + \ol{C_1^{[3,0]}} \, |B|^2 \, B + \ol{C_1^{[1,2]}} \, |A|^2 \, B
                \s 
                + \ol{C_1^{[5,0]}} \, |B|^4 \, B + \ol{C_1^{[1,4]}} \, |A|^4 \, B + \ol{C_1^{[3,2]}} \, |A|^2 \, |B|^2 \, B + \mathcal O(A, B)^7 + \eps \, \mathcal O(A, B)^3. 
            \end{multlined}
        \end{align*}
        Furthermore, the ansatz \eqref{ansatz} becomes
        \begin{align*}
            \mbf u = \left(A \, e^{i\left(kx + \omega t\right)} \, \bs \phi_1^{[1]} + \bar A \, e^{-i\left(kx + \omega t\right)} \, \overline{\bs \phi_1^{[1]}}\right) + \left(B \, e^{i\left(kx - \omega t\right)} \, \ol{\bs \phi_1^{[1]}} + \bar B \, e^{-i\left(kx - \omega t\right)} \, \bs \phi_1^{[1]}\right) + \mathcal O\left(A, B\right)^2.
        \end{align*}
        Note that the subspace $B = 0$ (equivalently $A = 0$) is invariant, and solutions in this subspace represent travelling waves $\sim A \, e^{i \left(k x + \omega t\right)} + c.c.$ (equivalently $\sim B \, e^{i \left(k x - \omega t\right)} + c.c.$), where $c.c.$ stands for \textit{complex-conjugate}. Within this subspace, the truncated form of \eqref{eq:nf1} and \eqref{eq:nf2} becomes
        \begin{align}
            \partial_t A &= \left(C_1^{[1,0,1]} \, \varepsilon + i\omega\right) A + C_1^{[3,0]} \, |A|^2 \, A, + C_1^{[5,0]} |A|^4 A \label{B=0}
        \end{align}
        (and the complex conjugate of this equation for $B$). From this, we see that the criticality of travelling waves is determined by the sign of the coefficient
        $$
            \ell_{1r} := \Re \left(C_1^{[3, 0]}\right),
        $$
        supercritical if negative and subcritical if positive. 
        
        On the other hand, $B = \pm \bar A$ also define invariant subspaces. These correspond to standing waves, of the form $u \sim A \, \cos(kx) \, e^{i \omega t} + c.c.$. Within such subspaces, we have the reduced equation
        \begin{align}
            \partial_t A &=\left(C_1^{[1, 0, 1]} \, \varepsilon + i\omega \right) A + \left(C_1^{[3, 0]} + C_1^{[1, 2]}\right) \, |A|^2 \, A + \left(C_1^{[5, 0]} + C_1^{[1, 4]} + C_1^{[3, 2]}\right) \, |A|^4 \, A. \label{B=Abar}
        \end{align}
        Therefore, the criticality of the bifurcation of standing waves is determined by the sign of
        $$
            \ell_{1s} := \Re\left(C_1^{[3, 0]} + C_1^{[1, 2]}\right),
        $$
        supercritical if negative and subcritical otherwise.
        
        Now, as with the normal form of a Hopf bifurcation in finite dimensions, if either of the third-order coefficients $\ell_{1r}$ or $\ell_{1s}$ is zero, then one has a codimension-two degenerate Hopf bifurcation, also known as a \emph{Bautin bifurcation} \cite{Kuznetsov}. Then, via restriction to the appropriate invariant subspaces, it is the sign of the appropriate fifth-order coefficients
        $$
            \ell_{2r}: = \Re \left(C_1^{[5, 0]}\right), \qquad \ell_{2s}:= \Re\left(C_1^{[5, 0]} + C_1^{[1, 4]} + C_1^{[3, 2]}\right)
        $$
        that determines the nature of the Bautin bifurcation for travelling and standing waves, respectively.
    
        It is useful to also explain a few details of what happens close to such Bautin points, in particular to find leading-order expressions for the fold curve of standing or travelling waves in a two-parameter plot.
        
        When neglecting higher-order terms, the amplitude equations look like
        \begin{align*}
            \partial_t A &= \begin{multlined}[t][10cm]
                \left(C_1^{[1,0,1]} \, \eps + i\omega\right) A + C_1^{[3,0]} \, |A|^2 \, A + C_1^{[1,2]} \, A \, |B|^2
                \s 
                + C_1^{[5,0]} \, |A|^4 \, A + C_1^{[1,4]} \, |B|^4 \, A + C_1^{[3,2]} \, |A|^2 \, |B|^2 \, A, 
            \end{multlined}
            \s 
            \partial_t B &= \begin{multlined}[t][10cm]
                \left(\ol{C_1^{[1,0,1]}} \, \eps - i\omega\right) B + \ol{C_1^{[3,0]}} \, |B|^2 \, B + \ol{C_1^{[1,2]}} \, |A|^2 \, B
                \s 
                + \ol{C_1^{[5,0]}} \, |B|^4 \, B + \ol{C_1^{[1,4]}} \, |A|^4 \, B + \ol{C_1^{[3,2]}} \, |A|^2 \, |B|^2 \, B. 
            \end{multlined}
        \end{align*}
        As usual, if we consider the following change of variables:
        \begin{align*}
            \left(A, \bar A, B, \bar B\right) = \left(R_1 \, e^{i \varphi_1}, R_1 \, e^{- i \varphi_1}, R_2 \, e^{i \varphi_2}, R_2 \, e^{-i \varphi_2}\right),
        \end{align*}
        we get
        \begin{align*}
            \begin{pmatrix}
                \partial_t A
                \\
                \partial_t \bar A
                \\
                \partial_t B
                \\
                \partial_t \bar B
            \end{pmatrix} = \begin{pmatrix}
                e^{i \varphi_1} & i R_1 \, e^{i \varphi_1} & 0 & 0
                \\
                e^{- i \varphi_1} & - i R_1 \, e^{- i \varphi_1} & 0 & 0
                \\
                0 & 0 & e^{i \varphi_2} & i R_2 \, e^{i \varphi_2}
                \\
                0 & 0 & e^{- i \varphi_2} & - i R_2 \, e^{- i \varphi_2}
            \end{pmatrix} \begin{pmatrix}
                \partial_t R_1
                \\
                \partial_t \varphi_1
                \\
                \partial_t R_2
                \\
                \partial_t \varphi_2
            \end{pmatrix},
        \end{align*}
        which implies
        \begin{align*}
            \begin{pmatrix}
                \partial_t R_1
                \\
                \partial_t \varphi_1
                \\
                \partial_t R_2
                \\
               \partial_t \varphi_2
            \end{pmatrix} &= \frac{1}{2 \, R_1 \, R_2} \begin{pmatrix}
                R_1 \, R_2 \, e^{- i \varphi_1} & R_1 \, R_2 \, e^{i \varphi_1} & 0 & 0
                \\
                - i R_2 \, e^{-i \varphi_1} & i R_2 \, e^{i \varphi_1} & 0 & 0
                \\
                0 & 0 & R_1 \, R_2 \, e^{- i \varphi_2} & R_1 \, R_2 \, e^{i \varphi_2}
                \\
                0 & 0 & - i R_1 \, e^{- i \varphi_2} & i R_1 \, e^{i \varphi_2}
            \end{pmatrix}\begin{pmatrix}
                \partial_t A
                \\
                \partial_t \bar A
                \\
                \partial_t B
                \\
                \partial_t \bar B
            \end{pmatrix}.
        \end{align*}
        Therefore,
        \begin{align*}
            \partial_t R_1 &= \Re\left(C_1^{[1,0,1]}\right) \, \eps \, R_1 + \Re\left(C_1^{[3,0]}\right) \, R_1^3 + \Re\left(C_1^{[1,2]}\right) \, R_1 \, R_2^2 + \Re\left(C_1^{[5,0]}\right) \, R_1^5
            \\
            & \quad + \Re\left(C_1^{[1,4]}\right) \, R_1 \, R_2^4 + \Re\left(C_1^{[3,2]}\right) \, R_1^3 \, R_2^2,
        \end{align*}
        and
        \begin{align*}
            \partial \varphi_1 &= \Im\left(C_1^{[1,0,1]}\right) \, \eps + \omega + \Im\left(C_1^{[3,0]}\right) \, R_1^2 + \Im\left(C_1^{[1,2]}\right) \, R_2^2 + \Im\left(C_1^{[5,0]}\right) \, R_1^4
            \\
            & \quad + \Im\left(C_1^{[1,4]}\right) \, R_2^4 + \Im\left(C_1^{[3,2]}\right) \, R_1^2 \, R_2^2,
        \end{align*}
        together with
        \begin{align*}
            \partial_t R_2 &= \Re\left(C_1^{[1,0,1]}\right) \, \eps \, R_2 + \Re\left(C_1^{[3,0]}\right) \, R_2^3 + \Re\left(C_1^{[1,2]}\right) \, R_1^2 \, R_2 + \Re\left(C_1^{[5,0]}\right) \, R_2^5
            \\
            & \quad + \Re\left(C_1^{[1,4]}\right) \, R_1^4 \, R_2 + \Re\left(C_1^{[3,2]}\right) \, R_1^2 \, R_2^3,
        \end{align*}
        and
        \begin{align*}
            \partial \varphi_2 &= - \Im\left(C_1^{[1,0,1]}\right) \, \eps - \omega - \Im\left(C_1^{[3,0]}\right) \, R_2^2 - \Im\left(C_1^{[1,2]}\right) \, R_1^2 - \Im\left(C_1^{[5,0]}\right) \, R_2^4
            \\
            & \quad - \Im\left(C_1^{[1,4]}\right) \, R_1^4 - \Im\left(C_1^{[3,2]}\right) \, R_1^2 \, R_2^2.
        \end{align*}
        Note that the equations for $R_1$ and $R_2$ decouple from the phases $\varphi_1$ and $\varphi_2$, something that also holds for the equations up to third order \cite{Knobloch86}. Therefore, we analyze steady states of the equations for $R_1$ and $R_2$, which can be written as
        \begin{align*}
            \partial_t R_1 &= R_1 \left(\Re\left(C_1^{[1,0,1]}\right) \, \eps + \Re\left(C_1^{[3,0]}\right) \, R_1^2 + \Re\left(C_1^{[1,2]}\right) \, R_2^2 + \Re\left(C_1^{[5,0]}\right) \, R_1^4\right.
            \\
            & \quad \left. + \Re\left(C_1^{[1,4]}\right) \, R_2^4 + \Re\left(C_1^{[3,2]}\right) \, R_1^2 \, R_2^2\right),
            \\
            \partial_t R_2 &= R_2 \left(\Re\left(C_1^{[1,0,1]}\right) \, \eps + \Re\left(C_1^{[3,0]}\right) \, R_2^2 + \Re\left(C_1^{[1,2]}\right) \, R_1^2 + \Re\left(C_1^{[5,0]}\right) \, R_2^4\right.
            \\
            & \quad \left. + \Re\left(C_1^{[1,4]}\right) \, R_1^4 + \Re\left(C_1^{[3,2]}\right) \, R_1^2 \, R_2^2\right).
        \end{align*}
        Here, there are four types of steady states we are interested in: $(0, 0)$, $\left(R_1, 0\right)$, $\left(0, R_2\right)$, $\left(R_1, R_1\right)$. The first one exists for every $\varepsilon \in \mathbb R$, while the second one is a root of the equation 
        \begin{align}
           \Re\left(C_1^{[1,0,1]}\right) \, \eps + \Re\left(C_1^{[3,0]}\right) \, R_1^2 + \Re\left(C_1^{[5,0]}\right) \, R_1^4,
           \label{eq:rootR1}
        \end{align}
        which is a quadratic in $R_1^2$ that has real nontrivial steady-states only if
        \begin{align*}
            \Re\left(C_1^{[3, 0]}\right)^2 - 4 \, \Re\left(C_1^{[1,0,1]}\right) \, \eps \, \Re\left(C_1^{[5,0]}\right) > 0.
        \end{align*}
        This condition defines the leading-order expression for a fold of travelling waves
        \begin{align}
            \varepsilon = \frac{\Re\left(C_1^{[3, 0]}\right)^2}{4 \, \Re\left(C_1^{[1,0,1]}\right) \, \Re\left(C_1^{[5,0]}\right)}.
            \label{eq:standingfold}
        \end{align}
        We require $R_1^2$ to be positive at the fold, which leads to the condition
        \begin{align*}
            - \frac{\Re\left(C_1^{[3, 0]}\right)}{2 \, \Re\left(C_1^{[5, 0]}\right)} > 0,
        \end{align*}
        which explains the side of the bifurcation point at which the fold appears. The analysis for the steady state $\left(0, R_2\right)$ is analogous, so the details are omitted. 
        
        Finally, for standing waves, $\left(R_1, R_1\right)$, we find steady states in that invariant manifold to be defined by
        \begin{align*}
            0 &= R_1 \left(\Re\left(C_1^{[1,0,1]}\right) \, \eps + \left(\Re\left(C_1^{[3,0]}\right) + \Re\left(C_1^{[1,2]}\right)\right) R_1^2\right.
            \\
            & \quad \left. + \left(\Re\left(C_1^{[5,0]}\right) + \Re\left(C_1^{[1,4]}\right) + \Re\left(C_1^{[3,2]}\right)\right) R_1^4\right).
        \end{align*}
        This equation a real nontrivial solution only if
        \begin{align*}
            \left(\Re\left(C_1^{[3,0]}\right) + \Re\left(C_1^{[1,2]}\right)\right)^2 - 4 \, \Re\left(C_1^{[1,0,1]}\right) \, \eps \left(\Re\left(C_1^{[5,0]}\right) + \Re\left(C_1^{[1,4]}\right) + \Re\left(C_1^{[3,2]}\right)\right) > 0.
        \end{align*}
        Looking for double roots, we find the leading-order expression for fold bifurcations of standing waves:
        \begin{align*}
            \eps = \frac{\left(\Re\left(C_1^{[3,0]}\right) + \Re\left(C_1^{[1,2]}\right)\right)^2}{4 \, \Re\left(C_1^{[1,0,1]}\right) \left(\Re\left(C_1^{[5,0]}\right) + \Re\left(C_1^{[1,4]}\right) + \Re\left(C_1^{[3,2]}\right)\right)}.
        \end{align*}

    \subsection{The main result}
        The main result in this paper is to derive closed-form expressions for each of the above coefficients $C_1^{[\cdot]}$ up to order five. To that end, let us define the following multilinear, symmetric, vector functions:
    	\begin{align*}
    		\mbf F_{1, 1} (\mbf a) &= \left(\sum_{1\leq p\leq n} a_p \, \left.\frac{\partial}{\partial \varepsilon}\left(\frac{\partial \, \mbf f}{\partial u_p}\left(\mbf P(\varepsilon), \varepsilon\right)\right)\right|_{\varepsilon = 0}\right),
    		\s
    		\mbf F_2(\mbf a, \mbf b) &= \frac{1}{2!}\left(\sum_{1\leq p, q\leq n} a_p \, b_q \, \frac{\partial^2 \, \mbf f}{\partial u_p \partial u_q}\left(\mbf P(0), 0\right)\right),
    		\s 
    		\mbf F_3(\mbf a,\mbf b, \mbf c) &= \frac{1}{3!}\left(\sum_{1\leq p, q, r\leq n} a_p \, b_q \, c_r \, \frac{\partial^3 \, \mbf f}{\partial u_p \partial u_q \partial u_r}\left(\mbf P(0), 0\right)\right),
    		\s 
    		\mbf F_4(\mbf a,\mbf b,\mbf c,\mbf d) &= \frac{1}{4!} \left(\sum_{1\leq p, q, r, s\leq n} a_p \, b_q \, c_r \, d_s \, \frac{\partial^4 \, \mbf f}{\partial u_p \partial u_q \partial u_r \partial u_s}\left(\mbf P(0), 0\right)\right),
    		\s 
    		\mbf F_5(\mbf a,\mbf b,\mbf c,\mbf d,\mbf e) &= \frac{1}{5!}\left(\sum_{1\leq p, q, r, s, t\leq n} a_p \, b_q \, c_r \, d_s \, e_t \, \frac{\partial^5 \, \mbf f}{\partial u_p \partial u_q \partial u_r \partial u_s \partial u_t}\left(\mbf P(0),0\right)\right),
    	\end{align*}
    	for $\mbf a = \left(a_1, \ldots, a_n\right)^\intercal, \mbf b = \left(b_1, \ldots, b_n\right)^\intercal, \mbf c = \left(c_1, \ldots, c_n\right)^\intercal, \mbf d = \left(d_1, \ldots, d_n\right)^\intercal, \mbf e = \left(e_1, \ldots, e_n\right)^\intercal\in \mathbb C^n$. Also, we define
        {\small
        \begin{multline}
            C_1^{[5,0]} = \frac{1}{\bs \phi_1^{[1]}\cdot \bs \psi_1^{[1]}} \, \bs \psi_1^{[1]}\cdot\left( - \left(2 \, C_1^{[3,0]} + \ol{C_1^{[3,0]}}\right) \mbf W_1^{[3]} + 2 \, \mbf F_2\left(\bs \phi_1^{[1]}, \mbf W_0^{[4]} + \ol{\mbf W_0^{[4]}}\right)\right.
            \s 
            + 2 \, \mbf F_2\left(\ol{\bs \phi_1^{[1]}}, \mbf W_2^{[4]}\right) + 4 \, \mbf F_2\left(\mbf W_0^{[2]}, \mbf W_1^{[3]}\right) + 2 \, \mbf F_2\left(\mbf W_2^{[2]}, \ol{\mbf W_1^{[3]}}\right) + 2 \, \mbf F_2\left(\ol{\mbf W_2^{[2]}}, \mbf W_3^{[3]}\right)
            \s 
            + 12 \, \mbf F_3\left(\bs \phi_1^{[1]}, \mbf W_0^{[2]}, \mbf W_0^{[2]}\right) + 6 \, \mbf F_3\left(\bs \phi_1^{[1]}, \mbf W_2^{[2]}, \ol{\mbf W_2^{[2]}}\right) + 12 \, \mbf F_3\left(\ol{\bs \phi_1^{[1]}}, \mbf W_0^{[2]}, \mbf W_2^{[2]}\right)
            \s 
            + 3 \, \mbf F_3\left(\bs \phi_1^{[1]}, \bs \phi_1^{[1]}, \ol{\mbf W_1^{[3]}}\right) + 6 \, \mbf F_3\left(\bs \phi_1^{[1]}, \ol{\bs \phi_1^{[1]}}, \mbf W_1^{[3]}\right) + 3 \, \mbf F_3\left(\ol{\bs \phi_1^{[1]}}, \ol{\bs \phi_1^{[1]}}, \mbf W_3^{[3]}\right)
            \s 
            + 4 \, \mbf F_4\left(\bs \phi_1^{[1]}, \bs \phi_1^{[1]}, \bs \phi_1^{[1]}, \ol{\mbf W_2^{[2]}}\right) + 24 \, \mbf F_4\left(\bs \phi_1^{[1]}, \bs \phi_1^{[1]}, \ol{\bs \phi_1^{[1]}}, \mbf W_0^{[2]}\right)
            \s 
            \left. + 12 \, \mbf F_4\left(\bs \phi_1^{[1]}, \ol{\bs \phi_1^{[1]}}, \ol{\bs \phi_1^{[1]}}, \mbf W_2^{[2]}\right) + 10 \, \mbf F_5\left(\bs \phi_1^{[1]}, \bs \phi_1^{[1]}, \bs \phi_1^{[1]}, \ol{\bs \phi_1^{[1]}}, \ol{\bs \phi_1^{[1]}}\right)\right), \label{C150}
        \end{multline}
        \begin{multline}
            C_1^{[1, 4]} = \frac{1}{\bs \phi_1^{[1]} \cdot \bs \psi_1^{[1]}} \, \bs \psi_1^{[1]} \cdot \left(- \left(C_1^{[1,2]} + C_1^{[3,0]} + \ol{C_1^{[3,0]}}\right) \mbf W_1^{[2,1]} \vphantom{\frac{1}{4}}\right.
            \s 
            2 \, \mbf F_2\left(\bs \phi_1^{[1]}, \mbf W_0^{[4]} + \ol{\mbf W_0^{[4]}} + \ol{\mbf W_2^{[3,1]}}\right) + 2 \, \mbf F_2\left(\ol{\bs \phi_1^{[1]}}, \mbf W_0^{[3,1]}\right) + 4 \, \mbf F_2\left(\mbf W_0^{[2]}, \mbf W_1^{[2,1]}\right)
            \s 
            + 2 \, \mbf F_2\left(\ol{\mbf W_2^{[2]}}, \mbf W_{1, 2}^{[2, 1]}\right) + 2 \, \mbf F_2\left(\mbf W_2^{[2]}, \ol{\mbf W_3^{[2,1]}}\right) + 2 \, \mbf F_2\left(\mbf W_1^{[3]}, \mbf W_2^{[1,1]}\right) + 2 \, \mbf F_2\left(\ol{\mbf W_1^{[3]}}, \mbf W_0^{[1,1]}\right)
            \s 
            + 12 \, \mbf F_3\left(\bs \phi_1^{[1]}, \mbf W_0^{[2]}, \mbf W_0^{[2]} + \mbf W_2^{[1,1]}\right) + 6 \, \mbf F_3\left(\bs \phi_1^{[1]}, \ol{\mbf W_2^{[2]}}, \mbf W_2^{[2]} + \mbf W_0^{[1,1]}\right)
            \s 
            + 12 \, \mbf F_3\left(\ol{\bs \phi_1^{[1]}}, \mbf W_0^{[2]}, \mbf W_0^{[1, 1]}\right) + 6 \, \mbf F_3\left(\ol{\bs \phi_1^{[1]}}, \mbf W_2^{[2]}, \mbf W_2^{[1, 1]}\right) + 6 \, \mbf F_3\left(\bs \phi_1^{[1]}, \ol{\bs \phi_1^{[1]}}, \mbf W_1^{[3]} + \mbf W_1^{[2, 1]}\right)
            \s 
            + 3 \, \mbf F_3\left(\ol{\bs \phi_1^{[1]}}, \ol{\bs \phi_1^{[1]}}, \mbf W_{1,2}^{[2,1]}\right) + 3 \, \mbf F_3\left(\bs \phi_1^{[1]}, \bs \phi_1^{[1]}, 2 \, \ol{\mbf W_1^{[3]}} + \ol{\mbf W_3^{[2, 1]}}\right)
            \s 
            + 12 \, \mbf F_4\left(\bs \phi_1^{[1]}, \bs \phi_1^{[1]}, \ol{\bs \phi_1^{[1]}}, 4 \, \mbf W_0^{[2]} + \mbf W_2^{[1,1]}\right) + 12 \, \mbf F_4\left(\bs \phi_1^{[1]}, \bs \phi_1^{[1]}, \bs \phi_1^{[1]}, \ol{\mbf W_2^{[2]}}\right)
            \s
            \left. + 12 \, \mbf F_4\left(\bs \phi_1^{[1]}, \ol{\bs \phi_1^{[1]}}, \ol{\bs \phi_1^{[1]}}, \mbf W_2^{[2]} + \mbf W_0^{[1,1]}\right) + 30 \, \mbf F_5\left(\bs \phi_1^{[1]}, \bs \phi_1^{[1]}, \bs \phi_1^{[1]}, \ol{\bs \phi_1^{[1]}}, \ol{\bs \phi_1^{[1]}}\right)\right), \label{C114}
        \end{multline}
        }
        and
        {\small
        \begin{multline}
            C_1^{[3, 2]} = \frac{1}{\bs \phi_1^{[1]}\cdot \bs \psi_1^{[1]}} \, \bs \psi_1^{[1]} \cdot \left(- \left(C_1^{[3,0]} + C_1^{[1,2]}+\ol{C_1^{[1,2]}}\right) \mbf W_1^{[2,1]} - \left(2 \, C_1^{[1,2]} + \ol{C_1^{[1,2]}}\right) \mbf W_1^{[3]} \right.
            \s
            + 2 \, \mbf F_2\left(\bs \phi_1^{[1]}, \mbf W_0^{[2,2]} + \ol{\mbf W_0^{[2,2]}} + \mbf W_2^{[3,1]}\right) + 2 \, \mbf F_2\left(\ol{\bs \phi_1^{[1]}}, \mbf W_2^{[2, 2]} + \mbf W_0^{[3,1]}\right) + 4 \, \mbf F_2\left(\mbf W_0^{[2]}, \mbf W_1^{[2,1]}\right)
            \s 
            + 2 \, \mbf F_2\left(\ol{\mbf W_1^{[2,1]}}, \mbf W_2^{[2]} + \mbf W_0^{[1, 1]}\right) + 2 \, \mbf F_2\left(\mbf W_2^{[1,1]}, \mbf W_1^{[2,1]} + \mbf W_3^{[2, 1]}\right) + 2 \, \mbf F_2\left(\ol{\mbf W_0^{[1, 1]}}, \mbf W_{1,2}^{[2,1]}\right)
            \s
            + 4 \, \mbf F_2\left(\mbf W_0^{[2]}, \mbf W_1^{[3]}\right) + 6 \, \mbf F_3\left(\bs \phi_1^{[1]}, \mbf W_2^{[1, 1]}, \mbf W_2^{[1, 1]}\right) + 6 \, \mbf F_3\left(\bs \phi_1^{[1]}, \ol{\mbf W_0^{[1, 1]}}, \mbf W_2^{[2]} + \mbf W_0^{[1,1]}\right)
            \s 
            + 6 \, \mbf F_3\left(\ol{\bs \phi_1^{[1]}}, 2 \, \mbf W_0^{[2]} + \mbf W_2^{[1,1]}, \mbf W_2^{[2]} + \mbf W_0^{[1,1]}\right) + 12 \, \mbf F_3\left(\bs \phi_1^{[1]}, \mbf W_0^{[2]}, 2 \, \mbf W_0^{[2]} + \mbf W_2^{[1,1]}\right)
            \s 
            + 9 \, \mbf F_3\left(\bs \phi_1^{[1]}, \bs \phi_1^{[1]}, \ol{\mbf W_1^{[2,1]}}\right) + 6 \, \mbf F_3\left(\bs \phi_1^{[1]}, \ol{\bs \phi_1^{[1]}}, \mbf W_1^{[3]} + 2 \, \mbf W_1^{[2,1]} + \mbf W_3^{[2,1]}\right)
            \s 
            + 6 \, \mbf F_3\left(\ol{\bs \phi_1^{[1]}}, \ol{\bs \phi_1^{[1]}}, \mbf W_{1, 2}^{[2,1]}\right) + 36 \, \mbf F_4\left(\bs \phi_1^{[1]}, \bs \phi_1^{[1]}, \ol{\bs \phi_1^{[1]}}, 2 \, \mbf W_0^{[2]} + \mbf W_2^{[1,1]}\right)
            \s 
            + 12 \, \mbf F_4\left(\bs \phi_1^{[1]}, \bs \phi_1^{[1]}, \bs \phi_1^{[1]}, \ol{\mbf W_0^{[1,1]}}\right) + 24 \, \mbf F_4\left(\bs \phi_1^{[1]}, \ol{\bs \phi_1^{[1]}}, \ol{\bs \phi_1^{[1]}}, \mbf W_2^{[2]} + \mbf W_0^{[1, 1]}\right)
            \s
            \left. + 60 \, \mbf F_5\left(\bs \phi_1^{[1]}, \bs \phi_1^{[1]}, \bs \phi_1^{[1]}, \ol{\bs \phi_1^{[1]}}, \ol{\bs \phi_1^{[1]}}\right)\right), \label{C132}
        \end{multline}
        }
        where
        \begin{align*}
            \left(\jac \mbf f\left(\mbf P,0\right) - \hat k^2 \, \mathbb D\left(0\right) - i \hat \omega \, I\right) \mbf W_1^{[3]} = C_1^{[3,0]} \, \bs \phi_1^{[1]} - 4 \, \mbf F_2\left(\bs \phi_1^{[1]}, \mbf W_0^{[2]}\right) - 2 \, \mbf F_2\left(\overline{\bs \phi_1^{[1]}}, \mbf W_2^{[2]}\right)
            \\
            - 3 \, \mbf F_3\left(\bs \phi_1^{[1]}, \bs \phi_1^{[1]}, \ol{\bs \phi_1^{[1]}}\right),
        \end{align*}
        \begin{align*}
            \left(\jac \mbf f\left(\mbf P, 0\right) - 9 \, \hat k^2 \, \mathds D\left(0\right) - 3 \, i \hat \omega \, I\right)\mbf W_3^{[3]} = - 2 \, \mbf F_2\left(\bs \phi_1^{[1]}, \mbf W_2^{[2]}\right) - \mbf F_3\left(\bs \phi_1^{[1]}, \bs \phi_1^{[1]}, \bs \phi_1^{[1]}\right),
        \end{align*}
        \begin{multline*}
            \left(\jac \mbf f\left(\mbf P, 0\right) - \hat k^2 \, \mathds D\left(0\right) - i \hat \omega \, I\right) \mbf W_1^{[2,1]} = C_1^{[1,2]} \, \bs \phi_1^{[1]} - 2 \, \mbf F_2\left(\ol{\bs \phi_1^{[1]}}, \mbf W_0^{[1,1]}\right)
            \s 
            - 2 \, \mbf F_2\left(\bs \phi_1^{[1]},2 \, \mbf W_0^{[2]} + \mbf W_2^{[1,1]}\right) - 6 \, \mbf F_3\left(\bs \phi_1^{[1]}, \bs \phi_1^{[1]}, \ol{\bs \phi_1^{[1]}}\right),
        \end{multline*}
        \begin{multline*}
            \left(\jac \mbf f\left(\mbf P, 0\right) - \hat k^2 \, \mathds D\left(0\right) - 3 \, i \hat \omega \, I\right)\mbf W_{1, 2}^{[2, 1]} = - 2 \, \mbf F_2\left(\bs \phi_1^{[1]}, \mbf W_2^{[2]} + \mbf W_0^{[1, 1]}\right)
            \s
            - 3 \, \mbf F_3\left(\bs \phi_1^{[1]}, \bs \phi_1^{[1]}, \bs \phi_1^{[1]}\right),
        \end{multline*}
        \begin{multline*}
            \left(\jac \mbf f\left(\mbf P, 0\right) - 9 \, \hat k^2 \, \mathds D\left(0\right) - i \hat \omega \, I\right)\mbf W_3^{[2,1]} = - 2 \, \mbf F_2\left(\bs \phi_1^{[1]}, \mbf W_2^{[1,1]}\right) - 2 \, \mbf F_2\left(\ol{\bs \phi_1^{[1]}}, \mbf W_2^{[2]}\right)
            \\
            - 3 \, \mbf F_3\left(\bs \phi_1^{[1]}, \bs \phi_1^{[1]}, \ol{\bs \phi_1^{[1]}}\right),
        \end{multline*}
        \begin{multline*}
            \jac \mbf f(\mbf 0) \, \mbf W_0^{[4]} = 2 \, C_1^{[3,0]} \, \mbf W_0^{[2]} - 2 \, \mbf F_2\left(\mbf W_0^{[2]}, \mbf W_0^{[2]}\right) - \mbf F_2\left(\mbf W_2^{[2]}, \ol{\mbf W_2^{[2]}}\right) - 2 \, \mbf F_2\left(\bs \phi_1^{[1]}, \ol{\mbf W_1^{[3]}}\right)
            \s 
            - 3 \, \mbf F_3\left(\bs \phi_1^{[1]}, \bs \phi_1^{[1]}, \ol{\mbf W_2^{[2]}}\right) - 6 \, \mbf F_3\left(\bs \phi_1^{[1]}, \ol{\bs \phi_1^{[1]}}, \mbf W_0^{[2]}\right) - 3 \, \mbf F_4\left(\bs \phi_1^{[1]}, \bs \phi_1^{[1]}, \ol{\bs \phi_1^{[1]}}, \ol{\bs \phi_1^{[1]}}\right)
        \end{multline*}
        \begin{multline*}
            \left(\jac \mbf f\left(\mbf P, 0\right) - 4 \, \hat k^2 \, \mathds D\left(0\right) - 2 \, i \hat \omega I\right) \mbf W_2^{[4]} = 2 \, C_1^{[3,0]} \, \mbf W_2^{[2]} - 4 \, \mbf F_2\left(\mbf W_0^{[2]}, \mbf W_2^{[2]}\right)
            \s 
            - 2 \, \mbf F_2\left(\bs \phi_1^{[1]}, \mbf W_1^{[3]}\right) - 2 \, \mbf F_2\left(\ol{\bs \phi_1^{[1]}}, \mbf W_3^{[3]}\right) - 6 \, \mbf F_3\left(\bs \phi_1^{[1]}, \bs \phi_1^{[1]}, \mbf W_0^{[2]}\right)
            \s
            - 6 \, \mbf F_3\left(\bs \phi_1^{[1]}, \ol{\bs \phi_1^{[1]}}, \mbf W_2^{[2]}\right) - 4 \, \mbf F_4\left(\bs \phi_1^{[1]}, \bs \phi_1^{[1]}, \bs \phi_1^{[1]}, \ol{\bs \phi_1^{[1]}}\right),
        \end{multline*}
        \begin{multline*}
            \left(\jac \mbf f\left(\mbf P, 0\right) - 2 \, i \hat \omega I\right) \mbf W_0^{[3, 1]} = \left(C_1^{[3,0]} + C_1^{[1,2]}\right) \mbf W_0^{[1,1]} - 4 \, \mbf F_2\left(\mbf W_0^{[2]}, \mbf W_0^{[1,1]}\right)
            \s 
            - 2 \, \mbf F_2\left(\mbf W_2^{[2]}, \mbf W_2^{[1,1]}\right) - 2 \, \mbf F_2\left(\bs \phi_1^{[1]}, \mbf W_1^{[3]} + \mbf W_1^{[2, 1]}\right) - 2 \, \mbf F_2\left(\ol{\bs \phi_1^{[1]}}, \mbf W_{1,2}^{[2,1]}\right)
            \s 
            - 3 \, \mbf F_3\left(\bs \phi_1^{[1]}, \bs \phi_1^{[1]}, 4 \, \mbf W_0^{[2]} + \mbf W_2^{[1,1]}\right) - 6 \, \mbf F_3\left(\bs \phi_1^{[1]}, \ol{\bs \phi_1^{[1]}}, \mbf W_2^{[2]} + \mbf W_0^{[1,1]}\right)
            \s
            - 12 \, \mbf F_4\left(\bs \phi_1^{[1]}, \bs \phi_1^{[1]}, \bs \phi_1^{[1]}, \ol{\bs \phi_1^{[1]}}\right),
        \end{multline*}
        \begin{multline*}
            \left(\jac \mbf f\left(\mbf P, 0\right) - 4 \, \hat k^2 \, \mathds D\left(0\right)\right) \mbf W_2^{[3,1]} = \left(C_1^{[3,0]} + \ol{C_1^{[1,2]}}\right) \mbf W_2^{[1,1]} - 4 \, \mbf F_2\left(\mbf W_0^{[2]}, \mbf W_2^{[1,1]}\right)
            \s 
            - 2 \, \mbf F_2\left(\mbf W_2^{[2]}, \ol{\mbf W_0^{[1,1]}}\right) - 2 \, \mbf F_2\left(\bs \phi_1^{[1]}, \ol{\mbf W_1^{[2,1]}}\right) - 2 \, \mbf F_2\left(\ol{\bs \phi_1^{[1]}}, \mbf W_1^{[3]} + \mbf W_3^{[2,1]}\right)
            \s 
            - 3 \, \mbf F_3\left(\bs \phi_1^{[1]}, \bs \phi_1^{[1]}, \ol{\mbf W_0^{[1,1]}}\right) - 6 \, \mbf F_3\left(\bs \phi_1^{[1]}, \ol{\bs \phi_1^{[1]}}, 2 \, \mbf W_0^{[2]} + \mbf W_2^{[1,1]}\right) - 6 \, \mbf F_3\left(\ol{\bs \phi_1^{[1]}}, \ol{\bs \phi_1^{[1]}}, \mbf W_2^{[2]}\right)
            \s
            - 12 \, \mbf F_4\left(\bs \phi_1^{[1]}, \bs \phi_1^{[1]}, \ol{\bs \phi_1^{[1]}}, \ol{\bs \phi_1^{[1]}}\right).
        \end{multline*}
        \begin{multline*}
            \jac \mbf f(\mbf 0) \, \mbf W_0^{[2,2]} = 4 \, C_1^{[1,2]} \, \mbf W_0^{[2]} - 4 \, \mbf F_2\left(\mbf W_0^{[2]}, \mbf W_0^{[2]}\right) - \mbf F_2\left(\mbf W_2^{[1,1]}, \mbf W_2^{[1,1]}\right)
            \\
            - \mbf F_2\left(\mbf W_0^{[1,1]}, \ol{\mbf W_0^{[1,1]}}\right) - 4 \, \mbf F_2\left(\bs \phi_1^{[1]}, \ol{\mbf W_1^{[2,1]}}\right) - 6 \, \mbf F_3\left(\bs \phi_1^{[1]}, \ol{\bs \phi_1^{[1]}}, 2 \, \mbf W_0^{[2]} + \mbf W_2^{[1,1]}\right)
            \\
            - 6 \, \mbf F_3\left(\bs \phi_1^{[1]}, \bs \phi_1^{[1]}, \ol{\mbf W_0^{[1,1]}}\right) - 12 \, \mbf F_4\left(\bs \phi_1^{[1]}, \bs \phi_1^{[1]}, \ol{\bs \phi_1^{[1]}}, \ol{\bs \phi_1^{[1]}}\right),
        \end{multline*}
        and
        \begin{multline*}
            \left(\jac \mbf f\left(\mbf P, 0\right) - 4 \, \hat k^2 \, \mathds D\left(0\right) - 2 \, i \hat \omega I\right) \mbf W_2^{[2,2]} = 2 \, C_1^{[1,2]} \, \mbf W_2^{[2]} - 2 \,  \mbf F_2\left(\mbf W_0^{[1,1]}, \mbf W_2^{[1,1]}\right)
            \s 
            - 4 \, \mbf F_2\left(\mbf W_0^{[2]}, \mbf W_2^{[2]}\right) - 2 \, \mbf F_2\left(\bs \phi_1^{[1]}, \mbf W_1^{[2,1]} + \mbf W_3^{[2,1]}\right) - 2 \, \mbf F_2\left(\ol{\bs \phi_1^{[1]}}, \mbf W_{1,2}^{[2,1]}\right)
            \\
            - 6 \, \mbf F_3\left(\bs \phi_1^{[1]}, \bs \phi_1^{[1]}, \mbf W_0^{[2]} + \mbf W_2^{[1,1]}\right) - 6 \, \mbf F_3\left(\bs \phi_1^{[1]}, \ol{\bs \phi_1^{[1]}}, \mbf W_2^{[2]} + \mbf W_0^{[1,1]}\right)
            \\
            - 12 \, \mbf F_4\left(\bs \phi_1^{[1]}, \bs \phi_1^{[1]}, \bs \phi_1^{[1]}, \ol{\bs \phi_1^{[1]}}\right).
        \end{multline*}
    	
        The main purpose of this article is to prove the following result for the class of systems \eqref{eq:geneq}.

    	\newtheorem{theorem}{Theorem}
    	\begin{theorem} \label{th:main} 
            Let $\mbf P = \mbf P(\varepsilon)$ be a stable homogeneous steady state of \eqref{eq:geneq}. Suppose that at the parameter value $\eps = 0$  there exists a \textit{critical wavenumber} $\hat k>0$, a \textit{critical frequency}, $\hat \omega > 0$, and a unique eigenvalue $\lambda(k; \eps)$ of $\jac \mbf f(\mbf P(\eps), \eps) - k^2 \, \mathbb D(\eps)$ such that $\lambda\left(k; 0\right) = i\hat \omega$, $C_1^{[1, 0, 1]} \neq 0$,
    		\begin{align*}
                \dim \left(\ker \left(\jac \mbf f \left(\mbf P,0\right) - \hat k^2 \, \mathbb D\left(0\right) \pm i \hat \omega \, I\right)\right) &= 1,
                \\
                \dim \left(\ker \left(\jac \mbf f \left(\mbf P,0 \right)- k^2 \, \mathbb D\left(0\right) \pm i \omega \, I\right)\right) = 0 \text{ for all } &\left(k, \omega\right)\neq \left(\hat k,\hat \omega\right),
    			\s 
    			\ker \left(\jac \mbf f \left(\mbf P, 0\right) - \hat k^2 \, \mathbb D\left(0\right) \pm i \hat \omega \, I\right) \, \cap \, \Im \left(\jac \mbf f \left(\mbf P,0\right)- \hat k^2 \, \mathbb D\left(0\right)\pm i \hat \omega \, I\right)&=\{\mbf 0\},
    		\end{align*}
            where
            \begin{align*}
                C_1^{[1,0,1]} &= \frac{1}{\bs \psi_1^{[1]} \cdot \bs \phi_1^{[1]}} \, \bs \psi_1^{[1]} \cdot \left(\mbf F_{1,1}\left(\bs \phi_1^{[1]}\right) - k^2 \, \frac{\dd \mathbb D}{\dd \varepsilon}(0) \, \bs \phi_1^{[1]}\right)
                \\
                \bs \phi_1^{[1]} &\in \ker(\jac \mbf f \left(\mbf P,0\right)-\hat k^2 \, \mathbb D\left(0\right)-i \hat \omega  \, I)\setminus \{\mbf 0\},
    			\s 
    			\bs \psi_1^{[1]}&\in \ker(\jac \mbf f \left(\mbf P,0\right)^\intercal-\hat k^2 \, \mathbb D\left(0\right)^\intercal-i \hat \omega \, I)\setminus \{\mbf 0\}.
            \end{align*}
    		Then the critical cubic coefficients of the normal form \eqref{eq:nf1} and \eqref{eq:nf2} are given by
            {\small
    		\begin{align*}
    			C_1^{[3,0]}=\frac{1}{\bs \psi_1^{[1]} \cdot \bs \phi_1^{[1]}} \, \bs \psi_1^{[1]} \cdot \left(4 \, \mbf F_2\left(\bs \phi_1^{[1]}, \mbf W_0^{[2]}\right) + 2 \, \mbf F_2 \left(\ol{\bs \phi_1^{[1]}},\mbf W_2^{[2]}\right) + 3 \, \mbf F_3\left(\bs \phi_1^{[1]}, \bs \phi_1^{[1]}, \ol{\bs \phi_1^{[1]}}\right)\right),
    		\end{align*}
    		\begin{align*}
    			C_1^{[1,2]} = \frac{1}{\bs \psi_1^{[1]} \cdot \bs \phi_1^{[1]}} \, \bs \psi_1^{[1]} \cdot \left(2 \, \mbf F_2\left(\ol{\bs \phi_1^{[1]}}, \mbf W_0^{[1,1]}\right) + 2 \, \mbf F_2\left(\bs \phi_1^{[1]}, 2 \, \mbf W_0^{[2]} + \mbf W_2^{[1,1]}\right) + 6 \, \mbf F_3\left(\bs \phi_1^{[1]}, \bs \phi_1^{[1]}, \ol{\bs \phi_1^{[1]}}\right)\right),
    		\end{align*}
            }
    	  where
    		\begin{align*}
    			\jac \mbf f\left(\mbf P, 0\right) \, \mbf W_0^{[2]} &= - \mbf F_2\left(\bs \phi_1^{[1]},\overline{\bs \phi_1^{[1]}}\right),
    			\s 
    			\left(\jac \mbf f\left(\mbf P, 0\right) - 4 \, \hat k^2 \, \mathbb D\left(0\right) - 2 \, i \hat \omega \, I\right) \mbf W_2^{[2]} &= - \mbf F_2\left(\bs \phi_1^{[1]},\bs \phi_1^{[1]}\right),
    			\s
    			\left(\jac \mbf f\left(\mbf P, 0\right)-2 \, i \hat \omega \, I\right) \mbf W_0^{[1, 1]} &= - 2 \, \mbf F_2\left(\bs \phi_1^{[1]},\bs \phi_1^{[1]}\right),
    			\s 
    			\left(\jac \mbf f\left(\mbf P, 0\right) - 4 \, \hat k^2 \, \mathbb D\left(0\right)\right) \mbf W_2^{[1, 1]} &= - 2 \, \mbf F_2\left(\bs \phi_1^{[1]},\ol{\bs \phi_1^{[1]}}\right), 
    		\end{align*}
            Moreover, depending on the values of the coefficients, the bifurcation of small amplitude standing or travelling waves around $\mbf P$ are then as given in Fig.~\ref{fig:knobloch2par}.  
    		
            Furthermore, if $\Re\left(C_1^{[3, 0]}\right) = 0$ $\left(\text{resp. } \Re\left(C_1^{[3,0]} + C_1^{[1,2]}\right) = 0\right)$, then $\mbf P$ undergoes a  codimension-two wave-Bautin bifurcation, which is generic if and only if $\Re\left(C_1^{[5, 0]}\right) \neq 0$ $\left(\text{resp. } \Re\left(C_1^{[5, 0]} + C_1^{[1, 4]} + C_1^{[3, 2]}\right)\neq 0\right)$, where $C_1^{[5, 0]}, C_1^{[1, 4]}$ and $C_1^{[3, 2]}$ are given by \eqref{C150}--\eqref{C132}.
    	\end{theorem}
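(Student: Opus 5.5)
We will derive the coefficients by a weakly nonlinear expansion in the style of Elphick \emph{et al.}, in the same way as for the Turing case in our earlier work, adapted to the Hopf setting of Kuznetsov's reaction-diffusion treatment. We take $\mbf u$ to be a perturbation of $\mbf P(\eps)$ and Taylor expand $\mbf f$ about $(\mbf P(0),0)$ using the multilinear maps $\mbf F_{1,1},\mbf F_2,\ldots,\mbf F_5$. The solution is sought as a near-identity expansion
\[
\mbf u=\sum_{p,q,r,s}A^p\bar A^q B^r\bar B^s\,\mbf w_{pqrs}(x),
\]
where each $\mbf w_{pqrs}$ is proportional to $e^{imx}$ and $m=(p-q-r+s)\hat k$ is fixed by translation equivariance. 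The time derivative is $\partial_t\mbf u=\sum(\partial_A\mbf u\,\partial_tA+\cdots)$, with $\partial_tA,\partial_tB$ given by the ansatz \eqref{eq:nf1}--\eqref{eq:nf2}. The $O(2)$ symmetry ($x\mapsto x+c$ acts as $(A,B)\mapsto(e^{i\hat kc}A,e^{i\hat kc}B)$, and reflection exchanges $A$ with $\bar B$), combined with the temporal $S^1$ symmetry that normal form theory supplies via the nonresonance hypotheses, restricts the normal form to the monomials written down. It also forces the coefficients of the $B$ equation to be conjugates of those of the $A$ equation. This is why the linear part uses $\ol{\bs\phi_1^{[1]}}$ in the $B$ mode.

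We next collect terms order by order. At linear order we recover $\bs\phi_1^{[1]}$ and the frequency $\hat\omega$. At order $\eps A$, solvability against $\bs\psi_1^{[1]}$ gives $C_1^{[1,0,1]}$. The hypothesis that the kernel and range intersect trivially guarantees $\bs\psi_1^{[1]}\cdot\bs\phi_1^{[1]}\neq0$, so the division is legitimate.

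At quadratic order, each monomial gives a linear system of the form $(\jac\mbf f-m^2\hat k^2\mathbb D-i\nu\hat\omega I)\mbf W=\text{rhs}$, with $(m,\nu)\in\{(0,0),(2,2),(0,2),(2,0)\}$:
\begin{itemize}
\item $|A|^2$ gives $\mbf W_0^{[2]}$;
\item $A^2$ gives $\mbf W_2^{[2]}$;
\item $A\bar B$ gives $\mbf W_0^{[1,1]}$;
\item $AB$ gives $\mbf W_2^{[1,1]}$.
\end{itemize}
By the hypothesis that no other $(k,\omega)$ is critical, all these operators are invertible. At cubic order we take the coefficients of $|A|^2A$ and $A|B|^2$ in the $e^{i\hat kx}$ harmonic. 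These are resonant, so we impose the Fredholm condition: the right-hand side minus $C\,\bs\phi_1^{[1]}$ must be orthogonal to $\bs\psi_1^{[1]}$. This yields $C_1^{[3,0]}$ and $C_1^{[1,2]}$. The combinatorial prefactors (4, 2, 3, 6) come from counting the ordered placements of the factors in the symmetric forms $\mbf F_j$.

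The cubic remainders define $\mbf W_1^{[3]},\mbf W_3^{[3]},\mbf W_1^{[2,1]},\mbf W_{1,2}^{[2,1]},\mbf W_3^{[2,1]}$. Each is fixed uniquely by a normalisation $\bs\psi\cdot\mbf W=0$ in the resonant cases. We then repeat the procedure at quartic order to obtain $\mbf W_0^{[4]},\mbf W_2^{[4]},\mbf W_0^{[3,1]},\mbf W_2^{[3,1]},\mbf W_0^{[2,2]},\mbf W_2^{[2,2]}$. In these, the terms proportional to $C_1^{[3,0]},C_1^{[1,2]}$ arise from $\partial_t$ acting on the quadratic coefficients. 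Finally, solvability at fifth order for $|A|^4A$, $|B|^4A$ and $|A|^2|B|^2A$ gives \eqref{C150}--\eqref{C132}. The terms $-(2C_1^{[3,0]}+\ol{C_1^{[3,0]}})\mbf W_1^{[3]}$ and similar arise from the time derivative of the cubic corrections.

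The dynamical conclusions then follow from the computation already carried out in the amplitude equations, where the radial equations decouple from the phases:
\begin{itemize}
\item the invariant subspaces $B=0$ and $B=\pm\bar A$ give the criticality criteria $\ell_{1r},\ell_{1s}$;
\item the linearisation of the $(R_1,R_2)$ system at the travelling-wave and standing-wave equilibria, following Knobloch's work, reproduces Fig.~\ref{fig:knobloch2par};
\item the Bautin statements follow from the standard Hopf--Bautin theory applied to the one-complex-dimensional reductions \eqref{B=0} and \eqref{B=Abar}, with genericity equivalent to a nonzero fifth-order real part.
\end{itemize}

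We expect the fifth-order bookkeeping to be the main obstacle. It requires enumerating every way of distributing five factors among the $\mbf F_j$ with the correct multiplicities, and tracking the contributions of $\partial_t$ acting on the lower-order coefficients, which feed the $C$-dependent terms. We would mechanise this step, for example by organising the computation through generating-function coefficients in symbolic software, and cross-check the result against the reduction $A=\bar B$ and against the special case in which $\mbf f$ is purely cubic.
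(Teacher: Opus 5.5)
Your proposal is correct and follows the paper's route: an Elphick-style order-by-order expansion in $A,\bar A,B,\bar B$, invertibility of the non-resonant operators from the nondegeneracy hypotheses, Fredholm solvability against $\bs\psi_1^{[1]}$ at orders three and five with the non-resonant normal-form terms set to zero, the $\eps A$ computation for $C_1^{[1,0,1]}$, and Knobloch's analysis on the invariant subspaces $B=0$ and $B=\pm\bar A$ for the dynamical claims. Two minor corrections: your harmonic index should read $m=(p-q+r-s)\hat k$, which is what your own translation action and your list of quadratic terms require; and your normalisation $\bs\psi_1^{[1]}\cdot\mbf W=0$ differs from the paper's (orthogonality of $\mbf W_1^{[3]},\mbf W_1^{[2,1]}$ to $\bs\phi_1^{[1]}$), which amounts to a near-identity change of amplitude that alters the individual quintic coefficients but leaves $\Re\,C_1^{[5,0]}$ and $\Re(C_1^{[5,0]}+C_1^{[1,4]}+C_1^{[3,2]})$ unchanged, and those are the only quantities the Bautin statement uses.
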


    \newtheorem{remark}{Remark}
    \begin{remark}
        Although the theory developed here is general, it has the limitation that we only keep track of bifurcating waves at the bifurcation point whilst ignoring that when one has a positive dispersion relation, one does indeed have interaction between different wave modes. The full analysis of the partial differential equation can be done by defining a small variable $\delta > 0$ and defining a long spatial variable $X = \delta \, x$, a slow time scale $T = \delta^2 \, t$, and expanding the wave frequency $\omega = \omega^* + \delta \, \omega_1 + \delta^2 \, \omega_2 + \ldots$ \cite{Knobloch86,TOMS}. Furthermore, assuming that $A = A(X, T)$ and $B = B(X, T)$, a similar asymptotic analysis can be carried out. This will add a few more terms to the amplitude equations (such as $\partial_{XX} A$ and $\partial_{XX} B$) but it will not change the values of $C_1^{[3, 0]}$ and $C_1^{[1, 2]}$ given above. Calculation of the corresponding coefficients is beyond the scope of this paper, as their relative order depends on the scale (in terms of $\delta$) at which parameters affect the linear coefficients of the system. Depending on this scaling we will find different unfoldings of codimension-two points, affecting the temporal stability of waves of different wavenumbers. A full analysis is left for future work.
    \end{remark}
    
\section{Computation of normal form coefficients} \label{sec:coefs}
    The proof of Theorem \ref{th:main} will be based on the well-known amplitude equations that let us study the system in a sufficiently small neighbourhood of the bifurcation curve \cite{tirapegui}.

    Without loss of generality, we suppose that $\mbf P = \mbf 0$ for all $\varepsilon \geq 0$. If this was not the case, we can always perform the following change of variables:
	\begin{align*}
		\left(u_1, \ldots, u_n\right)^\intercal\mapsto \left(u_1, \ldots, u_n\right)^\intercal - \mbf P(\varepsilon),
	\end{align*}
    which translates $\mbf P(\varepsilon)$ to the origin. Also, in order to simplify notation, we drop the $\varepsilon$ dependence, assuming that everything is evaluated at $\varepsilon = 0$ unless otherwise stated. We also write $k$ and $\omega$ instead of $\hat k$ and $\hat \omega$, provided the meaning is clear.
	
    Next, with our hypotheses, we can expand each function $f_1, \ldots, f_n$ using a Taylor series with respect to $\left(u_1, \ldots, u_n\right)^\intercal$ as follows:
	\begin{align*}
		f_j(\mbf u) = \sum_{\ell = 1}^r \frac{1}{\ell!} \left(\sum_{m = 1}^n u_m \frac{\partial}{\partial u_m}\right)^\ell f_j(\mbf 0) + \mathcal O\left(\left \lVert \mbf u\right \rVert^{r + 1}\right),
	\end{align*}
	where $\mbf u = \left(u_1, \ldots, u_n\right)^\intercal$, and $\mbf 0$ is the zero vector with $n$ components.

    At the bifurcation point where $\varepsilon = 0$, we expand
    \begin{align}
		\mbf u = \sum_{\substack{j, \ell = 0 \\ j + \ell > 0}}^5 \mbf u^{[j, \ell]}(A, B) \label{u-expansion}
	\end{align}
    and seek the equations for $A$ and $B$ to make the residual small. We consider a Taylor expansion for them in the following form:
    \begin{align}
		\partial_t A = \sum_{\substack{j, \ell = 0 \\ j + \ell > 0}}^5 h_1^{[j, \ell]}(A, B), & \qquad \partial_t B = \sum_{\substack{j, \ell = 0 \\ j + \ell > 0}}^5 h_2^{[j, \ell]}(A, B), \label{geneqAB}
	\end{align}
    where $h_1^{[j, \ell]}(A, B)$ and $h_2^{[j, \ell]}(A, B)$ are homogeneous polynomials of degree $j$ in $A$ and $\ell$ in $B$.
    \\
    The unfolding of the bifurcation when $\varepsilon \neq 0$ will be carried out in Sub-section \ref{sub:unfolding}.
    
    Now, using expansion \eqref{u-expansion}, we start our analysis going order by order. We highlight that we are often going to omit the dependence of each function in order to avoid writing cumbersome expressions.
	
    \subsection{First-order expansion}
        At first order, \eqref{eq:geneq} becomes
		\begin{multline}
			\partial_A \mbf u^{[1, 0]} \left(h_1^{[1,0]} + h_1^{[0,1]}\right) + \partial_B \mbf u^{[0, 1]} \left(h_2^{[1,0]} + h_2^{[0,1]}\right) + \mbox{conj.} =
            \\
            \left(\jac \mbf f(\mbf 0) +\mathds D \, \partial_{xx}\right)\left(\mbf u^{[1, 0]} + \mbf u^{[0, 1]}\right), \label{firstorder}
		\end{multline}
		where $\mbox{conj.}$ stands for the conjugate version of the term to the left, which is not necessarily its complex conjugate. For example, $\partial_A \mbf u^{[1, 0]} \, h_1^{[1, 0]} + \mbox{conj.} = \partial_A \mbf u^{[1, 0]} \, h_1^{[1, 0]} + \partial_{\bar A} \mbf u^{[1, 0]} \, \overline{h_1^{[1, 0]}}$.
		
		Next, if we set $h_1^{[1, 0]} = i\omega A$, $h_2^{[0, 1]} = -i\omega B$, and $h_1^{[0, 1]} = h_2^{[1, 0]} = 0$, then \eqref{firstorder} becomes
		\begin{align*}
			i\omega A \, \partial_A \mbf u^{[1,0]} - i\omega B \, \partial_B \mbf u^{[0,1]} + \mbox{conj.} = \left(\jac \mbf f(\mbf 0) + \mathds D \, \partial_{xx}\right)\left(\mbf u^{[1,0]} + \mbf u^{[0,1]}\right).
		\end{align*}
		Therefore, if we consider our ansatz and set
		\begin{align*}
			\mbf u^{[1, 0]} = A \, e^{ikx} \, \bs \phi_1^{[1]} + c.c,  \qquad 
			\mbf u^{[0, 1]} = B \, e^{ikx} \, \ol{\bs \phi_1^{[1]}} + c.c.,
		\end{align*}
		then the equation for $A$ is simplified to
		\begin{align*}
			i\omega A \, e^{ikx} \, \bs \phi_1^{[1]} + c.c. = \left(\jac \mbf f(\mbf 0) -k^2 \, \mathds D\right)\left(A \, e^{ikx} \, \bs \phi_1^{[1]}\right) + c.c.,
		\end{align*}
		which implies
		\begin{align*}
			\left(\jac \mbf f(\mbf 0) -k^2 \, \mathds D - i \omega \, I\right)\bs \phi_1^{[1]}&=\mbf 0.
		\end{align*}
        As the kernel of this linear system is one-dimensional, then $\bs \phi_1^{[1]}$ is well-defined and can be chosen to be non-zero. In particular, it can be chosen to be a unit vector, although it is not strictly necessary. We remark that solving a similar equation for $B$ is not necessary since it is only the complex-conjugate version of the equation for $A$, and the definition of $\bs \phi_1^{[1]}$ is consistent with that. Now we can proceed with the expansion at the next order.

    \subsection{Second order} At second order, \eqref{eq:geneq} is given by
		\begin{multline}
			\partial_A \mbf u^{[1,0]}\left(h_1^{[2,0]} + h_1^{[1,1]} + h_1^{[0,2]}\right) + \partial_B \mbf u^{[0,1]} \left(h_2^{[2,0]} + h_2^{[1,1]} + h_2^{[0,2]}\right) 
			\s 
			+ \partial_A \mbf u^{[2,0]} \, h_1^{[1,0]} + \partial_B \mbf u^{[0,2]} \, h_2^{[0,1]} + \partial_A \mbf u^{[1,1]} \, h_1^{[1,0]} + \partial_B \mbf u^{[1,1]} \, h_2^{[0,1]} + \mbox{conj.}
			\s 
			= \left(\jac \mbf f(\mbf 0) +\mathds D \, \partial_{xx} \right) \left(\mbf u^{[2,0]} + \mbf u^{[1,1]} + \mbf u^{[0,2]}\right)
			\s 
			+ \mbf F_2\left(\mbf u^{[1,0]} + \mbf u^{[0,1]}, \mbf u^{[1,0]} + \mbf u^{[0,1]}\right), \label{secondordereq}
		\end{multline}
		which is equivalent to:
		\begin{multline*}
			\partial_A \mbf u^{[1,0]}\left(h_1^{[2,0]} + h_1^{[1,1]} + h_1^{[0,2]}\right) + \partial_B \mbf u^{[0,1]} \left(h_2^{[2,0]} + h_2^{[1,1]} + h_2^{[0,2]}\right)
			\s 
			+ i\omega A \, \partial_A \mbf u^{[2,0]} - i \omega B \, \partial_B \mbf u^{[0,2]} + i\omega A \, \partial_A \mbf u^{[1,1]} - i\omega B \, \partial_B \mbf u^{[1,1]} + \mbox{conj.}
			\s 
			= \left(\jac \mbf f(\mbf 0) +\mathds D \, \partial_{xx} \right) \left(\mbf u^{[2,0]} + \mbf u^{[1,1]} + \mbf u^{[0,2]}\right)
			\s 
			+ \mbf F_2\left(\mbf u^{[1,0]}, \mbf u^{[1,0]}\right) + 2 \, \mbf F_2\left(\mbf u^{[1,0]}, \mbf u^{[0,1]}\right) + \mbf F_2\left(\mbf u^{[0,1]}, \mbf u^{[0,1]}\right).
		\end{multline*}
		Next, as $\partial_A \mbf u^{[1,0]}$, $\partial_{\bar A} \mbf u^{[1,0]}$, $\partial_B \mbf u^{[0,1]}$, $\partial_{\bar B} \mbf u^{[0,1]}$ are in $\ker\left(\jac \mbf f(\mbf 0)-k^2 \, \mathds D \pm i\omega \, I\right)$, and $\ker\left(\jac \mbf f(\mbf 0) - k^2 \, \mathds D \pm i\omega \, I\right)$ has a trivial intersection with $\Im\left(\jac \mbf f(\mbf 0) - k^2 \, \mathds D \pm i\omega \, I\right)$, then there is no  solution to the above equation if $h_p^{[j, \ell]}\neq 0$ for and $j + \ell = 2$, and $p = 1, 2$. This implies that any such functions must be 0. Therefore, when splitting \eqref{secondordereq} according to the amplitudes, it becomes
		\begin{multline}
			i\omega A \, \partial_A \mbf u^{[2,0]} + \mbox{conj.} = \left(\jac \mbf f(\mbf 0) + \mathds D \, \partial_{xx} \right) \, \mbf u^{[2,0]} + |A|^2 \, \mbf F_2\left(\bs \phi_1^{[1]}, \ol{\bs \phi_1^{[1]}}\right)
            \\
            + A^2 \, e^{2ikx} \, \mbf F_2\left(\bs \phi_1^{[1]}, \bs \phi_1^{[1]}\right) + c.c., \label{firstequationsecondorder}
		\end{multline}
		and
		\begin{multline}
			i\omega A \, \partial_A \mbf u^{[1,1]} - i\omega B \, \partial_B \mbf u^{[1,1]} + \mbox{conj.} = \left(\jac \mbf f(\mbf 0) +\mathds D \, \partial_{xx} \right) \mbf u^{[1,1]} + 2 \, A \bar B \, \mbf F_2\left(\bs \phi_1^{[1]}, \bs \phi_1^{[1]}\right)
			\s 
			+ 2 \, A B \, e^{2ikx} \, \mbf F_2\left(\bs \phi_1^{[1]}, \ol{\bs \phi_1^{[1]}}\right) + c.c., \label{secondequationsecondorder}
		\end{multline}
		while the remaining equation is the complex-conjugate version of \eqref{firstequationsecondorder}, and we seek solutions to these three equations that are orthogonal to $\ker\left(\jac \mbf f(\mbf 0) + \mathbb D \, \partial_{xx} \pm i \omega I\right)$. The form of these solutions is given by:
		\begin{align*}
			\mbf u^{[2, 0]} &= |A|^2 \, \mbf W_0^{[2]} + A^2 \, e^{2ikx} \, \mbf W_2^{[2]} + c.c.,
			\s 
			\mbf u^{[1, 1]} &= A \bar B \, \mbf W_0^{[1,1]} + AB \, e^{2ikx} \, \mbf W_2^{[1,1]} + c.c.,
			\s 
			\mbf u^{[0, 2]} &= |B|^2 \, \mbf W_0^{[2]} + B^2 \, e^{2ikx} \, \ol{\mbf W_2^{[2]}} +c.c.
		\end{align*}
		With this, because \eqref{firstequationsecondorder} and \eqref{secondequationsecondorder} are linear equations, we can make these substitutions for $\mbf u^{[2,0]}$, and $\mbf u^{[1,1]}$ to determine the unknown vectors of coefficients $\mbf W_\alpha^{[\beta]}$. Thus, we obtain
		\begin{align*}
			\jac \mbf f(\mbf 0) \, \mbf W_0^{[2]}&= - \mbf F_2\left(\bs \phi_1^{[1]}, \overline{\bs \phi_1^{[1]}}\right),
			\s 
			\left(\jac \mbf f(\mbf 0) - 4 \, k^2 \, \mathbb D - 2 \, i\omega \, I\right) \mbf W_2^{[2]} &= - \mbf F_2\left(\bs \phi_1^{[1]}, \bs \phi_1^{[1]}\right),
			\s
			\left(\jac \mbf f(\mbf 0) - 2 \, i\omega \, I\right) \mbf W_0^{[1,1]} &= - 2 \, \mbf F_2\left(\bs \phi_1^{[1]},\bs \phi_1^{[1]}\right),
			\s 
			\left(\jac \mbf f(\mbf 0) - 4 \, k^2 \, \mathbb D\right) \mbf W_2^{[1,1]} &= - 2 \, \mbf F_2\left(\bs \phi_1^{[1]}, \ol{\bs \phi_1^{[1]}}\right).
		\end{align*}
		These expressions imply that $\mbf W_0^{[2]}, \mbf W_2^{[1, 1]}\in \mathbb R^n$.
		
		Furthermore, this lets us continue to the next order.

    \subsection{Third order}
        At third order, \eqref{eq:geneq} is given by:
		\begin{multline}
			\partial_A \mbf u^{[1,0]} \left(h_1^{[3,0]} + h_1^{[2,1]} + h_1^{[1,2]} + h_1^{[0,3]}\right) + \partial_B \mbf u^{[0,1]}\left(h_2^{[3,0]} + h_2^{[2,1]} + h_2^{[1,2]} + h_2^{[0,3]}\right)
			\s 
			+ \partial_A \mbf u^{[3,0]} \, h_1^{[1,0]} + \partial_B \mbf u^{[0,3]} \, h_2^{[0,1]} + \partial_A \mbf u^{[2,1]} \, h_1^{[1,0]} + \partial_B \mbf u^{[1,2]} \, h_2^{[0,1]}
			\s 
			+ \partial_A \mbf u^{[1,2]} \, h_1^{[1,0]} + \partial_B \mbf u^{[2,1]} \, h_2^{[0,1]} + \mbox{conj.}
			\s 
			= \left(\jac \mbf f(\mbf 0) + \mathds D \, \partial_{xx}\right)\left(\mbf u^{[3,0]} + \mbf u^{[2,1]} + \mbf u^{[1,2]} + \mbf u^{[0,3]}\right)
			\s 
			+ 2 \, \mbf F_2\left(\mbf u^{[1,0]} + \mbf u^{[0,1]}, \mbf u^{[2,0]} + \mbf u^{[1,1]} + \mbf u^{[0,2]}\right)
			\s 
			+ \mbf F_3\left(\mbf u^{[1,0]} + \mbf u^{[0,1]}, \mbf u^{[1,0]} + \mbf u^{[0,1]}, \mbf u^{[1,0]} + \mbf u^{[0,1]}\right). \label{thirdordereq}
		\end{multline}
		This equation can be expanded as:
		\begin{multline*}
			\partial_A \, \mbf u^{[1,0]} \left(h_1^{[3,0]} + h_1^{[2,1]} + h_1^{[1,2]} + h_1^{[0,3]}\right) + \partial_B \, \mbf u^{[0,1]} \left(h_2^{[3,0]} + h_2^{[2,1]} + h_2^{[1,2]} + h_2^{[0,3]}\right)
			\s 
			+ i\omega A \, \partial_A \mbf u^{[3,0]} - i\omega B \, \partial_B \mbf u^{[0,3]} + i\omega A \, \partial_A \mbf u^{[2,1]} - i\omega B \, \partial_B \mbf u^{[1,2]}
			\s 
			+ i\omega A \, \partial_A \mbf u^{[1,2]} - i\omega B \, \partial_B \mbf u^{[2,1]} + \mbox{conj.}
			\s 
			= \left(\jac \mbf f(\mbf 0) + \mathds D \, \partial_{xx}\right)\left(\mbf u^{[3,0]} + \mbf u^{[2,1]} + \mbf u^{[1,2]} + \mbf u^{[0,3]}\right)
			\s 
			+ 2 \left(\mbf F_2\left(\mbf u^{[1,0]}, \mbf u^{[2,0]}\right) + \mbf F_2\left(\mbf u^{[1,0]}, \mbf u^{[1,1]}\right) + \mbf F_2\left(\mbf u^{[1,0]}, \mbf u^{[0,2]}\right)\right.
			\s 
			\left. + \mbf F_2\left(\mbf u^{[0,1]}, \mbf u^{[2,0]}\right) + \mbf F_2\left(\mbf u^{[0,1]}, \mbf u^{[1,1]}\right) + \mbf F_2\left(\mbf u^{[0,1]}, \mbf u^{[0,2]}\right)\right)
			\s 
			+ \mbf F_3\left(\mbf u^{[1,0]}, \mbf u^{[1,0]}, \mbf u^{[1,0]}\right) + \mbf F_3\left(\mbf u^{[0,1]}, \mbf u^{[0,1]}, \mbf u^{[0,1]}\right)
			\s 
			+ 3 \left(\mbf F_3\left(\mbf u^{[1,0]}, \mbf u^{[1,0]}, \mbf u^{[0,1]}\right) + \mbf F_3\left(\mbf u^{[1,0]}, \mbf u^{[0,1]}, \mbf u^{[0,1]}\right)\right),
		\end{multline*}
		which can be split, according to the amplitudes, as
		\begin{multline}
			\partial_A \, \mbf u^{[1,0]} \, h_1^{[3,0]} + \partial_B \, \mbf u^{[0,1]} \, h_2^{[3,0]} + i\omega A \, \partial_A \mbf u^{[3,0]} + \mbox{conj.} = \left(\jac \mbf f(\mbf 0) + \mathds D \, \partial_{xx}\right) \mbf u^{[3,0]}
			\s 
			+ |A|^2 \, A \, e^{ikx} \left(4 \, \mbf F_2\left(\bs \phi_1^{[1]}, \mbf W_0^{[2]}\right) + 2 \, \mbf F_2\left(\ol{\bs \phi_1^{[1]}}, \mbf W_2^{[2]}\right) + 3 \, \mbf F_3\left(\bs \phi_1^{[1]}, \bs \phi_1^{[1]}, \ol{\bs \phi_1^{[1]}}\right)\right)
			\s 
			+ A^3 \, e^{3ikx} \left( 2 \, \mbf F_2\left(\bs \phi_1^{[1]}, \mbf W_2^{[2]}\right) + \mbf F_3\left(\bs \phi_1^{[1]}, \bs \phi_1^{[1]}, \bs \phi_1^{[1]}\right)\right) + c.c., \label{firstequationthirdorder}
		\end{multline}
		and
		\begin{multline}
			\partial_A \, \mbf u^{[1,0]} \, h_1^{[2,1]} + \partial_B \, \mbf u^{[0,1]} \, h_2^{[2,1]} + i\omega A \, \partial_A \mbf u^{[2,1]} - i\omega B \, \partial_B \mbf u^{[2,1]} + \mbox{conj.}
			\s 
			= \left(\jac \mbf f(\mbf 0) + \mathds D \, \partial_{xx}\right) \mbf u^{[2,1]}
			\s 
			+ |A|^2 B \, e^{ikx} \left(2 \, \mbf F_2\left(\bs \phi_1^{[1]}, \ol{\mbf W_0^{[1,1]}}\right) + 2 \, \mbf F_2\left(\ol{\bs \phi_1^{[1]}}, 2 \, \mbf W_0^{[2]} + \mbf W_2^{[1,1]}\right) + 6 \, \mbf F_3\left(\bs \phi_1^{[1]}, \ol{\bs \phi_1^{[1]}}, \ol{\bs \phi_1^{[1]}}\right)\right)
			\s 
			+ A^2 \bar B \, e^{ikx} \left(2 \, \mbf F_2\left(\bs \phi_1^{[1]}, \mbf W_2^{[2]} + \mbf W_0^{[1,1]}\right) + 3 \, \mbf F_3\left(\bs \phi_1^{[1]}, \bs \phi_1^{[1]}, \bs \phi_1^{[1]}\right)\right)
			\s 
			+ A^2 B \, e^{3ikx} \left(2 \, \mbf F_2\left(\bs \phi_1^{[1]}, \mbf W_2^{[1,1]}\right) + 2 \, \mbf F_2\left(\ol{\bs \phi_1^{[1]}}, \mbf W_2^{[2]}\right) + 3 \, \mbf F_3\left(\bs \phi_1^{[1]}, \bs \phi_1^{[1]}, \ol{\bs \phi_1^{[1]}}\right)\right) + c.c., \label{secondequationthirdorder}
		\end{multline}
		while the remaining equations are just complex-conjugates of \eqref{firstequationthirdorder} and \eqref{secondequationthirdorder}. Next, as these are linear systems of equations, we know that the vector functions that solve them have the following forms:
		\begin{align*}
			\mbf u^{[3,0]}&= |A|^2 \, A \, e^{ikx} \, \mbf W_1^{[3]} + A^3 \, e^{3ikx} \, \mbf W_3^{[3]} +c.c.,
			\s 
			\mbf u^{[2,1]} &= |A|^2 \, B \, e^{ikx} \, \ol{\mbf W_1^{[2,1]}} + A^2 \bar B \, e^{ikx} \, \mbf W_{1,2}^{[2,1]} + A^2 B \, e^{3ikx} \, \mbf W_3^{[2,1]} + c.c.,
			\s 
			\mbf u^{[1,2]} &= |B|^2 \, A \, e^{ikx} \, \mbf W_1^{[2,1]} + \bar A B^2 \, e^{ikx} \, \ol{\mbf W_{1,2}^{[2,1]}} + AB^2 \, e^{3ikx} \, \ol{\mbf W_3^{[2,1]}} + c.c.,
			\s 
			\mbf u^{[0,3]} &= |B|^2 \, B \, e^{ikx} \, \ol{\mbf W_1^{[3]}} + B^3 \, e^{3ikx} \, \ol{\mbf W_3^{[3]}} + c.c..
		\end{align*}
		Next, when replacing each term of $\mbf u^{[3,0]}$ and $\mbf u^{[2,1]}$ into \eqref{firstequationthirdorder} and \eqref{secondequationthirdorder}, respectively, we can see that
        \begin{multline}
			\left(\jac \mbf f(\mbf 0) - k^2 \, \mathbb D - i \omega \, I\right) \, \mbf W_1^{[3]} = \bs \phi_1^{[1]} \, \frac{h_1^{[3,0]}}{\abs{A}^2 A} + \ol{\bs \phi_1^{[1]}} \, \frac{h_2^{[3,0]}}{\abs{A}^2 A}
			\s 
			- \left(4 \, \mbf F_2\left(\bs \phi_1^{[1]}, \mbf W_0^{[2]}\right) + 2 \, \mbf F_2\left(\overline{\bs \phi_1^{[1]}}, \mbf W_2^{[2]}\right) + 3 \, \mbf F_3\left(\bs \phi_1^{[1]}, \bs \phi_1^{[1]}, \ol{\bs \phi_1^{[1]}}\right)\right), \label{secularQ13}
		\end{multline}
		\begin{align}
			\left(\jac \mbf f(\mbf 0) - 9 \, k^2 \,\mathds D - 3 \, i\omega \, I\right)\mbf W_3^{[3]} = - 2 \, \mbf F_2\left(\bs \phi_1^{[1]}, \mbf W_2^{[2]}\right) - \mbf F_3\left(\bs \phi_1^{[1]}, \bs \phi_1^{[1]}, \bs \phi_1^{[1]}\right), \label{nonsecularQ33}
		\end{align}
        \begin{multline}
			\left(\jac \mbf f(\mbf 0) - k^2 \, \mathds D + i\omega \, I\right) \, \ol{\mbf W_1^{[2,1]}} = \bs \phi_1^{[1]} \, \frac{h_1^{[2,1]}}{|A|^2 \, B} + \ol{\bs \phi_1^{[1]}} \, \frac{h_2^{[2,1]}}{|A|^2 \, B}
			\s 
			- \left(2 \, \mbf F_2\left(\bs \phi_1^{[1]}, \ol{\mbf W_0^{[1,1]}}\right) + 2 \, \mbf F_2\left(\ol{\bs \phi_1^{[1]}}, 2 \, \mbf W_0^{[2]} + \mbf W_2^{[1,1]}\right) + 6 \, \mbf F_3\left(\bs \phi_1^{[1]}, \ol{\bs \phi_1^{[1]}}, \ol{\bs \phi_1^{[1]}}\right)\right), \label{secularQ121}
		\end{multline}
		\begin{align}
			\left(\jac \mbf f(\mbf 0) - k^2 \, \mathds D - 3 \, i\omega \, I\right)\mbf W_{1,2}^{[2,1]} = - 2 \, \mbf F_2\left(\bs \phi_1^{[1]}, \mbf W_2^{[2]} + \mbf W_0^{[1,1]}\right) - 3 \, \mbf F_3\left(\bs \phi_1^{[1]}, \bs \phi_1^{[1]}, \bs \phi_1^{[1]}\right), \label{nonsecularQ1221}
		\end{align}
		and
		\begin{align}
			\left(\jac \mbf f(\mbf 0) - 9 \, k^2 \, \mathds D - i \omega \, I\right)\mbf W_3^{[2,1]} = - 2 \, \mbf F_2\left(\bs \phi_1^{[1]}, \mbf W_2^{[1,1]}\right) - 2 \, \mbf F_2\left(\ol{\bs \phi_1^{[1]}}, \mbf W_2^{[2]}\right) \notag
            \\
            - 3 \, \mbf F_3\left(\bs \phi_1^{[1]}, \bs \phi_1^{[1]}, \ol{\bs \phi_1^{[1]}}\right). \label{nonsecularQ321}
		\end{align}
		Note that \eqref{nonsecularQ33}, \eqref{nonsecularQ1221}, and \eqref{nonsecularQ321} have a unique solution since the matrices on their left-hand side are invertible.
  
        On the other hand, equations \eqref{secularQ13} and \eqref{secularQ121} may have secular terms which would break the assumptions of our asymptotic scaling. To deal with such terms, as is common in multi-scale asymptotic analysis, we use the Fredholm alternative to demand that
		\begin{align*}
			\Im(\jac \mbf f(\mbf 0)+ \mathds D \, \partial_{xx} - i\omega \, I)^\perp &=\ker\left(\left(\jac \mbf f(\mbf 0)+\mathds D \, \partial_{xx} - i\omega \, I\right)^*\right)
			\s 
			&=\ker\left(\jac \mbf f(\mbf 0)^\intercal+ \mathds D^\intercal \, \partial_{xx} + i\omega \, I\right).
		\end{align*}
        Here, and henceforth we choose the natural inner product within the space of complex-valued $L^2$ periodic functions with period $2\pi/k$:
        \begin{align}
			\left \langle f(x) \, \mbf a, g(x) \, \mbf b \right \rangle &= \frac{k}{2\pi}\int_0^{\frac{2\pi}{k}} f(x) \, \ol{g(x)} \, \dd x \, \mbf a \cdot \bar{\mbf b}, \label{innerp}
        \end{align}
		for $f,g\in L^2\left(\left[0, \frac{2\pi}{k}\right]\right)$ and $\mbf a,\mbf b\in \mathds C^n$, where $\cdot$ is the usual dot product of vectors with $n$ components.

        We now need to choose a vector $\bs{\hat \psi}_{1,\pm}^{[1]}$ as a non-zero solution to the following equation:
		\begin{align*}
			\left(\jac \mbf f(\mbf 0)^\intercal + \mathds D^\intercal \, \partial_{xx} - i\omega \, I\right) \bs{\hat \psi}_{1,\pm}^{[1]}&=0.
		\end{align*}
        A convenient choice is $\bs{\hat \psi}_{1,\pm}^{[1]} = \bs \psi_1^{[1]} \, e^{\pm ikx}$, where
		\begin{align*}
			\left(\jac \mbf f(\mbf 0)^\intercal - k^2 \, \mathds D^\intercal - i\omega \, I\right) \bs \psi_1^{[1]} &= 0.
		\end{align*}
        Then we observe that for a real matrix $A$ with a complex eigenvalue $i\omega$, and the corresponding eigenvectors $\mbf p, \mbf q$ defined as:
		\begin{align*}
			A \, \mbf p &= i\omega \, \mbf p,
			\s 
			A^\intercal \, \mbf q &= i\omega \, \mbf q,
		\end{align*}
		we have that
		\begin{align*}
			\left \langle \mbf p, \mbf q \right \rangle&=\left \langle \mbf p,\frac{A^\intercal \, \mbf q}{i \omega} \right \rangle=\left \langle A \, \mbf p,\frac{\mbf q}{i \omega} \right \rangle=-\frac{1}{i\omega}\left \langle i\omega \, \mbf p,\mbf q\right \rangle=-\left \langle \mbf p, \mbf q\right \rangle,
		\end{align*}
        which implies that $\left \langle \mbf p,\mbf q \right \rangle=0$. In our particular case, this implies that
		\begin{align*}
			\left \langle \bs \phi_1^{[1]} \, e^{\pm ikx},\bs \psi_1^{[1]} \, e^{\pm ikx}\right \rangle =\bs \phi_1^{[1]} \cdot \ol{\bs \psi_1^{[1]}}=0.
		\end{align*}
        With this, applying the inner product with $\ol{\bs \psi_1^{[1]}}$ to the right-hand side of \eqref{secularQ13}, and equating that expression to zero, we can see that $h_1^{[3,0]}=C_1^{[3,0]} \, |A|^2A$, where
		\begin{align*}
			C_1^{[3,0]} = \frac{1}{\bs \psi_1^{[1]} \cdot \bs \phi_1^{[1]}} \, \bs \psi_1^{[1]} \cdot \left(4 \, \mbf F_2\left(\bs \phi_1^{[1]}, \mbf W_0^{[2]}\right) + 2 \, \mbf F_2 \left(\ol{\bs \phi_1^{[1]}}, \mbf W_2^{[2]}\right) + 3 \, \mbf F_3\left(\bs \phi_1^{[1]}, \bs \phi_1^{[1]}, \ol{\bs \phi_1^{[1]}}\right)\right).
		\end{align*}
        Similarly, using $\bs \psi_1^{[1]}$ to the right-hand side of \eqref{secularQ121}, we find $h_2^{[2, 1]} = C_2^{[2,1]} \, |A|^2 \, B$, where
		\begin{multline*}
			C_2^{[2, 1]} = \frac{1}{\ol{\bs \psi_1^{[1]}} \cdot \ol{\bs \phi_1^{[1]}}} \, \ol{\bs \psi_1^{[1]}} \cdot \left(2 \, \mbf F_2\left(\bs \phi_1^{[1]}, \ol{\mbf W_0^{[1,1]}}\right) + 2 \, \mbf F_2\left(\ol{\bs \phi_1^{[1]}}, 2 \, \mbf W_0^{[2]} + \mbf W_2^{[1,1]}\right)\right.
            \\
            \left. + 6 \, \mbf F_3\left(\bs \phi_1^{[1]}, \ol{\bs \phi_1^{[1]}}, \ol{\bs \phi_1^{[1]}}\right)\right).
		\end{multline*}
		In addition, we have that $h_2^{[0,3]} = \ol{C_1^{[3,0]}} \, |B|^2 \, B$, and $h_1^{[1, 2]} = C_1^{[1,2]} \, A \, |B|^2$, where $C_1^{[1,2]} = \ol{C_2^{[2, 1]}}$.
  
        Finally, note that we have not chosen the functions $h_1^{[0,3]}$, $h_2^{[3,0]}$, $h_1^{[2,1]}$ and $h_2^{[1,2]}$. Following \cite{Kuznetsov}, these terms are not important at the third order since they can be removed under a smooth change of variables. We, therefore, make the choice that each of these coefficients is zero, which is consistent with our choice of parametrisation of the resonant terms in order to write the truncated normal form as \eqref{eq:nf1}, \eqref{eq:nf2}.
        
        With this choice, equations \eqref{secularQ13} and \eqref{secularQ121} become
		\begin{align*}
			\left(\jac \mbf f(\mbf 0)-k^2 \, \mathbb D - i \omega \, I\right) \mbf W_1^{[3]} = C_1^{[3,0]} \, \bs \phi_1^{[1]} - 4 \, \mbf F_2\left(\bs \phi_1^{[1]}, \mbf W_0^{[2]}\right) - 2 \, \mbf F_2\left(\overline{\bs \phi_1^{[1]}}, \mbf W_2^{[2]}\right)
            \\
            - 3 \, \mbf F_3\left(\bs \phi_1^{[1]}, \bs \phi_1^{[1]}, \overline{\bs \phi_1^{[1]}}\right),
		\end{align*}
		and
		\begin{multline*}
			\left(\jac \mbf f(\mbf 0) - k^2 \, \mathds D + i\omega \, I\right) \, \ol{\mbf W_1^{[2,1]}} = \ol{C_1^{[1, 2]}} \, \ol{\bs \phi_1^{[1]}}
			\s 
			- \left(2 \, \mbf F_2\left(\bs \phi_1^{[1]}, \ol{\mbf W_0^{[1,1]}}\right) + 2 \, \mbf F_2\left(\ol{\bs \phi_1^{[1]}}, 2 \, \mbf W_0^{[2]} + \mbf W_2^{[1,1]}\right) + 6 \, \mbf F_3\left(\bs \phi_1^{[1]}, \ol{\bs \phi_1^{[1]}}, \ol{\bs \phi_1^{[1]}}\right)\right),
		\end{multline*}
		respectively. These equations have infinitely many solutions because the matrices of coefficients on the left-hand side have a non-trivial kernel. As is conventional, we choose a solution by demanding that $\mbf W_1^{[3]}$ and $\mbf W_1^{[2,1]}$ to be orthogonal to $\bs \phi_1^{[1]}$ with respect to the inner product defined by \eqref{innerp}.
		
		We are now ready to continue up to orders 4 and 5. The method is similar, but the calculations are somewhat laborious. Therefore, the details are relegated to \ref{ap:order5}.

    \subsection{Unfolding} \label{sub:unfolding}
        To complete the normal form, we need to analyse what happens in the case that $\varepsilon\neq 0$. To do that, we need to consider terms in the amplitude equation that are of first order in $\varepsilon$. Thus we adjust the general expansions to read 
		\begin{align*}
			\partial_t \, A = \sum_{\substack{0\leq j, \ell\leq 5 \\ j + \ell > 0 \\ 0 \leq m \leq 1}} h_1^{[j, \ell, m]}(A, B, \varepsilon),
			\qquad 
			\partial_t \, B = \sum_{\substack{0\leq j, \ell\leq 5 \\ j + \ell > 0 \\ 0 \leq m \leq 1}} h_2^{[j, \ell, m]}(A, B, \varepsilon),
		\end{align*}
		where the index $m$ considers the degree on $\varepsilon$. Here, we have already determined all the terms in which $m = 0$. We only need to find $h_1^{[j, \ell, 1]}$ and $h_2^{[j, \ell, 1]}$ for $j + \ell = 1$. To achieve this, we need to consider a new term in the expansion of $\mbf u$ as well. Specifically, we consider that
		\begin{align*}
			\mbf u = \sum_{\substack{0\leq j, \ell\leq 5 \\ j + \ell>0 \\ 0\leq m\leq 1}} \mbf u^{[j, \ell, m]}(A, B, \varepsilon).
		\end{align*}
		Now, note that the expansion of \eqref{eq:geneq} at first order in $A, B$ and $\varepsilon$ is given by
		\begin{multline*}
			\partial_A \, \mbf u^{[1,0,1]} \, h_1^{[1,0,0]} + \partial_B \, \mbf u^{[0,1,1]} \, h_2^{[0,1,0]}
			\s 
			+ \partial_A \, \mbf u^{[1,0,0]} \left(h_1^{[1,0,1]} + h_1^{[0,1,1]}\right) + \partial_B \, \mbf u^{[0,1,0]} \left(h_2^{[1,0,1]} + h_2^{[0,1,1]}\right) + \mbox{conj.}
			\\
			= \left(\jac \mbf f(\mbf 0) - k^2 \, \mathbb D\right) \left(\mbf u^{[1,0,1]} + \mbf u^{[0,1,1]}\right) + \varepsilon \, \mbf F_{1,1} \left(\mbf u^{[1,0,0]} + \mbf u^{[0,1,0]}\right)
			\\
			+ \varepsilon \, \frac{\dd \mathbb D}{\dd \varepsilon}(0) \, \partial_{xx} \left(\mbf u^{[1,0,0]} + \mbf u^{[0,1,0]}\right).
		\end{multline*}
		Next, if we set $h_1^{[0, 1, 1]} = h_2^{[1, 0, 1]} = 0$, then this expression gets equivalent to
		\begin{multline*}
			i\omega A \, \partial_A \, \mbf u^{[1,0,1]} - i\omega B \, \partial_B \, \mbf u^{[0,1,1]} + \partial_A \, \mbf u^{[1,0,0]} \, h_1^{[1,0,1]} + \partial_B \, \mbf u^{[0,1,0]} \, h_2^{[0,1,1]} + \mbox{conj.}
			\\
			= \left(\jac \mbf f(\mbf 0) - k^2 \, \mathbb D\right) \left(\mbf u^{[1,0,1]} + \mbf u^{[0,1,1]}\right) + \varepsilon \, \mbf F_{1,1} \left(\mbf u^{[1,0,0]} + \mbf u^{[0,1,0]}\right)
			\\
			+ \varepsilon \, \frac{\dd \mathbb D}{\dd \varepsilon}(0) \, \partial_{xx} \left(\mbf u^{[1,0,0]} + \mbf u^{[0,1,0]}\right).
		\end{multline*}
		Now, if we split this equation according to the amplitudes, we have that
		\begin{multline*}
			i\omega A \, \partial_A \, \mbf u^{[1,0,1]} + \partial_A \, \mbf u^{[1, 0, 0]} \, h_1^{[1, 0, 1]} + \mbox{conj.} = \left(\jac \mbf f(\mbf 0) - k^2 \, I\right) \, \mbf u^{[1,0,1]} + \varepsilon \, \mbf F_{1,1} \left(\mbf u^{[1, 0, 0]}\right)
			\\
			- k^2 \, \varepsilon \, \frac{\dd \mathbb D}{\dd \varepsilon}(0) \, \left(\mbf u^{[1,0,0]}\right).
		\end{multline*}
		This implies that the solution is given by:
		\begin{align*}
			\mbf u^{[1,0,1]} = \varepsilon A \, e^{ikx} \, \mbf W_1^{[1,0,1]} + c.c., \qquad \mbf u^{[0,1,1]} = \varepsilon B \, e^{ikx} \, \ol{\mbf W_1^{[1,0,1]}} + c.c.,
		\end{align*}
		where
		\begin{multline*}
			\left(\jac \mbf f(\mbf 0) - k^2 \, I - i\omega \, I\right) \left(\varepsilon A \, e^{ikx} \, \mbf W_1^{[1,0,1]}\right) = e^{ikx} \, \bs \phi_1^{[1]} \, h_1^{[1,0,1]} - \varepsilon A \, e^{ikx} \, \mbf F_{1,1} \left(\bs \phi_1^{[1]}\right)
            \\
            + k^2 \, \varepsilon A \, e^{ikx} \, \frac{\dd \mathbb D}{\dd \varepsilon}(0) \, \bs \phi_1^{[1]}.
		\end{multline*}
		Again, in this equation we might have some secular terms, so we need to apply the inner product with $\ol{\bs \psi_1^{[1]}} \, e^{ikx}$ again to determine the value of $h_1^{[1,0,1]}$. In particular, we have that $h_1^{[1,0,1]} = C_1^{[1,0,1]} \, \varepsilon A$, where
		\begin{align*}
			C_1^{[1,0,1]} = \frac{1}{\bs \psi_1^{[1]} \cdot \bs \phi_1^{[1]}} \, \bs \psi_1^{[1]} \cdot \left(\mbf F_{1,1}\left(\bs \phi_1^{[1]}\right) - k^2 \, \frac{\dd \mathbb D}{\dd \varepsilon}(0) \, \bs \phi_1^{[1]}\right),
		\end{align*}
		and $h_2^{[0, 1, 1]} = \ol{C_1^{[1, 0, 1]}} \, \varepsilon B$.

        This completes the proof of our main result.

\section{Examples}\label{sec:examples}
    We now proceed to illustrate the Theorem through the computation of bifurcation conditions and normal form coefficients in example systems. Details of all the computations are available in an accompanying online repository \cite{criticality-wave}, where complete algebraic expressions are available. As in most cases, these expressions are rather cumbersome, we present here just a summary of the results in terms of two-parameter bifurcation diagrams indicating the super- or sub-criticality of the wave bifurcations, along with other relevant curves. We also present results of numerical simulations, obtained using {\tt Matlab}, {\tt PDE2path} \cite{pde2path} and {\tt Auto} \cite{auto}.
            
    \subsection{Three-component model of excitable media}
        We consider a three-variable generalisation of the FitzHugh-Nagumo system of equations, which allows for diffusion of all three components, as studied
        independently by Yang {\em et al.}~\cite{Yang} and Yochelis {\em et al.}~\cite{YochelisKnobloch}. Rather than representing specific neurodynamics, the system can be considered as a general model for an excitable system with three active components; $u(x, t)$, $v(x, t)$ and $w(x, t)$. The model can be written in dimensionless form as 
 	\begin{align}
    		\partial_t u &= f_3\left(u, k_0\right) - k_v \, v - k_w \, w + D_1 \, \partial_{xx} u, \label{eq:fhn1}
    		\\
    		\partial_t v &= \eps_v \left(u - a_v \, v - a_w \, w - a_0\right) + D_2 \, \partial_{xx} v, \label{eq:fhn2}
    		\\ 
    		\partial_t w &= \eps_w \left(u - w\right) + D_3 \, \partial_{xx} w, \label{eq:fhn3}
    	\end{align}
        where $f_3\left(u, k_0\right)$ is a cubic function in $u$, and $\eps_v$, $\eps_w$, $a_\alpha$, $k_{\alpha}$ and $D_i$ for $\alpha \in \{0, v, w\}$, and $i = 1, 2, 3$ are non-negative parameters. 

        \paragraph{Subcritical bifurcation} 
            We begin by considering the variant of the model \eqref{eq:fhn1}--\eqref{eq:fhn3} previously studied in \cite{YochelisKnobloch}, in which $f_3\left(u, k_0\right) = u - u^3$, $k_w = 0$, and $k_v = 1$. The changes we consider here with respect to \cite{YochelisKnobloch} is that we exclusively study the problem in one spatial dimension and allow $D_2 > 0$.
        
            The homogeneous steady state we analyze is given by
        	\begin{align*}
        		\mbf P &= \left(u^*, v^*, w^*\right),
        	\end{align*}
        	where
        	\begin{align*}
        		u^*& = w^* = -\frac{\sqrt[3]{2} \, \left(a_v+a_w-1\right)}{M}-\frac{M}{3 \sqrt[3]{2} \, a_v},
        		\s 
        		v^* &= \begin{multlined}[t][10cm]
        			\frac{2^{2/3} \, \left(a_w - 1\right) \, M^2 - 6 \, a_0 \, a_v \, M + 6 \, \sqrt[3]{2} \, a_v \, \left(a_w - 1\right) \left(a_v + a_w - 1\right)}{6 \, a_v^2 M},
        		\end{multlined}
        	\end{align*}
            and
            $$
                M = \left[\sqrt{ 729 \, a_0^2 \, a_v^4-108 \, a_v^3 (a_v+a_w-1)^3} \: -27 \, a_0 \, a_v^2 \right]^{1/3}.
            $$
            We consider the following parameters fixed
        	\begin{align}
        		\left(a_w, \eps_v, \eps_w, a_0, D_1, D_2\right) = \left(0.5, 0.2, 1, -0.1, 0.005, 1\right) \label{parvalarik}
        	\end{align}
            and allow $D_2$ and $a_v$ to be bifurcation parameters.

            In \cite{YochelisKnobloch}, the authors state that there is a subcritical wave bifurcation when $D_2 = 0$ and $a_v \approx 0.4497$. Figure \ref{fig:bifdiag} shows the bifurcation curves we have computed for the system. We find three different kinds of bifurcation of $\mbf P$ --- a global (i.e.~wave-number $k = 0$) Hopf, Turing and wave bifurcations. In this figure and subsequent similar ones, we also show as insets the shape of the dispersion relation local to $\mbf P$ in each parameter region. We also adopt a uniform colouring scheme for different bifurcation curves, which also captures changes in criticality. Moreover, in Figure \ref{fig:bifdiag}, we can identify three separate codimension-two bifurcation points: a Turing-Hopf, a wave-Turing, and a global-Hopf-wave.

            However, the three bifurcations we identified in this example are all found to be subcritical for all the parameter values in the diagram. Hence, the low-amplitude Turing, wave, or globally periodic patterns arising at those bifurcations are unstable. 
        	\begin{figure}[tbp]
                \centering
                \begin{tikzpicture}
                    \node (image) at (0,0) {
                        \includegraphics[width=\textwidth]{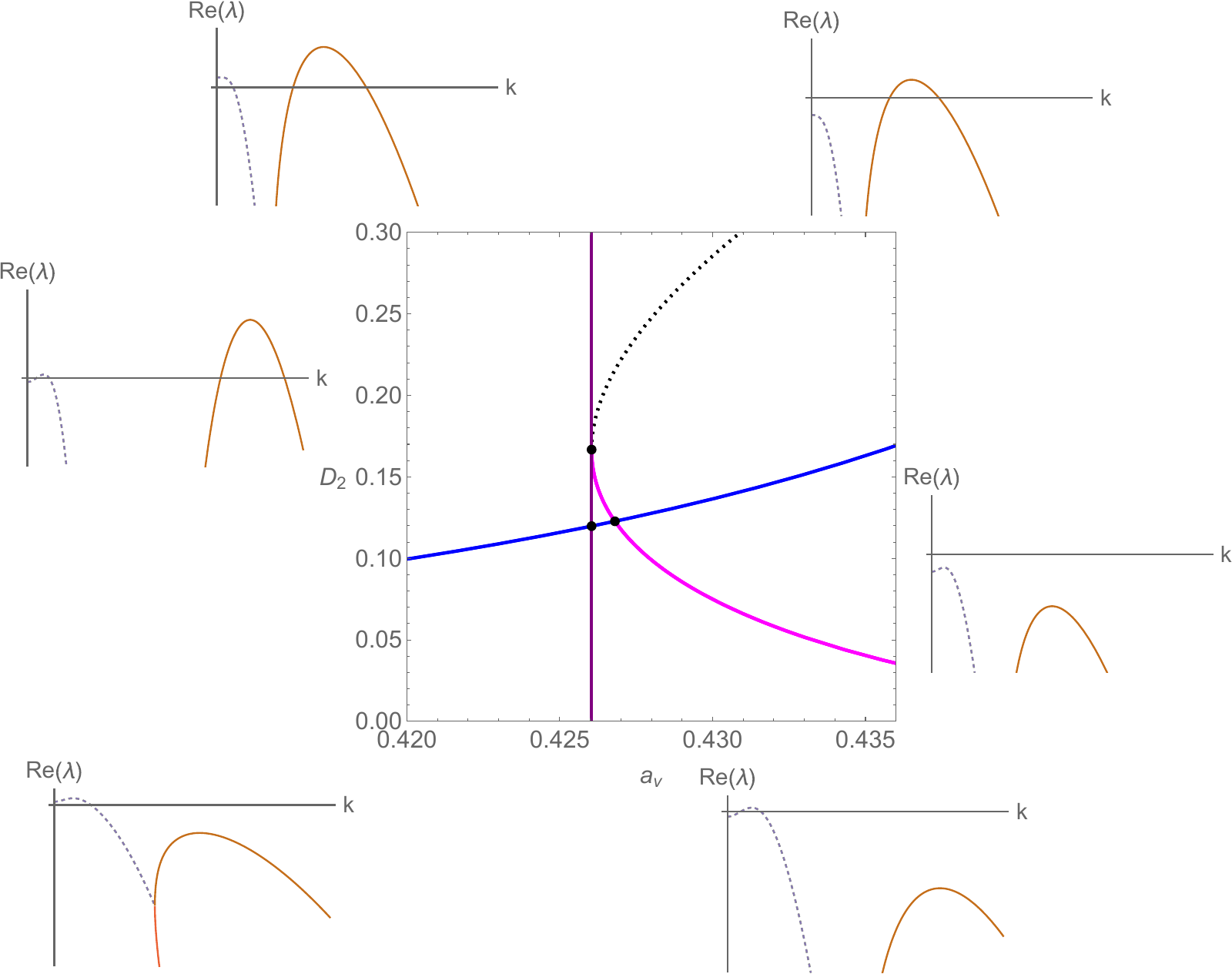}
                    };
                    \node (text) at (- 1.6, - 1.95) {\circled{1}};
                    \node (text) at (- 5.0, - 3.8) {\circled{1}};
                    \node (text) at (1.1, - 2.3) {\circled{2}};
                    \node (text) at (4.0, - 3.9) {\circled{2}};
                    \node (text) at (2.2, - 1.0) {\circled{3}};
                    \node (text) at (6.0, - 0.2) {\circled{3}};
                    \node (text) at (1.8, 1.5) {\circled{4}};
                    \node (text) at (5.1, 6.0) {\circled{4}};
                    \node (text) at (- 1.6, 1.1) {\circled{5}};
                    \node (text) at (- 2.5, 6.2) {\circled{5}};
                    \node (text) at (0.7, 0.5) {\circled{6}};
                    \node (text) at (- 6.3, 2.8) {\circled{6}};
                    \draw [-stealth](0.51, 0.31) -- (- 0.22, - 0.4);
                \end{tikzpicture}
        		\caption{Bifurcation diagram of model \eqref{eq:fhn1}--\eqref{eq:fhn3} when $f_3\left(u, k_0\right) = u - u^3$, $k_w = 0$, and $k_v = 1$ at $\mbf P$, for the parameter values given in \eqref{parvalarik}. The purple vertical line is a subcritical Hopf bifurcation of the steady-state, the magenta curve represents a subcritical wave bifurcation, while the blue curve represents a subcritical Turing bifurcation. Furthermore, the black dotted curve is formed by points in which the conditions for a wave bifurcation are fulfilled for a negative value of $k^2$. Here, $(a_v, D_2)\in [0.42, 0.44] \times [0.05, 0.25]$. The marked dots are codimension-two bifurcation points and the numbers denote the regions delimited by the bifurcation curves, except the dotted black curve. The images surrounding the bifurcation diagram show the dispersion relations one finds in each of its regions. Continuous (resp.~ dashed) lines represent real (resp.~complex with nonzero imaginary part) eigenvalues}
        		\label{fig:bifdiag}
        	\end{figure}
            \begin{figure}
                \centering
                \begin{subfigure}[b]{0.4\textwidth}
                    \centering
                    \includegraphics[width = \textwidth]{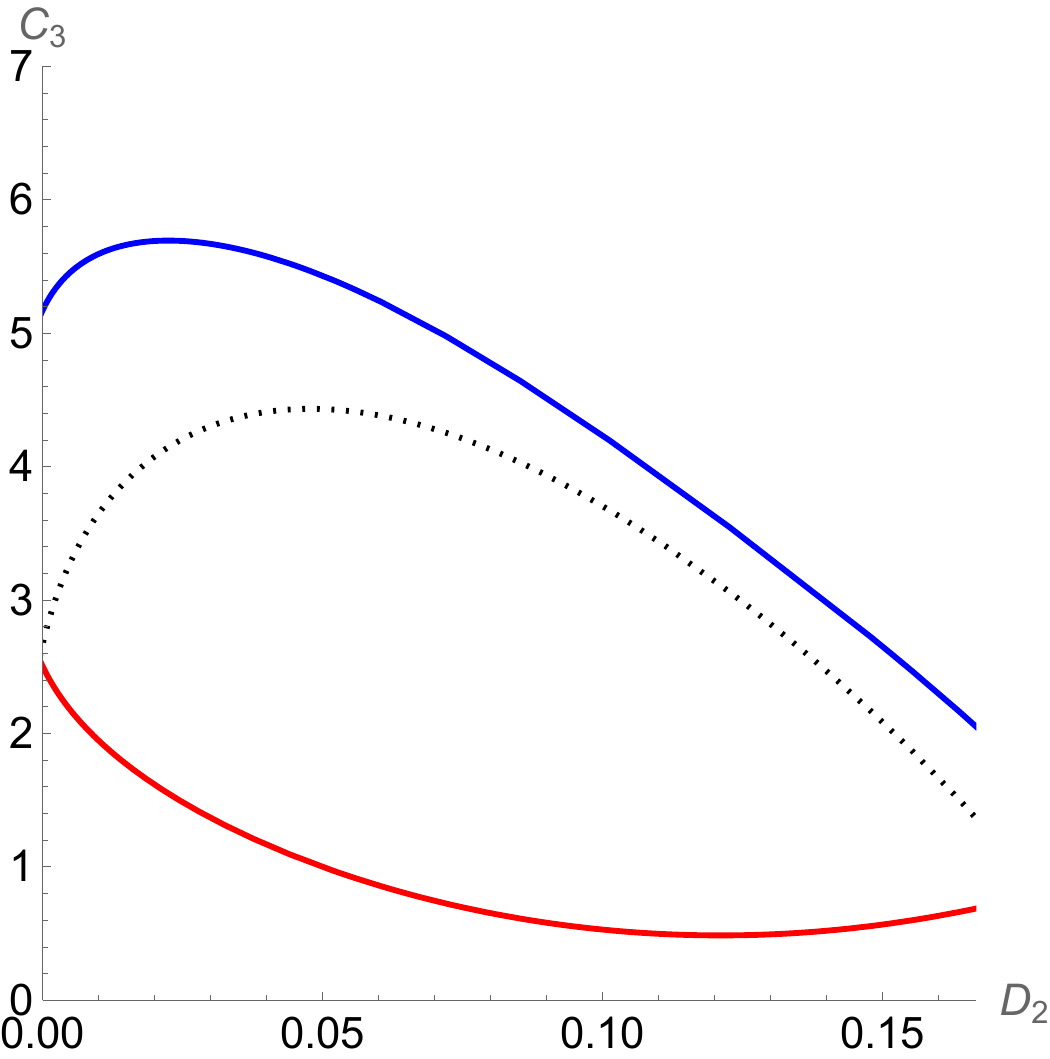}
                    \caption{}
                \end{subfigure}
                \\
                \begin{subfigure}[b]{0.4\textwidth}
                    \centering
                    \includegraphics[width = \textwidth]{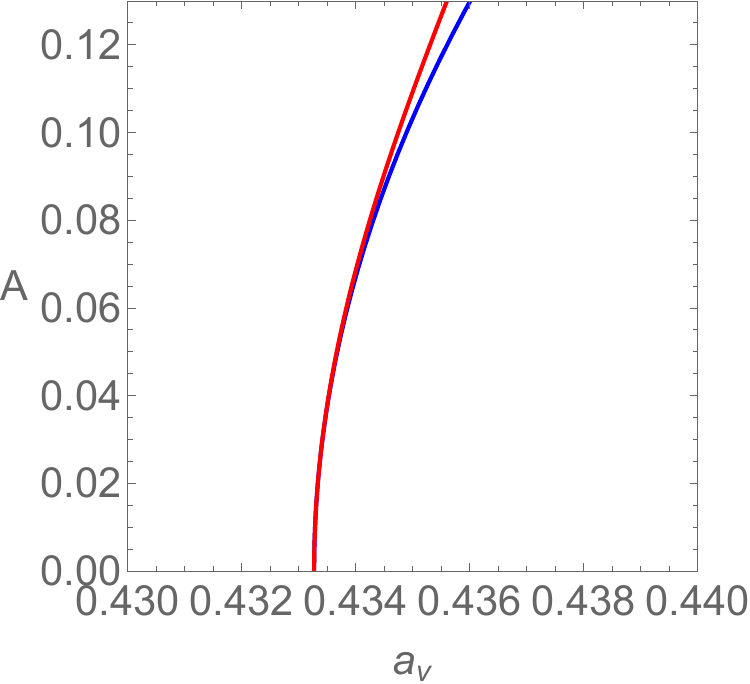}
                    \caption{}
                \end{subfigure}
                \hfill
                \begin{subfigure}[b]{0.4\textwidth}
                    \centering
                    \includegraphics[width = \textwidth]{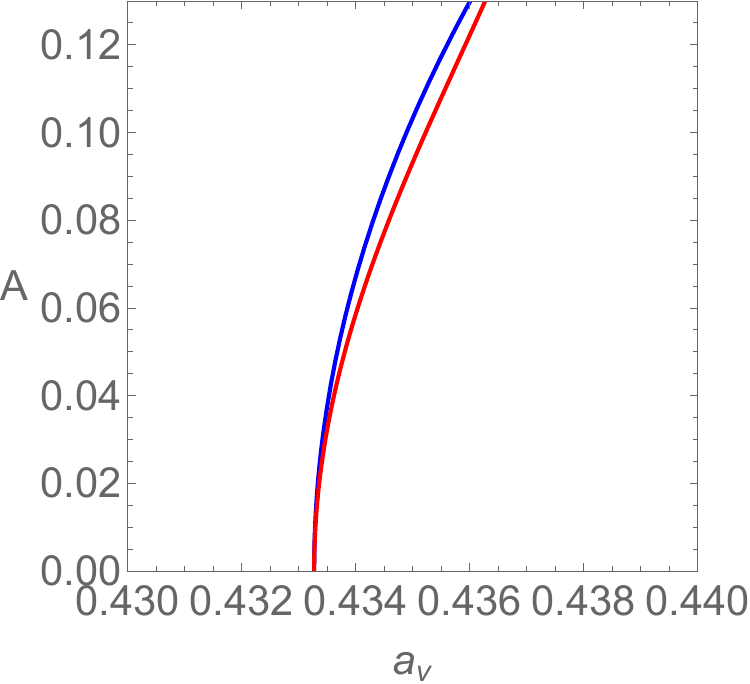}
                    \caption{}
                \end{subfigure}
                \caption{(a) Third-order coefficients along the wave bifurcation curve shown in Figure \ref{fig:bifdiag}. The red (resp.~blue) curve represents the numerical value of $\Re\left(C_1^{[3, 0]}\right)$ $\left(\text{resp. } \Re\left(C_1^{[3, 0]} + C_1^{[1, 2]}\right)\right)$, while the black dotted line represents the numerical value of $\Re\left(C_1^{[1, 2]}\right)$. On the other hand, (b) and (c) show a comparison of amplitudes of traveling and standing waves, respectively,  with the amplitude obtained from the third-order normal form for system \eqref{eq:fhn1}--\eqref{eq:fhn3}, when $f_3\left(u, k_0\right) = u - u^3$, $k_w = 0$, and $k_v = 1$, for the parameter values given in \eqref{parvalarik}, and $D_2 = 0.05$. The blue (resp. red) curves represent the amplitudes obtained through the normal form (resp.~numerics).}
                \label{fig:C3FHN}
            \end{figure}
	
            Figure \ref{fig:bifdiag} also shows the qualitative form of the dispersion relation in each region of the bifurcation diagram. The continuous lines represent real eigenvalues, whilst the dotted lines are the real part of the complex ones. There, we can see where the homogeneous steady-state is stable and the kind of instability it goes through after a bifurcation. In particular, $\mbf P$ is stable in Region 3, and it can go unstable through either a wave bifurcation as $\left(a_v, D_2\right)$ moves into Region 2, or a Turing bifurcation when moving to Region 4. Another option is, of course, to go directly through the corresponding codimension-two bifurcation point to Region 6, where $\mbf P$ has two kinds of instabilities for different wavenumbers. For a low wavenumber, $\mbf P$ is wave-unstable, whilst it is unstable to a Turing instability for a higher value of $k$. Nevertheless, the integration of this system just goes away from those patterns after some time because they are unstable when they appear.

        \paragraph{Unstable supercritical travelling waves.}
            Next, we consider the variant of model \eqref{eq:fhn1}--\eqref{eq:fhn3} studied in \cite{Yang}, in which $f_3\left(u, k_0\right) = k_0 + 2 \, u - u^3$, $a_0 = a_w = 0$, $a_v = a_w = 1$, and $\eps_v = 1/\tau$. The authors of \cite{Yang} focus exclusively on standing waves, for which they find the wave bifurcation to be subcritical, leading to the existence of solitary waves. Here, we focus instead on travelling waves that appear at the same bifurcation point under appropriate boundary conditions.

            We work with the homogeneous steady-state given by
        	\begin{align*}
        		\mbf P = \left(u^*, v^*, w^*\right),
        	\end{align*}
            where
        	\begin{align*}
        		u^* &= v^* = w^* = \frac{M}{3 \sqrt[3]{2}} - \frac{\sqrt[3]{2} \, \left(k_v + k_w - 2\right)}{M},
        		\\
        		M &= \left[\sqrt{108 \left(k_v + k_w - 2\right)^3 + 729 \, k_0^2} + 27 \, k_0\right]^{1/3}, 
        	\end{align*}
            choose values of fixed parameters to be \cite{Yang}
        	\begin{align}
        		\left(k_v, \tau, D_1, D_2, D_3\right) = (10, 50, 1, 1, 60), \label{parametervals}
        	\end{align}
            and allow $(k_0, k_w)\in [-12, -6]\times [1, 15]$ to be bifurcation parameters. 

            Again, in this case, we find three different kinds of bifurcation at $\mbf P$ --- a global (i.e.~wave-number $k = 0$) Hopf, Turing, and wave bifurcations --- see Fig.~\ref{fig:THbifurcationdiagram}.

        	\begin{figure}[tbp]
        		\centering
        		\begin{tikzpicture}
        		    \node (image) at (0,0) {
        		    \includegraphics[scale=0.5]{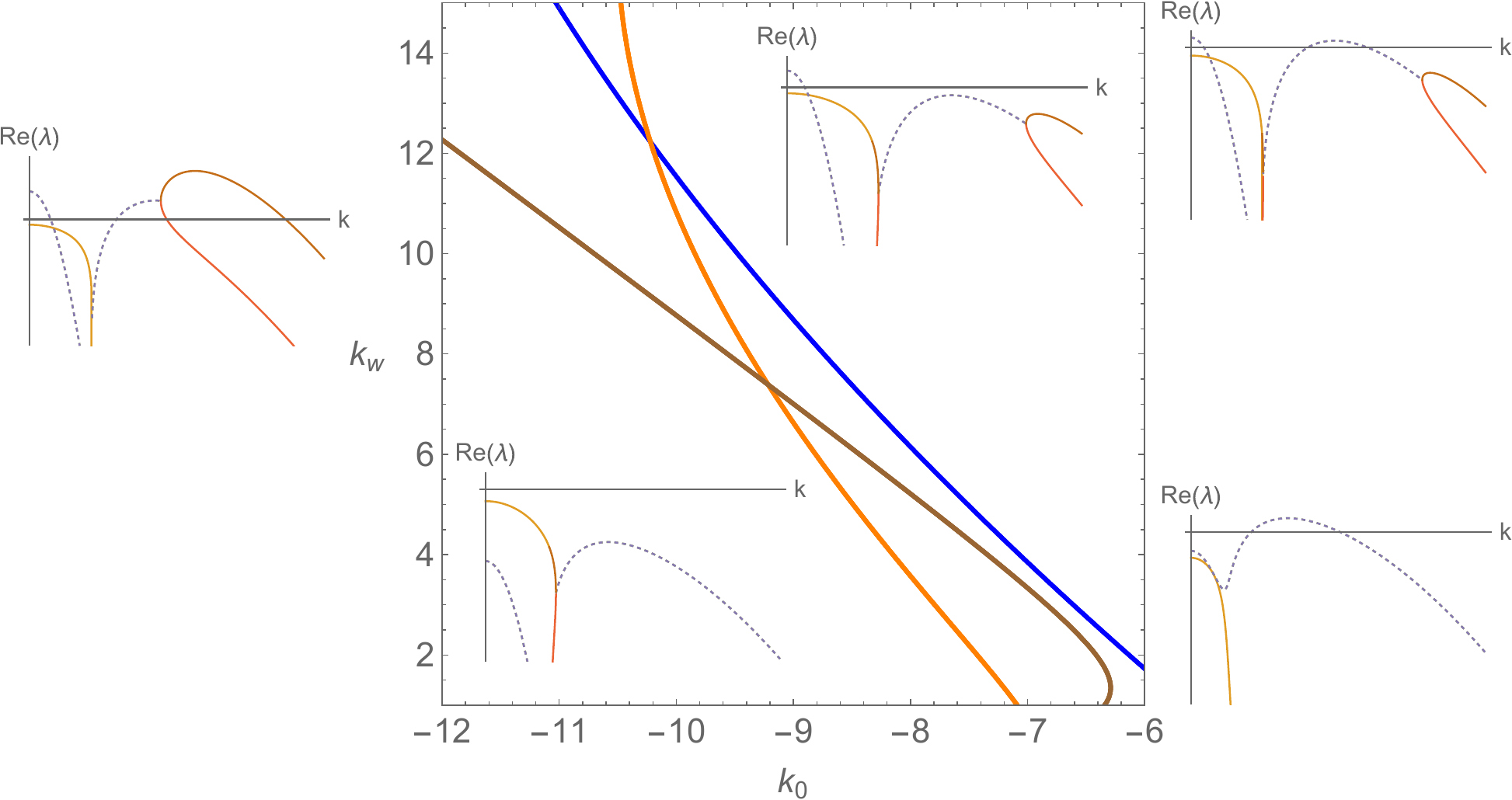}
        		    };
                    \node (text) at (2.25, - 2.0) {\circled{1}};
        		    \node (text) at (6.9, - 1.0) {\circled{1}};
        		    \node (text) at (0.43, 0.35) {\footnotesize \circled{2}};
                    \node (text) at (7.5, 4.2) {\circled{2}};
        		    \node (text) at (- 2.2, 2.9) {\circled{3}};
                    \node (text) at (- 5.2, 2.7) {\circled{3}};
        		\end{tikzpicture}
        		\caption{Bifurcation diagram of system \eqref{eq:fhn1}--\eqref{eq:fhn3} with $f_3\left(u, k_0\right) = k_0 + 2 \, u - u^3$, $a_0 = a_w = 0$, $a_v = a_w = 1$, and $\eps_v = \frac{1}{\tau}$ around $\mbf P$, with the parameter values indicated in \eqref{parametervals}. The brown line is a supercritical Hopf bifurcation of the steady-state, the blue curve represents a subcritical Turing bifurcation, while the orange curve represents a supercritical wave bifurcation according to the sign of $\Re\left(C_1^{[3, 0]}\right)$.}
        		\label{fig:THbifurcationdiagram}
        	\end{figure}

            Looking at Fig.~\ref{fig:THbifurcationdiagram} we observe that $\mbf P$ is stable only when $(k_0, k_w)$ belongs to the bottom-left region. It can become unstable either through a wave bifurcation when moving the parameters to Region 1, or a Hopf bifurcation when moving them to Region 3. Another option is to go through the codimension-two bifurcation point directly to Region 2. However, we find that there is no small amplitude travelling wave orbit in Region 1 beyond the wave bifurcation; for reasons that are explained in Fig. \ref{fig:C3Yang1}. 

            \begin{figure}
                \centering
                \begin{subfigure}[b]{0.5\textwidth}
                    \centering
                    \includegraphics[width = \textwidth]{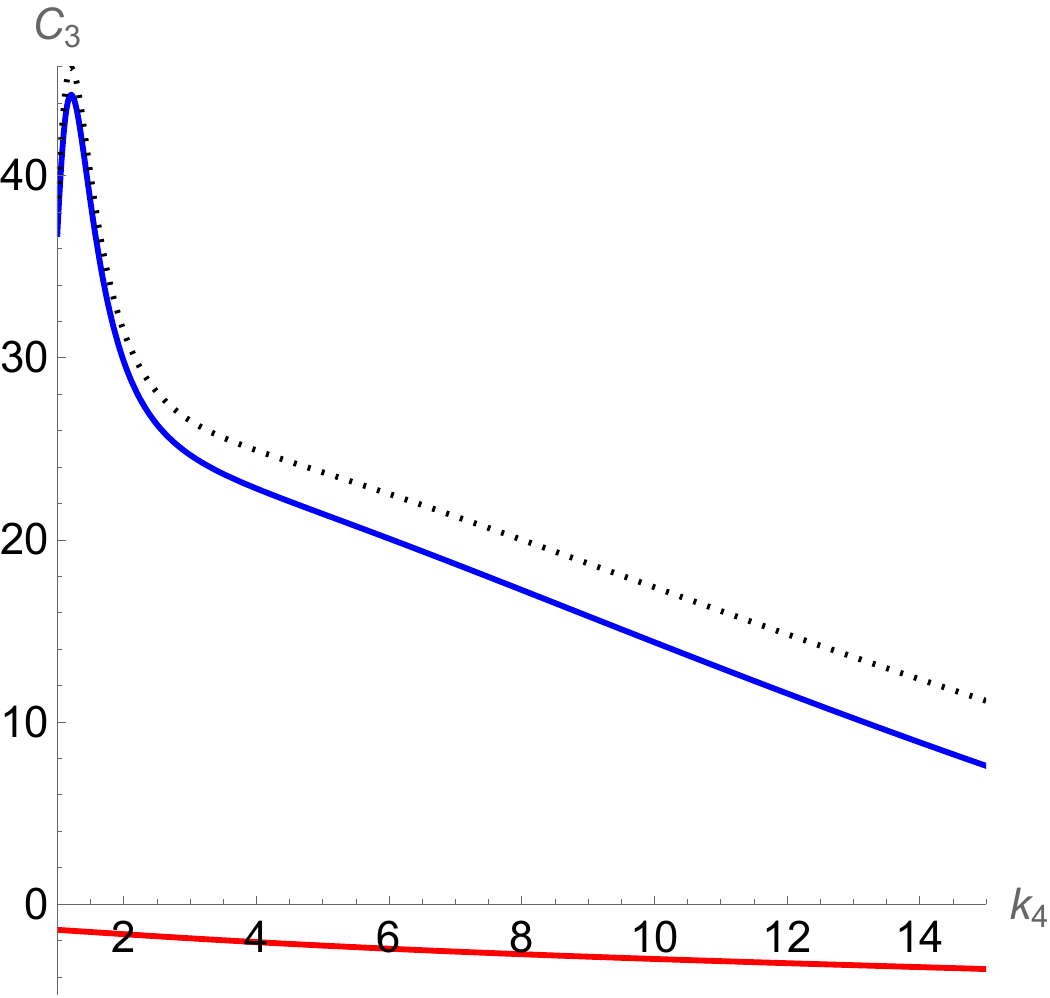}
                    \caption{}
                \end{subfigure}
                \\
                \begin{subfigure}[b]{0.49\textwidth}
                    \centering
                    \includegraphics[width = \textwidth]{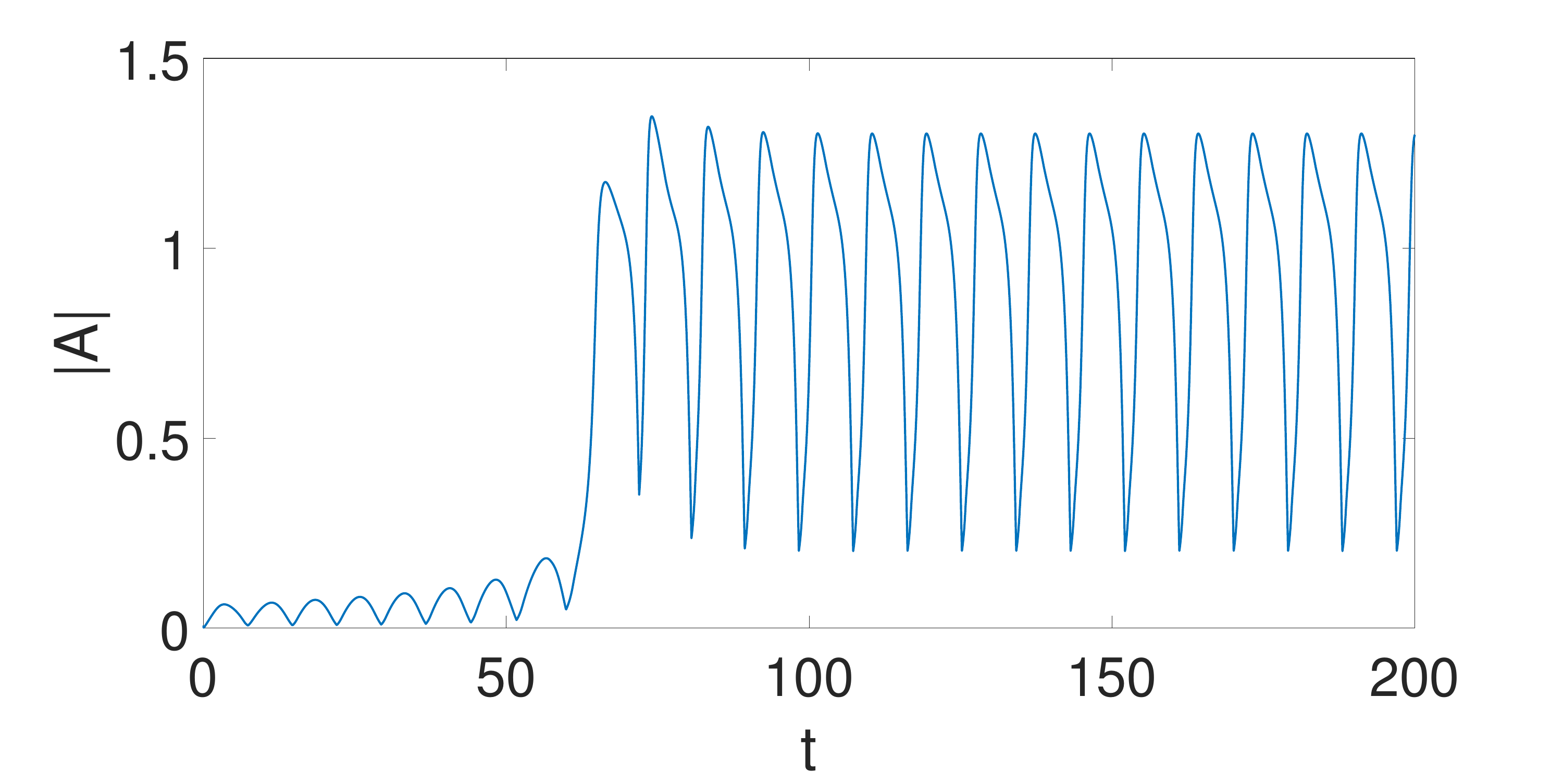}
                    \caption{}
                \end{subfigure}
                \hfill
                \begin{subfigure}[b]{0.49\textwidth}
                    \centering
                    \includegraphics[width = \textwidth]{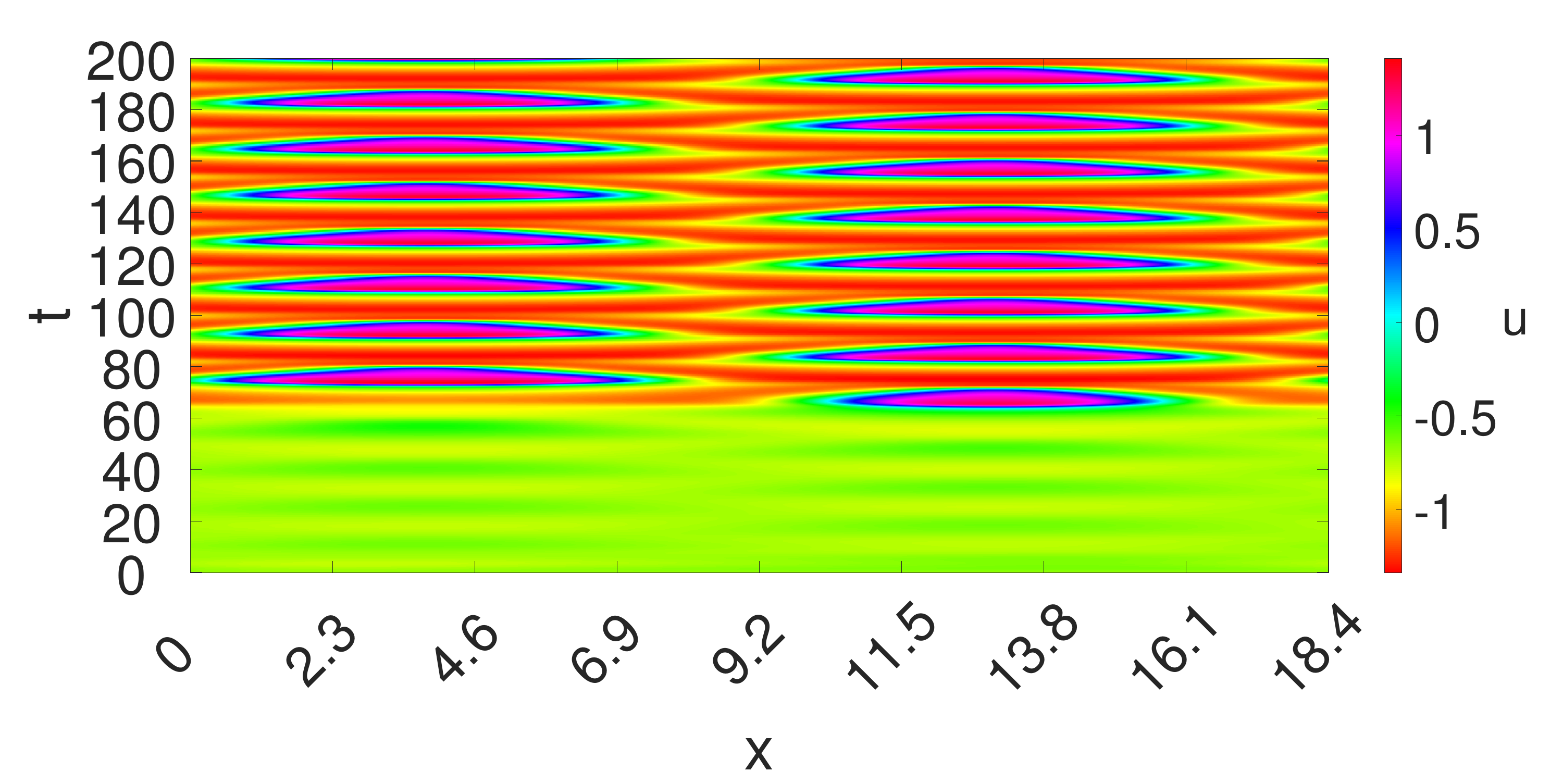}
                    \caption{}
                \end{subfigure}
                \caption{(a) Third-order coefficients along the wave bifurcation curve shown in Fig.~\ref{fig:THbifurcationdiagram}. Colors have the same meaning as in Figure \ref{fig:C3FHN}(a). (b) Evolution of the amplitude $|A|$ in normal-form co-ordinates, and $u$-component of the solution starting from the low-amplitude traveling wave near the wave bifurcation, for $k_0, k_w)) = (- 7.45, 2.2)$ for
                $L=18.402345$ with periodic boundary conditions. (c) Color plot showing the shape of the solution at each time shown in (b).
                }
                \label{fig:C3Yang1}
            \end{figure}
	
            Figure \ref{fig:C3Yang1}(a) shows more details of the criticality of the wave bifurcation by showing the evolution of the appropriate third-order normal-form coefficients along the bifurcation curve depicted in Fig.~\ref{fig:THbifurcationdiagram}. Referring to the bifurcation diagram in Fig.~\ref{fig:knobloch2par} we note that the signs of $\Re\left(C_1^{[3, 0]}\right)$ and $\Re\left(C_1^{[3, 0]} + C_1^{[1, 2]}\right)$ imply that while travelling waves bifurcate supercritically, they are nevertheless unstable. This is confirmed by the numerical simulation in Fig.\ref{fig:C3Yang1}(b) which shows the evolution of the initial condition of the supercritical travelling wave predicted by the normal form, using periodic boundary conditions. Here, we see this unstable state clearly develops an instability that eventually leads to a jump to a large amplitude wave. The time-trace in Fig.\ref{fig:C3Yang1}(c) shows that this large-amplitude state is actually a standing wave, not a travelling wave. Only the $u$-component is shown, the other two components behave similarly.
 
        \paragraph{Codimension-two Bautin bifurcation.}
            We consider again the Yang {\em et al} model but with fixed parameter values:
            \begin{align}
                \left(k_w, \tau, D_1, D_2, D_3\right) = (3, 10, 1, 1, 60). \label{thirdparval}
            \end{align}
            For these values, we now find the two-parameter bifurcation diagram shown in Figure \ref{fig:SW-critchange}. The figure indicates that there is a change of criticality of the Turing bifurcation and the wave bifurcation for these values. The change in criticality of the Turing bifurcation was computed using the methods in \cite{TOMS}. Our focus here is instead the change in criticality of the wave bifurcation. 
     
            Figure \ref{fig:C3Yang2} shows the variation of the cubic coefficients of the normal form along the wave bifurcation curve. Here, it is clear that the codimension-two point corresponds to a change of sign of  $\Re(C_1^{[3, 0]} + C_1^{[1, 2]})$, which shows that the standing wave changes from being sub- to super-critical. The specific codimension-two point occurs at (-6.685534, 8.67174).
            \begin{figure}
                \centering
                \begin{tikzpicture}
        		    \node (image) at (0,0) {
        		    \includegraphics[width = \textwidth]{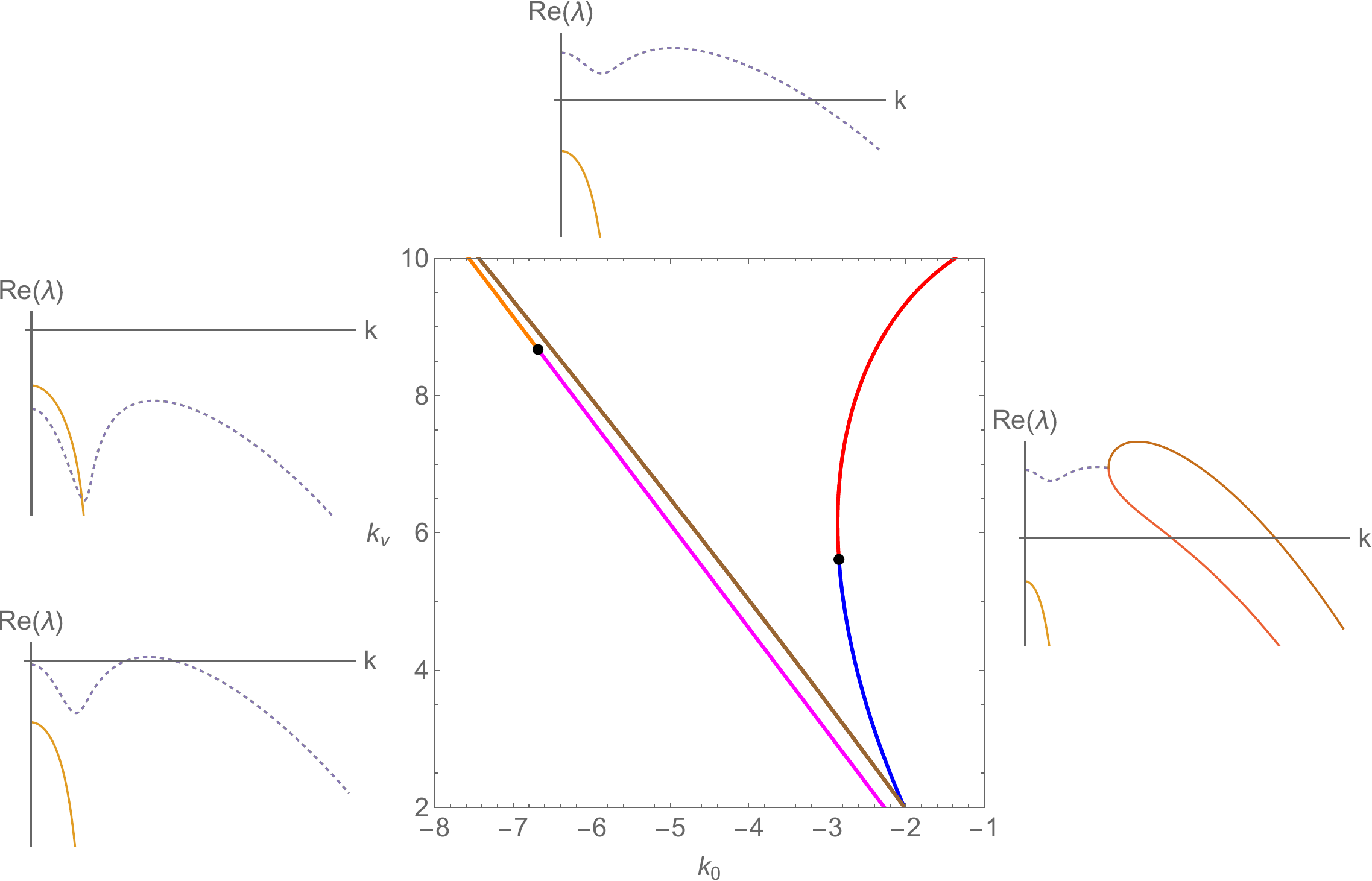}
        		    };
                    \node (text) at (- 5.25, 1.8) {\circled{1}};
        		    \node (text) at (- 1.3, - 2.0) {\circled{1}};
        		    \node (text) at (- 5.25, - 2.2) {\circled{2}};
                    \node (text) at (0.0, 0.2) {\footnotesize \circled{2}};
        		    \node (text) at (1.0, 5.0) {\circled{3}};
                    \node (text) at (0.8, 1.0) {\circled{3}};
                    \node (text) at (6.8, 0.0) {\circled{4}};
                    \node (text) at (2.7, - 1.0) {\circled{4}};
                    \draw [-stealth](- 0.2, 0.05) -- (- 0.7, - 0.3);
        		\end{tikzpicture}
                \caption{Bifurcation curves of model \eqref{eq:fhn1}--\eqref{eq:fhn3} when $f_3\left(u, k_0\right) = k_0 + 2 \, u - u^3$, $a_0 = a_w = 0$, $a_v = a_w = 1$, and $\eps_v = \frac{1}{\tau}$ around $\mbf P$, for the parameter values given in \eqref{thirdparval}. The colors have the same meaning as in Figure \ref{fig:THbifurcationdiagram}, except that the magenta (resp.~red) curve represents a subcritical wave (resp.~supercritical Turing) bifurcation. Also, the black dots represent codimension-two points where the criticality of the bifurcation changes.}
                \label{fig:SW-critchange}
            \end{figure}
            \begin{figure}
                \centering
                \begin{subfigure}[b]{0.4\textwidth}
                    \centering
                    \includegraphics[width = \textwidth]{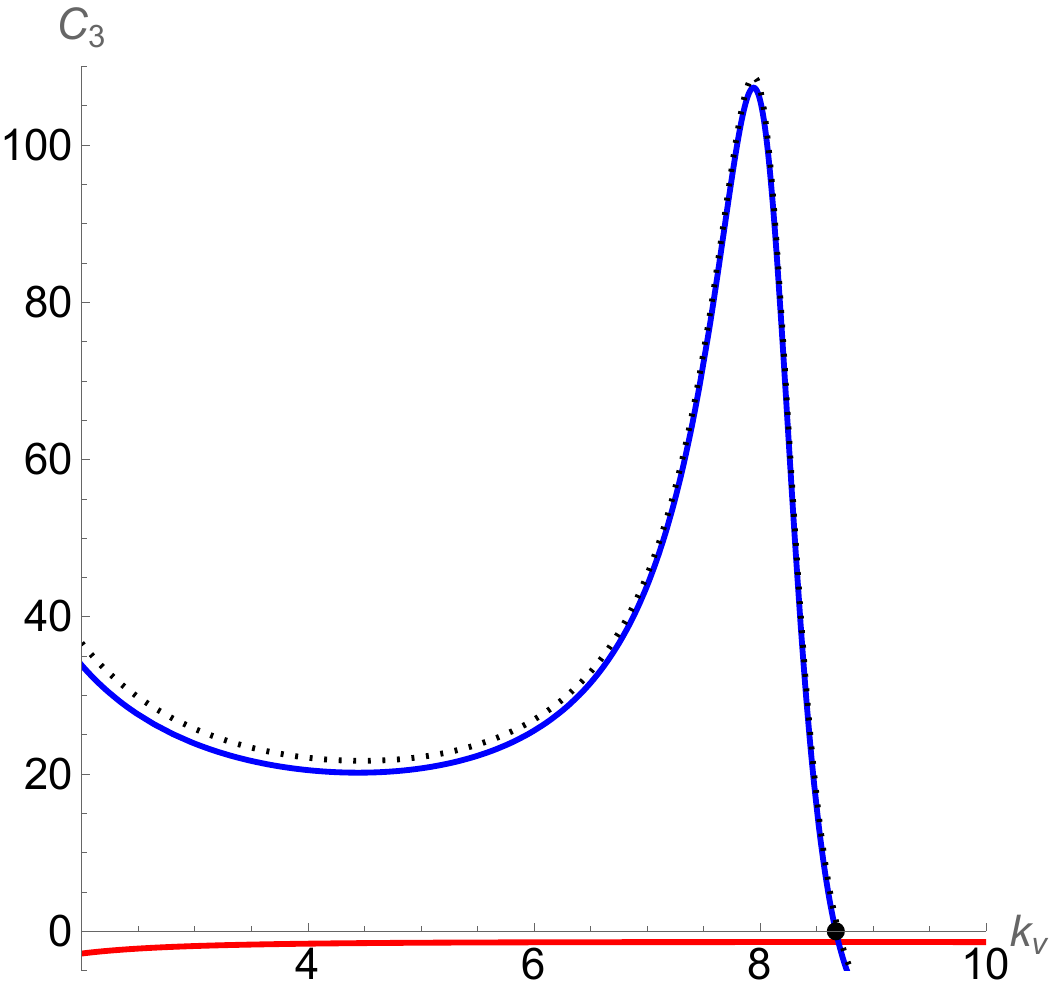}
                    \caption{}
                \end{subfigure}
                \hfill
                \begin{subfigure}[b]{0.4\textwidth}
                    \centering
                    \includegraphics[width = \textwidth]{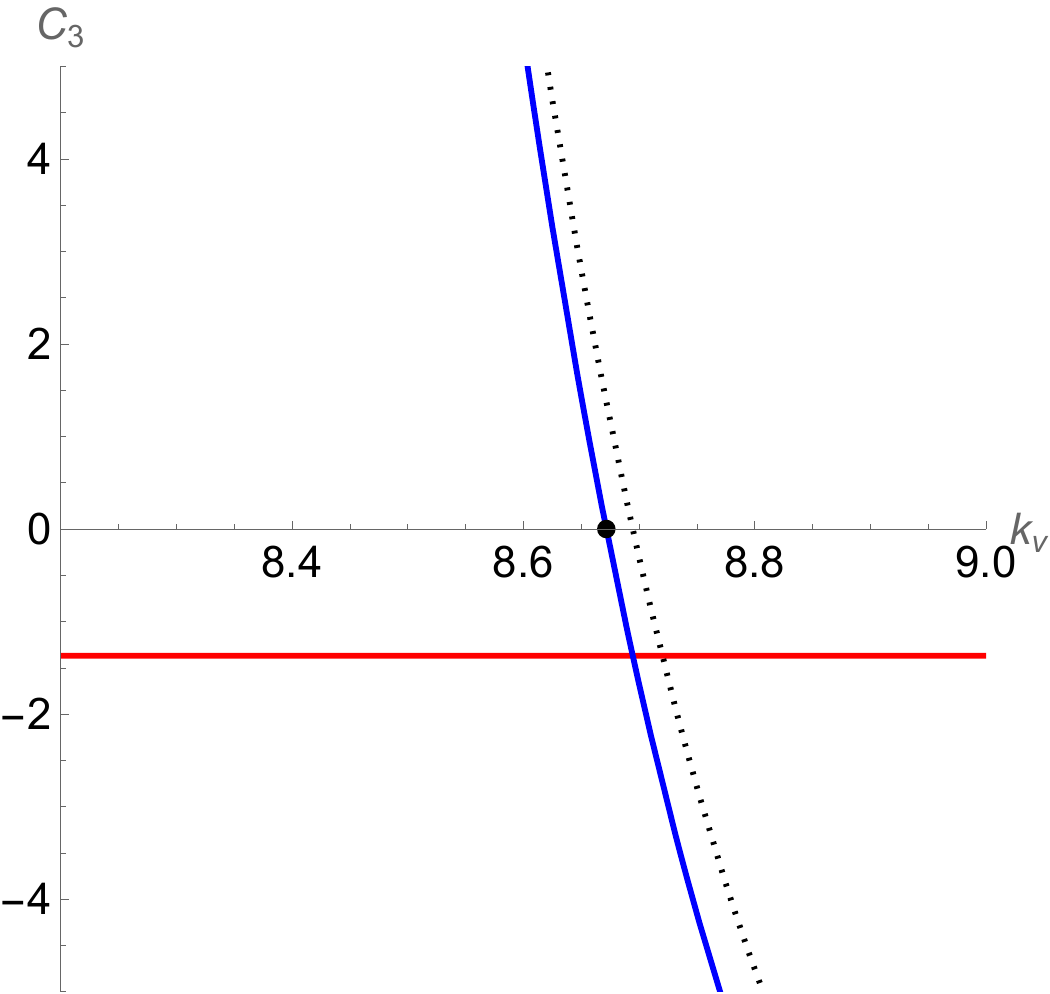}
                    \caption{}
                \end{subfigure}
                \caption{(a) Third-order coefficients along the wave bifurcation curve shown in Figure \ref{fig:SW-critchange}. (b) Zoomed-in version of the third-order coefficients around the codimension-two point where $\Re\left(C_1^{[3, 0]} + C_1^{[1, 2]}\right)$ changes sign. Colors have the same meaning as in Figure \ref{fig:C3FHN}(a).}
                \label{fig:C3Yang2}
            \end{figure}

            Figure \ref{fig:loglog} shows a zoom of the two-parameter bifurcation diagram close to the Bautin point. The inset to the figure shows the agreement between the numerically computed standing wave close to the bifurcation point and that predicted by the roots of \eqref{eq:rootR1} for the normal form. Note the good agreement between the two curves. Also plotted in the figure is the normal-form prediction \eqref{eq:standingfold} for the location of the fold in two parameters, up to order five, together with the numerical evaluation of the fold points. Note that the normal-form coefficients are evaluated only at the codimension-two point, so the theory should only predict the leading-order term in a Taylor expansion of the fold curve. Such an agreement can be observed in panels \ref{fig:loglog}(b), (c) which plot the difference between the fold and the wave bifurcation curves on linear and log-log scales respectively.

            \begin{figure}
                \centering
                \begin{subfigure}[c]{0.45\textwidth}
                    \centering
                    \includegraphics[width = \textwidth]{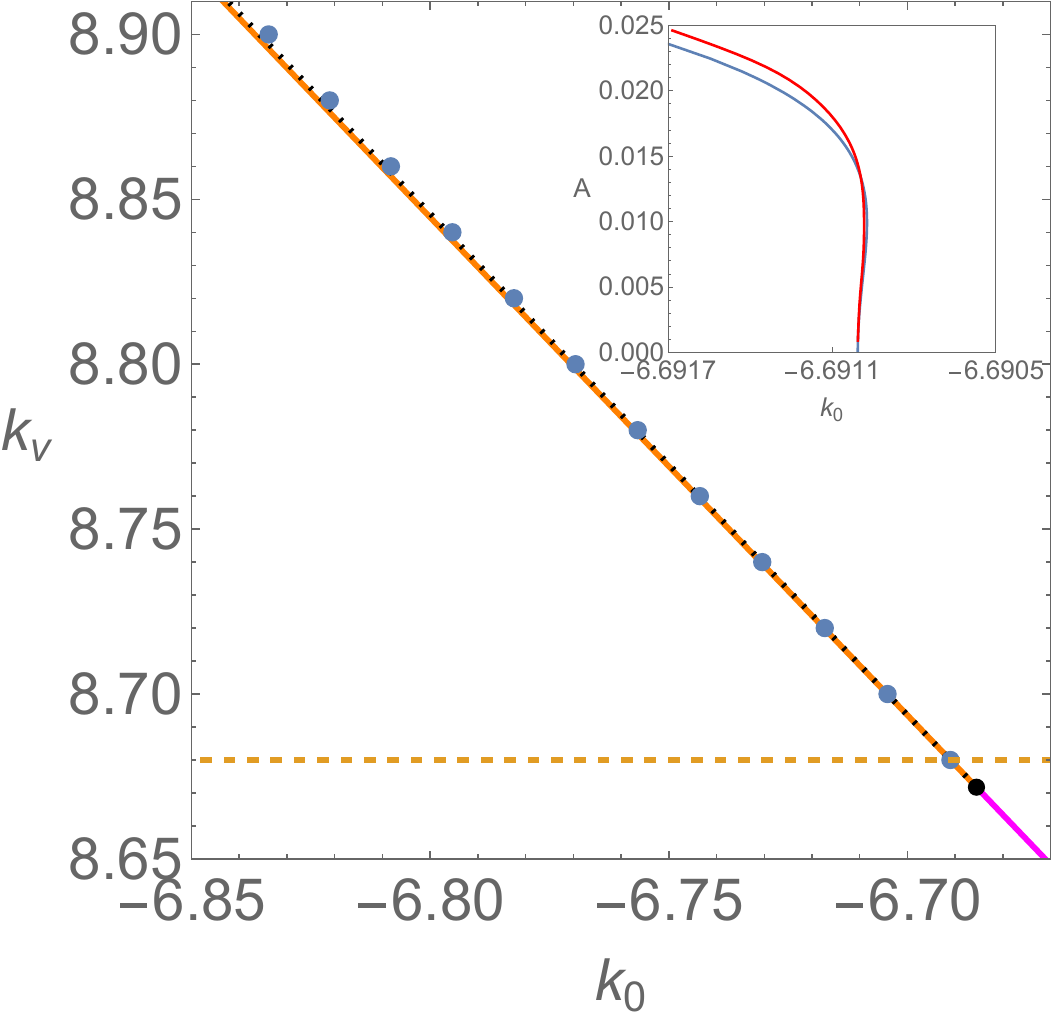}
                    \caption{}
                \end{subfigure}
                \begin{subfigure}[c]{0.45\textwidth}
                    \centering
                    \includegraphics[width = \textwidth]{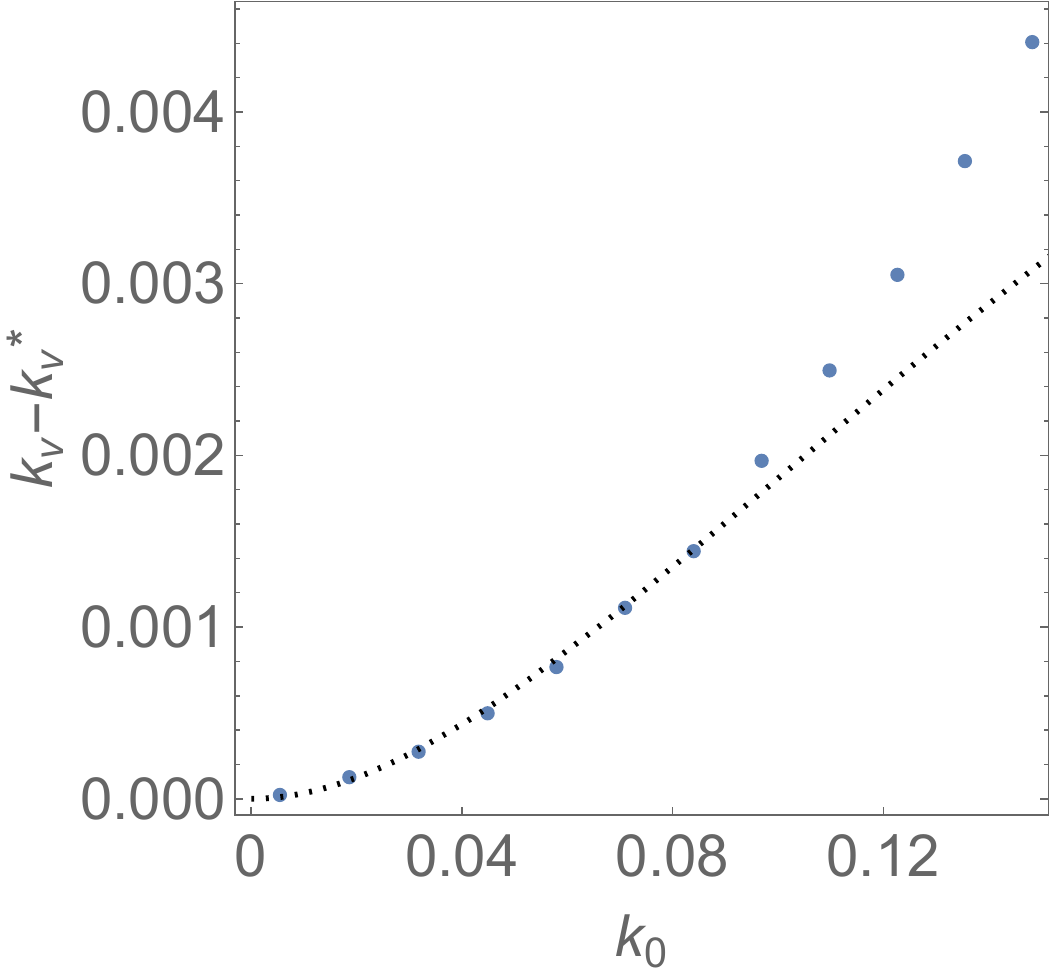}
                    \caption{}
                \end{subfigure}           
    
                \begin{subfigure}[c]{0.45\textwidth}
                    \centering
                    \includegraphics[width = \textwidth]{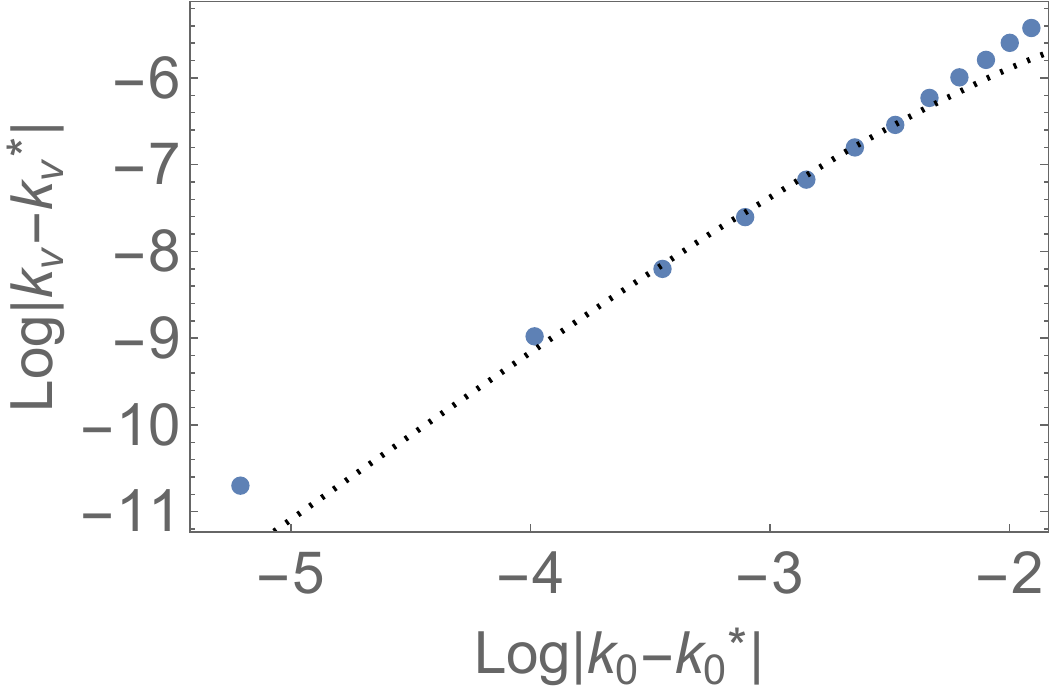}
                    \caption{}
                \end{subfigure}
                \caption{(a) Zoom of bifurcation diagram in Figure \ref{fig:SW-critchange} close to  the codimension-two Bautin bifurcation point $\left(k_0^*, k_v^*\right) = (- 6.68553, 8.67174)$. The dotted line corresponds to the fold curve predicted by \eqref{eq:standingfold}, while the blue circles are fold points obtained through numerical continuation. The inset shows the one-parameter bifurcation diagram of standing waves in $k_0$ close to the bifurcation point, at the value of $k_v$ depicted by the yellow dotted line (red curve numerics, blue normal-form prediction). (b) Same information showing the distance $k_0 - k_0^*$, where $k_0^* = - 6.68553$, and the distance $k_v - k_v^*$, where $k_v^*$ is the $k_v$-value of the wave bifurcation curve. (c) The same information on a log-log plot.}
                \label{fig:loglog}
            \end{figure}

            To see the shape of the bifurcating patterns, we turn to direct numerical integration. First, we integrate the system fixing the values $\left(k_0, k_v\right) = (- 7.36, 9.7)$, with $L = 17.046381$, using a random perturbation around $\mbf P$ as an initial condition, and considering periodic boundary conditions; the solution is found to converge to the stable low-amplitude traveling-wave solution shown for $u$ in Figures \ref{fig:u-TW}. On the other hand, a similar integration but with Neumann boundary conditions, results in the the stable standing-wave pattern shown in Figure \ref{fig:u-SW}. In each figure, we show only the $u$-component; patterns for the other variables look similar.
            \begin{figure}
                \centering
                \begin{subfigure}[b]{0.49\textwidth}
            	    \centering
            		\includegraphics[width = \textwidth]{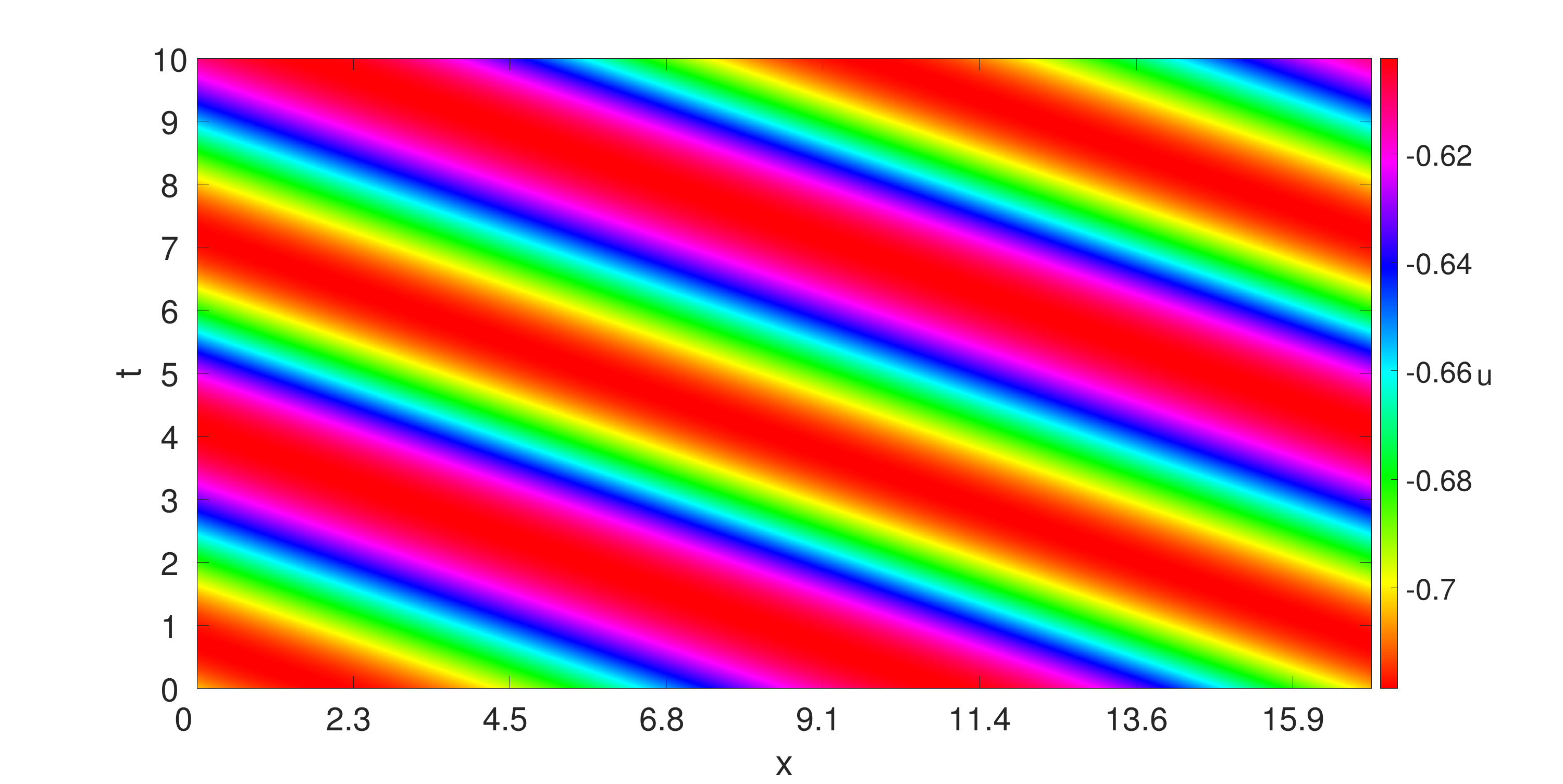}
                    \caption{}
                    \label{fig:u-TW}
                \end{subfigure}
                \hfill
                \begin{subfigure}[b]{0.49\textwidth}
            	    \centering
                    \includegraphics[width = \textwidth]{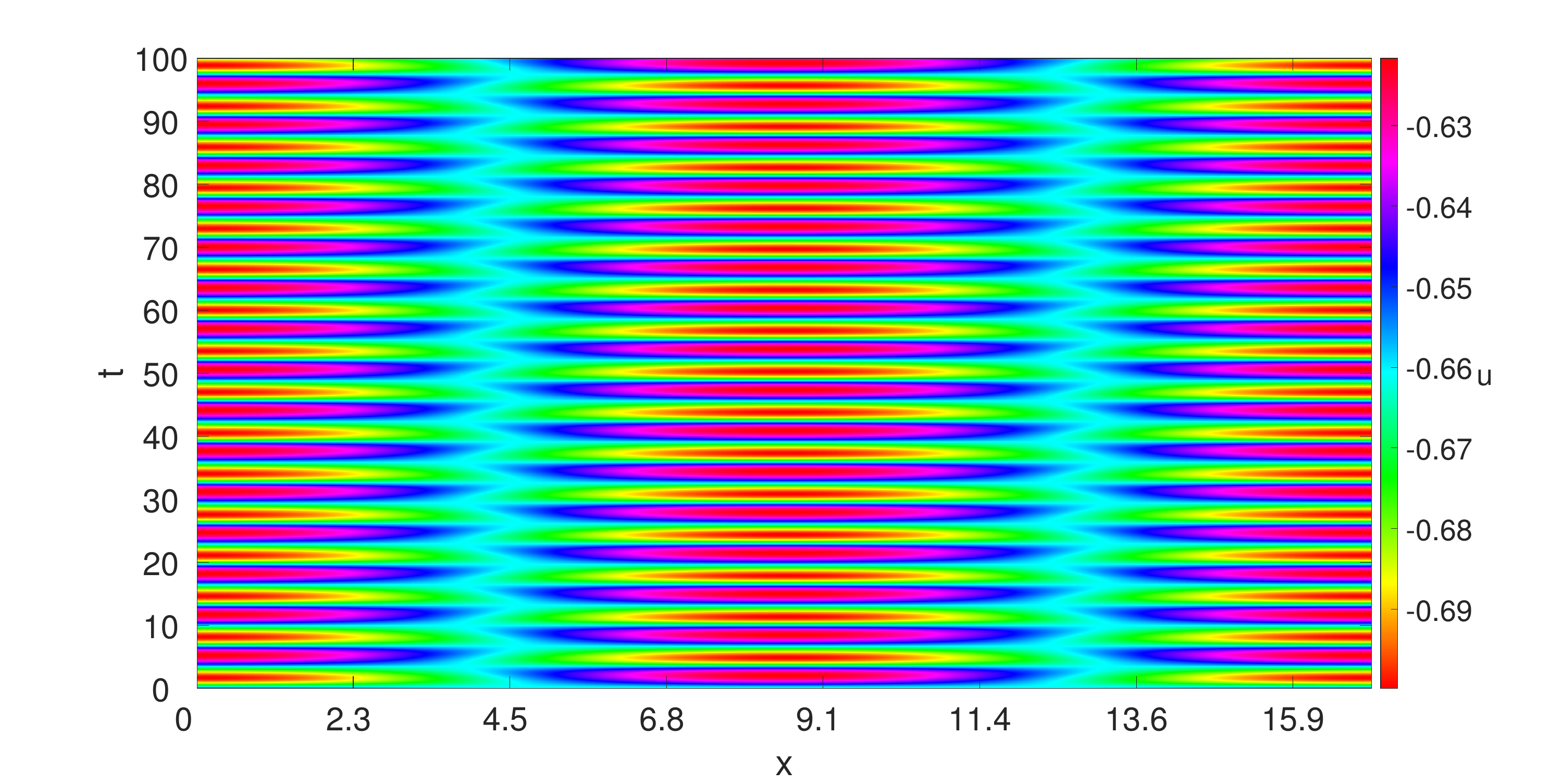}
                    \caption{}
                    \label{fig:u-SW}
     	      \end{subfigure}
                \caption{Stable low-amplitude waves around $\mbf P$ for the system \eqref{eq:fhn1}--\eqref{eq:fhn3}, with $f_3\left(u, k_0\right) = k_0 + 2 \, u - u^3$, $a_0 = a_w = 0$, $a_v = a_w = 1$, and $\eps_v = \frac{1}{\tau}$, for the parameter values given by \eqref{thirdparval}, and $\left(k_0, k_v\right) = (-7.36, 9.7)$ obtained through direct numerical simulation. (a) Travelling wave with periodic boundary conditions. (b) Standing wave with Neumann boundary conditions. }
            \end{figure}

        \subsection{A five-component example}    
            Finally, to illustrate how our analysis works in systems with an arbitrary number of components $n\geq 3$, we present results for a five-component model studied by Yang et al \cite{Yang2}. Their interest was to model a chemical reactor that is happening in two layers with different pairs of reactants in each, (whose concentrations are represented respectively by $(u, v)$ and $(y, z)$ below, in our notation). There is a within-layer diffusion of each reactant as well as a separate species (with concentration $w$) that diffuses in the interface between the two layers.
        
            The concentrations of the five reactants in one spatial dimension are assumed \cite{Yang2} (rewritten in a notation that is consistent with the present paper) to obey the reaction-diffusion system \cite{Yang2}
            \begin{align}
                \frac{\partial u}{\partial t} &= \frac{1}{\varepsilon_1} \left(u - u^2 - f_1 \, v \, \frac{u - q_1}{u + q_1}\right) - \frac{u - w}{\delta_1} + D_1 \, \nabla^2 u, \notag
                \\
                \frac{\partial v}{\partial t} &= u - v + D_2 \, \nabla^2 v, \notag
                \\
                \frac{\partial w}{\partial t} &= \frac{u - w}{\delta_1} + \frac{y - w}{\delta_2} + D_3 \, \nabla^2 w, \label{2CL}
                \\
                \frac{\partial y}{\partial t} &= \frac{1}{\varepsilon_2}\left(y - y^2 - f_2 \, z \, \frac{y - q_2}{y + q_2}\right) - \frac{y - w}{\delta_2} + D_4 \, \nabla^2 y, \notag
                \\
                \frac{\partial z}{\partial t} &= y - z + D_5 \, \nabla^2 z. \notag
            \end{align}
            The parameters $\eps_{1, 2}$ and $\delta_{1, 2}$ represent timescales of reactions; $D_i$, $i = 1, \ldots, 5$ are diffusion constants; $f_{1, 2}$ represent the ratio of cross-catalytic and autocatalytic saturation terms; and $q_{1, 2}$ code details of the cross-catalytic saturation. Each of these parameters is assumed to be positive. This system has, at most, nine homogeneous steady states. They are the solutions to a high-degree polynomial equation and cannot be obtained in a closed form. Nevertheless, following \cite{Yang2} we set default values
            \begin{equation}
                q_1 = q_2 = 0.01, \qquad \delta_i = 2 \, \varepsilon_i, \quad i = 1, 2 \label{5D_default1}
            \end{equation}
            for simplicity.
    
            Furthermore, to be concrete we also take one of the parameter sets considered in \cite{Yang2}
            \begin{align}
                \left(\varepsilon_1, f_2, \varepsilon_2, D_1, D_2, D_3, D_5\right) = (0.215, 0.65, 0.5, 0.1, 0.1, 0.1, 100.0), \label{2CLparval}
            \end{align}
            and treat $f_1$ and $D_4$ as free parameters. In this setting, we focus on the steady state with coordinates
            \begin{align*}
                (u, v, w, y, z) = (0.420482, 0.420482, 0.414265, 0.399805, 0.399805) \quad \text{when } \left(f_1, D_4\right) = (0.6, 4).
            \end{align*}
            Figure \ref{fig:2Clbifdiag} shows the result of using the code in \cite{criticality-wave} to compute the Turing and wave bifurcation curves of the system at one particular homogeneous steady state, together with the criticality of the bifurcations.
            \begin{figure}
                \centering
                \begin{tikzpicture}
            		\node (image) at (0,0) {
            		\includegraphics[width = \textwidth]{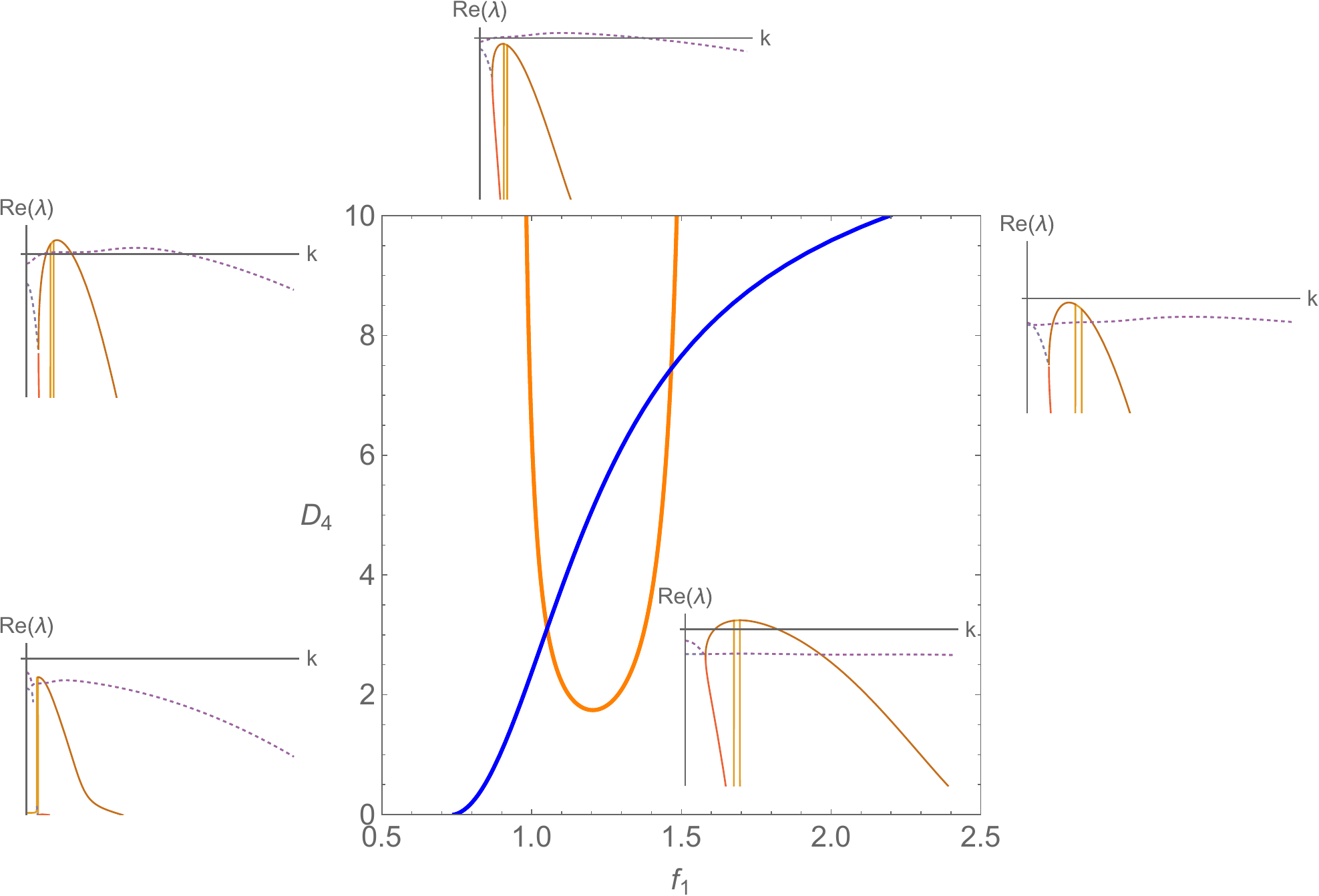}
            		};
            		\node (text) at (- 2.5, - 1.0) {\circled{1}};
                    \node (text) at (- 5.5, - 2.2) {\circled{1}};
            		\node (text) at (1.0, 2.4) {\circled{2}};
                    \node (text) at (6.5, 2.5) {\circled{2}};
            		\node (text) at (- 0.75, 1.7) {\circled{3}};
                    \node (text) at (0.0, 5.55) {\circled{3}};
            		\node (text) at (- 0.7, - 1.7) {\circled{4}};
            		\node (text) at (- 6.0, 3.0) {\circled{4}};
            	\end{tikzpicture}
                \caption{Bifurcation curves of system \eqref{2CL}, for the parameter values given by \eqref{2CLparval}. As usual, colors have the same meaning as in Figure \ref{fig:bifdiag} and the images accompanying the diagram correspond to the dispersion relations in each of the regions of the diagram.}
                \label{fig:2Clbifdiag}
            \end{figure}
            \begin{figure}
                \centering
                \includegraphics[width = 0.4\textwidth]{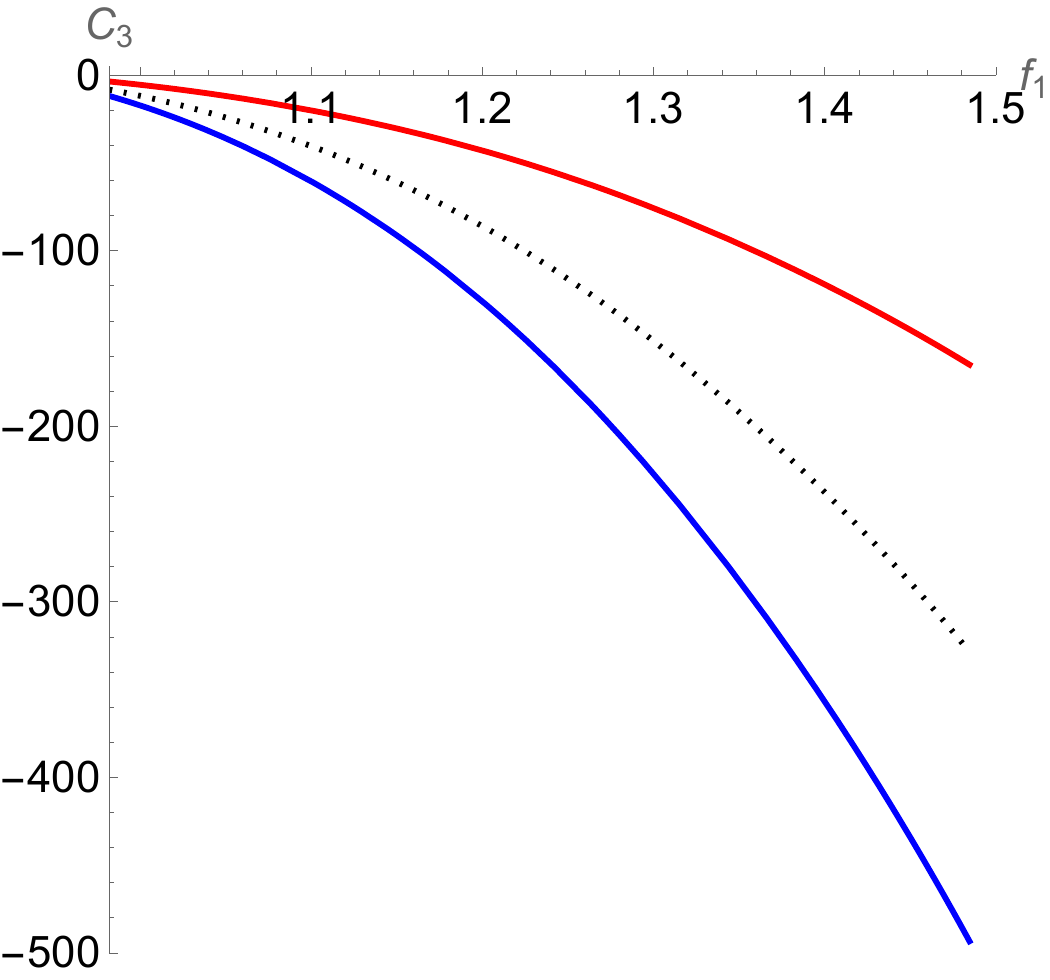}
                \caption{Third-order coefficients along the wave bifurcation curve shown in Figure \ref{fig:2Clbifdiag}. Colors have the same meaning as in Figure \ref{fig:C3Yang1}.}
                \label{fig:C32CL}
            \end{figure}
    
            So, as the authors in \cite{Yang2} suggested, we can indeed find a curve in the parameter plane of both Turing and wave bifurcations. The Turing bifurcation turns out to be subcritical --- see Fig.~\ref{fig:C32CL} --- for all parameter values considered in the diagram, the wave bifurcation is supercritical for both travelling and standing waves. Furthermore, for all the parameters along this curve, we note $C_1^{[1, 2]} < C_1^{[3, 0]}$, which means that according to Fig.~\ref{fig:knobloch2par}, the traveling wave is the stable solution with periodic boundary conditions. 
    
            To show how the theory we have developed applies to this system, we present the results of the integration of the system from a random perturbation to the homogeneous steady state when $\left(f_1, D_4\right) = (1, 9)$, which is in Region 3. If we consider periodic boundary conditions, we get the pattern shown in Figure \ref{fig:2CL-SW-uvw}(a). Only the $u$-component is depicted; the pattern for the other components is qualitatively similar.
            
            \begin{figure}
                \centering
              \begin{subfigure}[b]{0.49\textwidth}
                    \centering
                    \includegraphics[width = \textwidth]{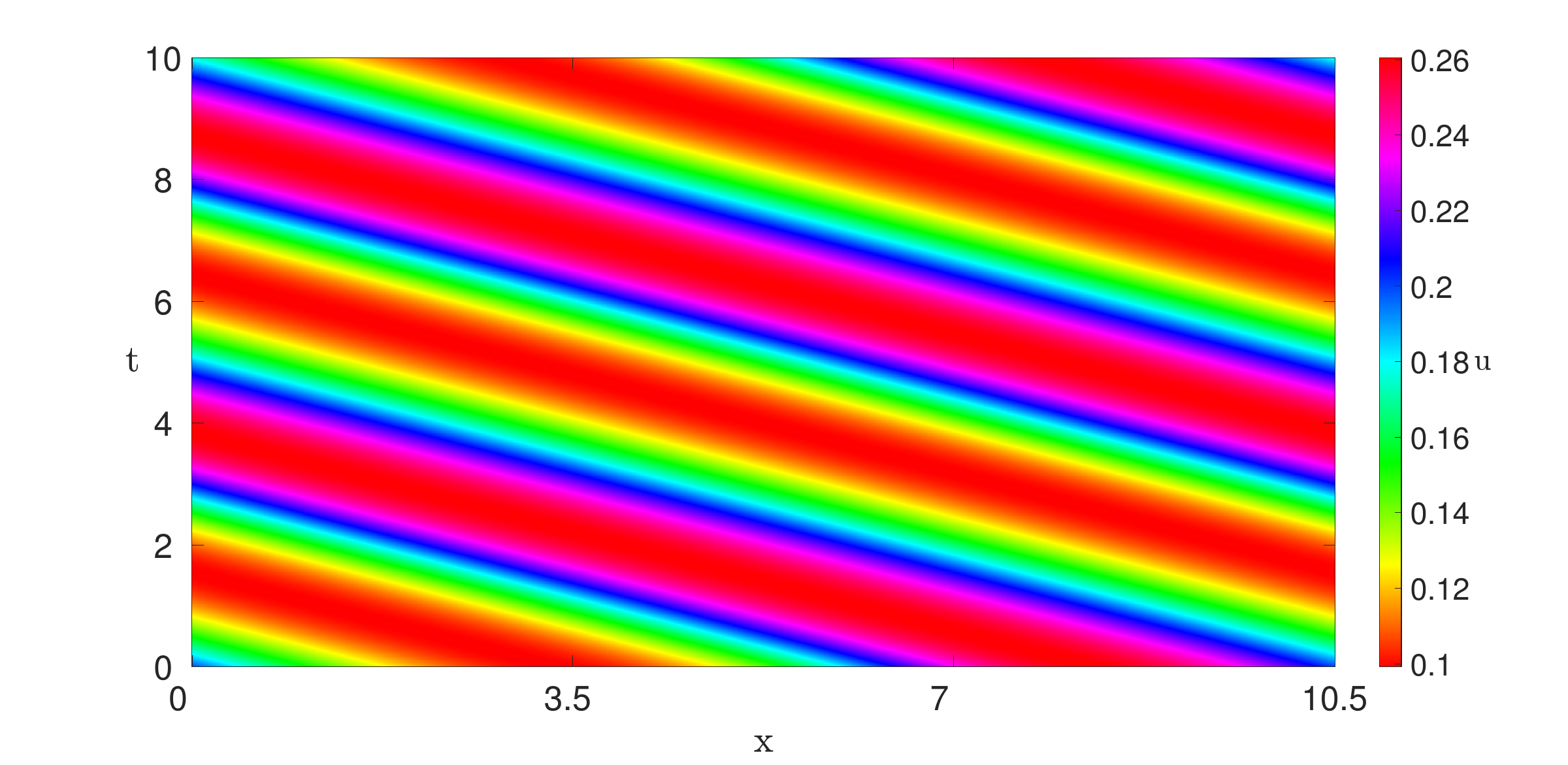}
                    \caption{}
                \end{subfigure}
                \hfill
                \begin{subfigure}[b]{0.49\textwidth}
                    \centering
                    \includegraphics[width = \textwidth]{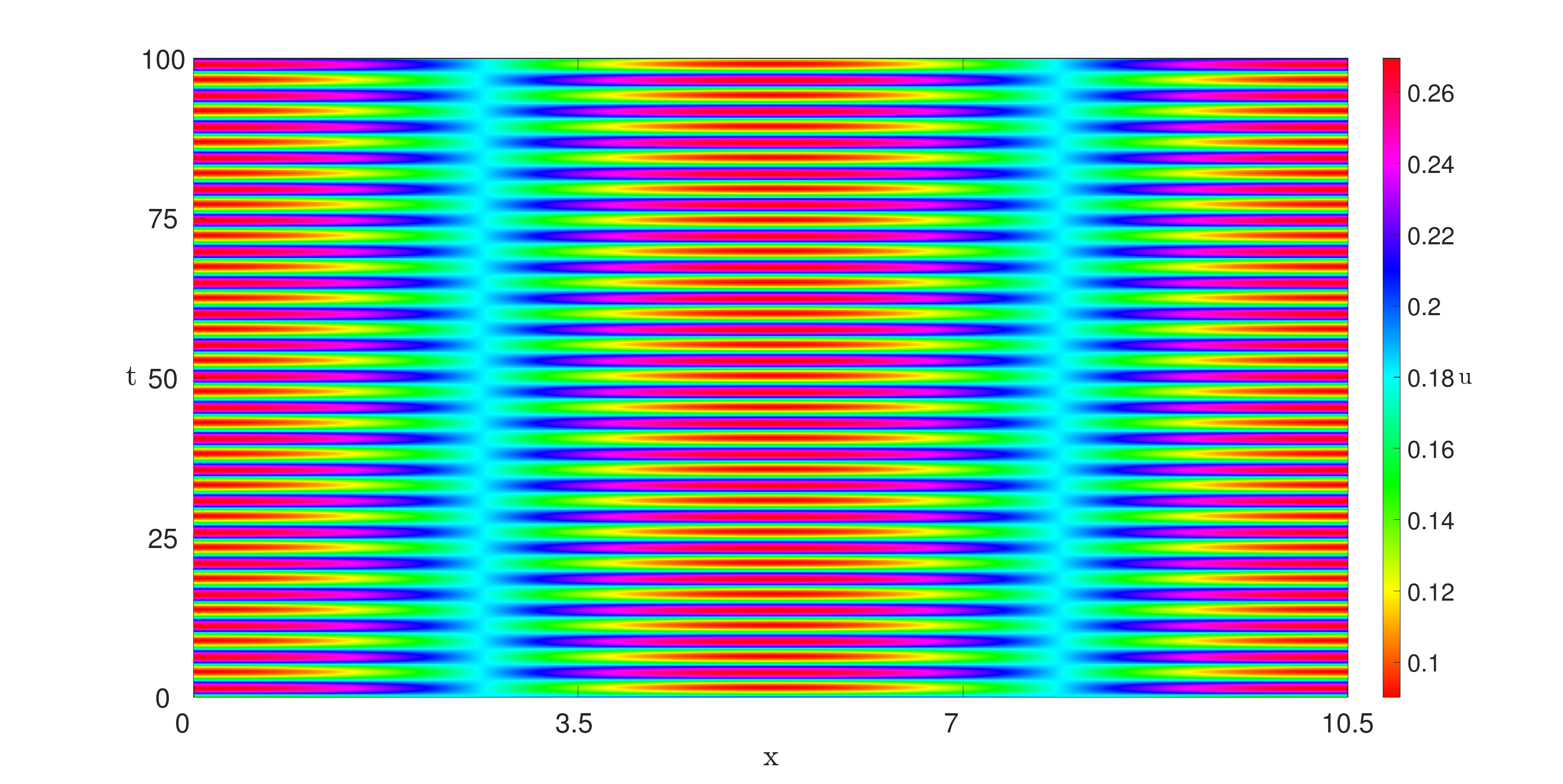}
                    \caption{}
                \end{subfigure}
                      \caption{Stable small-amplitude waves close to the homogeneous steady state of model \eqref{2CL} for parameter values \eqref{2CLparval} together with $\left(f_1, D_4\right) = (1, 9)$. (a) Travelling waves with periodic boundary conditions; (b) standing waves with Neumann boundary conditions.}
                \label{fig:2CL-SW-uvw}
            \end{figure}
            On the other hand, if we take the same parameter values but impose homogeneous Neumann boundary conditions, we get the pattern shown for $u$ in Figure \ref{fig:2CL-SW-uvw}(b).
 
\section{Discussion} \label{sec:concl}
    This paper has conducted a systematic weakly nonlinear analysis of wave bifurcations in systems of reaction-diffusion equations. In particular, we have shown how to construct the appropriate normal form for a general system of $n \geq 3$ reaction-diffusion equations. Such bifurcations have previously been studied in the context of fluid mechanics, but we are not aware of such a systematic treatment in the reaction-diffusion-system context. Understanding the onset of either travelling or standing waves can be important in many applications in biological excitable systems such as neural, cardiac, or calcium-wave models. 

    The present work is restricted to just one dimension in space, and much more needs to be done to consider analogous results in two or three spatial dimensions. We mention the literature on the onset of spatio-temporal spiral waves in that context, see e.g.~\cite{Delnitz,Scheel,Holden}. 

    One of the motivations of this work is being able to accurately identify the codimension-two points where wave bifurcations change from super- to sub-critical. In the case of Turing bifurcations, such points are important as points at which there is the birth of a Maxwell pinning region inside which we expect to find localised patterns due to so-called homoclininc snaking, see e.g.\cite{Brena}. Recently, evidence was found in 2-component reaction-diffusion systems that localised temporally periodic structures can also undergo snaking \cite{Fahadbreathing}. In that case, these snaking branches of localised {\em breathers} arise due to a secondary Hopf bifurcation along a homoclinic snaking curve. However, in 3-component systems we expect such structures to arise directly from the pinning region close to the codimension-two Bautin bifurcation studied in this paper. This may also explain the localised wave trains seen numerically in the neighbourbood of the sub-critical wave bifurcation in the first of our two examples, which was the main focus of the study in \cite{YochelisKnobloch}. 

    The exploration of this conjecture about localised structures is left to future work. The possibility of such localised structures is just one part of the full unfolding of the dynamics of the normal form, and its beyond-all-orders generalisations are left for future work. Other future work will consider unfoldings of the other codimension-two bifurcations we identified in Fig.~\ref{fig:bifdiag}, namely Hopf-Turing, Hopf-wave, and Turing-wave bifurcation points.

\subsection*{Acknowledgments}
    The work of EV-S was supported by ANID, Beca Chile Doctorado en el extranjero, number 72210071. The authors would also like to acknowledge useful conversations with Edgar Knobloch, Arik Yochelis, Hannes Uecker, Andrew Krause and Alastair Rucklidge. 

\section*{References}

\bibliographystyle{plain} 
\bibliography{refs}

\appendix
    \section{Expressions for normal-form coefficients up to order 5} \label{ap:order5}
    
	\paragraph{Fourth order} At fourth order, \eqref{eq:geneq} is given by
		\begin{multline*}
			\partial_A \mbf u^{[1,0]}\left(h_1^{[4,0]} + h_1^{[3,1]} + h_1^{[2,2]} + h_1^{[1,3]} + h_1^{[0,4]}\right)
            \s
            + \partial_B \mbf u^{[0,1]}\left(h_2^{[4,0]} + h_2^{[3,1]} + h_2^{[2,2]} + h_2^{[1,3]} + h_2^{[0,4]}\right)
			\s 
			+ \partial_A \mbf u^{[2,0]}\left(h_1^{[3,0]} + h_1^{[1,2]}\right) + \partial_B \mbf u^{[0,2]}\left(h_2^{[2,1]} + h_2^{[0,3]}\right)
			\s 
			+ \partial_A \mbf u^{[1,1]}\left(h_1^{[3,0]} + h_1^{[1,2]}\right) + \partial_B \mbf u^{[1,1]}\left(h_2^{[2,1]} + h_2^{[0,3]}\right)
			\s 
			+ \partial_A \mbf u^{[4,0]} \, h_1^{[1,0]} + \partial_B \mbf u^{[0,4]} \, h_2^{[0,1]} + \partial_A \mbf u^{[3,1]} \, h_1^{[1,0]} + \partial_B \mbf u^{[1,3]} \, h_2^{[0,1]}
			\s 
			+ \partial_A \mbf u^{[2,2]} \, h_1^{[1,0]} + \partial_B \mbf u^{[2,2]} \, h_2^{[0,1]} + \partial_A \mbf u^{[1,3]} \, h_1^{[1,0]} + \partial_B \mbf u^{[3,1]} \, h_2^{[0,1]} + \mbox{conj.}
			\s 
			= \left(\jac \mbf f(\mbf 0) + \mathds D \, \partial_{xx}\right)\left(\mbf u^{[4,0]} + \mbf u^{[3,1]} + \mbf u^{[2,2]} + \mbf u^{[1,3]} + \mbf u^{[0,4]}\right)
			\s 
			+ \mbf F_2\left(\mbf u^{[2,0]} + \mbf u^{[1,1]} + \mbf u^{[0,2]}, \mbf u^{[2,0]} + \mbf u^{[1,1]} + \mbf u^{[0,2]}\right)
			\s 
			+ 2 \, \mbf F_2\left(\mbf u^{[1,0]} + \mbf u^{[0,1]}, \mbf u^{[3,0]} + \mbf u^{[2,1]} + \mbf u^{[1,2]} + \mbf u^{[0,3]}\right)
			\s 
			+ 3 \, \mbf F_3\left(\mbf u^{[1,0]} + \mbf u^{[0,1]}, \mbf u^{[1,0]} + \mbf u^{[0,1]}, \mbf u^{[2,0]} + \mbf u^{[1,1]} + \mbf u^{[0,2]}\right)
			\s 
			+ \mbf F_4\left(\mbf u^{[1,0]} + \mbf u^{[0,1]}, \mbf u^{[1,0]} + \mbf u^{[0,1]}, \mbf u^{[1,0]} + \mbf u^{[0,1]}, \mbf u^{[1,0]} + \mbf u^{[0,1]}\right).
		\end{multline*}
		Here, again, by the same argument used at second order, we need to ensure that $h_1^{[j, \ell]} = h_2^{[j, \ell]} = 0$ for $j + \ell = 4$.
		
		With this, our equation can be expanded as:
		\begin{multline}
			\partial_A \mbf u^{[2,0]}\left(h_1^{[3,0]} + h_1^{[1,2]}\right) + \partial_B \mbf u^{[0,2]}\left(h_2^{[2,1]} + h_2^{[0,3]}\right)
			\s 
			+ \partial_A \mbf u^{[1,1]}\left(h_1^{[3,0]} + h_1^{[1,2]}\right) + \partial_B \mbf u^{[1,1]}\left(h_2^{[2,1]} + h_2^{[0,3]}\right)
			\s 
			+ i \omega A \, \partial_A \mbf u^{[4,0]} - i\omega B \, \partial_B \mbf u^{[0,4]} + i\omega A \, \partial_A \mbf u^{[3,1]} - i\omega B \, \partial_B \mbf u^{[1,3]}
			\s 
			+ i\omega A \, \partial_A \mbf u^{[2,2]} - i\omega B \, \partial_B \mbf u^{[2,2]} + i\omega A \, \partial_A \mbf u^{[1,3]} - i\omega B \, \partial_B \mbf u^{[3,1]} + \mbox{conj.}
			\s 
			= \left(\jac \mbf f(\mbf 0) + \mathds D \, \partial_{xx}\right)\left(\mbf u^{[4,0]} + \mbf u^{[3,1]} + \mbf u^{[2,2]} + \mbf u^{[1,3]} + \mbf u^{[0,4]}\right)
			\s 
			+ \mbf F_2\left(\mbf u^{[2,0]}, \mbf u^{[2,0]}\right) + \mbf F_2\left(\mbf u^{[1,1]}, \mbf u^{[1,1]}\right) + \mbf F_2\left(\mbf u^{[0,2]}, \mbf u^{[0,2]}\right)
			\s 
			+ 2 \left(\mbf F_2\left(\mbf u^{[2,0]}, \mbf u^{[1,1]}\right) + \mbf F_2\left(\mbf u^{[2,0]}, \mbf u^{[0,2]}\right) + \mbf F_2\left(\mbf u^{[1,1]}, \mbf u^{[0,2]}\right)\right)
			\s 
			+ 2 \left(\mbf F_2\left(\mbf u^{[1,0]}, \mbf u^{[3,0]}\right) + \mbf F_2\left(\mbf u^{[1,0]}, \mbf u^{[2,1]}\right) + \mbf F_2\left(\mbf u^{[1,0]}, \mbf u^{[1,2]}\right) + \mbf F_2\left(\mbf u^{[1,0]}, \mbf u^{[0,3]}\right)\right.
			\s 
			\left. + \mbf F_2\left(\mbf u^{[0,1]}, \mbf u^{[3,0]}\right) + \mbf F_2\left(\mbf u^{[0,1]}, \mbf u^{[2,1]}\right) + \mbf F_2\left(\mbf u^{[0,1]}, \mbf u^{[1,2]}\right) + \mbf F_2\left(\mbf u^{[0,1]}, \mbf u^{[0,3]}\right)\right)
			\s 
			+ 3 \left(\mbf F_3\left(\mbf u^{[1,0]}, \mbf u^{[1,0]}, \mbf u^{[2,0]}\right) + \mbf F_3\left(\mbf u^{[1,0]}, \mbf u^{[1,0]}, \mbf u^{[1,1]}\right) + \mbf F_3\left(\mbf u^{[1,0]}, \mbf u^{[1,0]}, \mbf u^{[0,2]}\right)\right)
			\s 
			+ 6 \left(\mbf F_3\left(\mbf u^{[1,0]}, \mbf u^{[0,1]}, \mbf u^{[2,0]}\right) + \mbf F_3\left(\mbf u^{[1,0]}, \mbf u^{[0,1]}, \mbf u^{[1,1]}\right) + \mbf F_3\left(\mbf u^{[1,0]}, \mbf u^{[0,1]}, \mbf u^{[0,2]}\right)\right)
			\s 
			+ 3 \left(\mbf F_3\left(\mbf u^{[0,1]}, \mbf u^{[0,1]}, \mbf u^{[2,0]}\right) + \mbf F_3\left(\mbf u^{[0,1]}, \mbf u^{[0,1]}, \mbf u^{[1,1]}\right) + \mbf F_3\left(\mbf u^{[0,1]}, \mbf u^{[0,1]}, \mbf u^{[0,2]}\right)\right)
			\s 
			+ \mbf F_4\left(\mbf u^{[1,0]}, \mbf u^{[1,0]}, \mbf u^{[1,0]}, \mbf u^{[1,0]}\right) + \mbf F_4\left(\mbf u^{[0,1]}, \mbf u^{[0,1]}, \mbf u^{[0,1]}, \mbf u^{[0,1]}\right)
			\s 
			+ 4 \left(\mbf F_4\left(\mbf u^{[1,0]}, \mbf u^{[1,0]}, \mbf u^{[1,0]}, \mbf u^{[0,1]}\right) + \mbf F_4\left(\mbf u^{[1,0]}, \mbf u^{[0,1]}, \mbf u^{[0,1]}, \mbf u^{[0,1]}\right)\right)
			\s 
			+ 6 \, \mbf F_4\left(\mbf u^{[1,0]}, \mbf u^{[1,0]}, \mbf u^{[0,1]}, \mbf u^{[0,1]}\right). \label{fourthorder}
		\end{multline}
		Therefore, when splitting \eqref{fourthorder} according to the amplitudes, we can see that
		\begin{multline}
			\partial_A \mbf u^{[2,0]} \, h_1^{[3,0]} + i \omega A \, \partial_A \mbf u^{[4,0]} + \mbox{conj.} = \left(\jac \mbf f(\mbf 0) + \mathds D \, \partial_{xx}\right) \mbf u^{[4,0]}
			\s 
			+ |A|^4\left(2 \, \mbf F_2\left(\mbf W_0^{[2]}, \mbf W_0^{[2]}\right) + \mbf F_2\left(\mbf W_2^{[2]}, \ol{\mbf W_2^{[2]}}\right) + 2 \, \mbf F_2\left(\bs \phi_1^{[1]}, \ol{\mbf W_1^{[3]}}\right)\right.
			\s 
			\left. + 3 \, \mbf F_3\left(\bs \phi_1^{[1]}, \bs \phi_1^{[1]}, \ol{\mbf W_2^{[2]}}\right) + 6 \, \mbf F_3\left(\bs \phi_1^{[1]}, \ol{\bs \phi_1^{[1]}}, \mbf W_0^{[2]}\right) + 3 \, \mbf F_4\left(\bs \phi_1^{[1]}, \bs \phi_1^{[1]}, \ol{\bs \phi_1^{[1]}}, \ol{\bs \phi_1^{[1]}}\right)\right)
			\s 
			+ |A|^2 \, A^2 \, e^{2ikx} \left(4 \, \mbf F_2\left(\mbf W_0^{[2]}, \mbf W_2^{[2]}\right) + 2 \, \mbf F_2\left(\bs \phi_1^{[1]}, \mbf W_1^{[3]}\right) + 2 \, \mbf F_2\left(\ol{\bs \phi_1^{[1]}}, \mbf W_3^{[3]}\right) \right.
			\s 
			\left. + 6 \, \mbf F_3\left(\bs \phi_1^{[1]}, \bs \phi_1^{[1]}, \mbf W_0^{[2]}\right) + 6 \, \mbf F_3\left(\bs \phi_1^{[1]}, \ol{\bs \phi_1^{[1]}}, \mbf W_2^{[2]}\right) + 4 \, \mbf F_4\left(\bs \phi_1^{[1]}, \bs \phi_1^{[1]}, \bs \phi_1^{[1]}, \ol{\bs \phi_1^{[1]}}\right)\right) + \ldots + c.c., \label{firstequationfourthorder}
		\end{multline}
		\begin{multline}
			\partial_A \mbf u^{[1,1]} \, h_1^{[3,0]} + \partial_B \mbf u^{[1,1]} \, h_2^{[2,1]} + i\omega A \, \partial_A \mbf u^{[3,1]} - i\omega B \, \partial_B \mbf u^{[3,1]} + \mbox{conj.}
			\s 
			= \left(\jac \mbf f(\mbf 0) + \mathds D \, \partial_{xx}\right) \mbf u^{[3,1]}
			\s 
			+ |A|^2 \, A \bar B \left(4 \, \mbf F_2\left(\mbf W_0^{[2]}, \mbf W_0^{[1,1]}\right) + 2 \, \mbf F_2\left(\mbf W_2^{[2]}, \mbf W_2^{[1,1]}\right) + 2 \, \mbf F_2\left(\bs \phi_1^{[1]}, \mbf W_1^{[3]} + \mbf W_1^{[2,1]}\right) \right.
			\s 
			\left. + 2 \, \mbf F_2\left(\ol{\bs \phi_1^{[1]}}, \mbf W_{1,2}^{[2,1]}\right) + 3 \, \mbf F_3\left(\bs \phi_1^{[1]}, \bs \phi_1^{[1]}, 4 \, \mbf W_0^{[2]} + \mbf W_2^{[1,1]}\right) + 6 \, \mbf F_3\left(\bs \phi_1^{[1]}, \ol{\bs \phi_1^{[1]}}, \mbf W_2^{[2]} + \mbf W_0^{[1,1]}\right) \right.
			\s 
			\left. + 12 \, \mbf F_4\left(\bs \phi_1^{[1]}, \bs \phi_1^{[1]}, \bs \phi_1^{[1]}, \ol{\bs \phi_1^{[1]}}\right)\right)
			\s 
			+ |A|^2 \, AB \, e^{2ikx} \left(4 \, \mbf F_2\left(\mbf W_0^{[2]}, \mbf W_2^{[1,1]}\right) + 2 \, \mbf F_2\left(\mbf W_2^{[2]}, \ol{\mbf W_0^{[1,1]}}\right) + 2 \, \mbf F_2\left(\bs \phi_1^{[1]}, \ol{\mbf W_1^{[2,1]}}\right) \right.
			\s 
			\left. + 2 \, \mbf F_2\left(\ol{\bs \phi_1^{[1]}}, \mbf W_1^{[3]} + \mbf W_3^{[2,1]}\right) + 3 \, \mbf F_3\left(\bs \phi_1^{[1]}, \bs \phi_1^{[1]}, \ol{\mbf W_0^{[1,1]}}\right) + 6 \, \mbf F_3\left(\bs \phi_1^{[1]}, \ol{\bs \phi_1^{[1]}}, 2 \, \mbf W_0^{[2]} + \mbf W_2^{[1,1]}\right) \right.
			\s 
			\left. + 6 \, \mbf F_3\left(\ol{\bs \phi_1^{[1]}}, \ol{\bs \phi_1^{[1]}}, \mbf W_2^{[2]}\right) + 12 \, \mbf F_4\left(\bs \phi_1^{[1]}, \bs \phi_1^{[1]}, \ol{\bs \phi_1^{[1]}}, \ol{\bs \phi_1^{[1]}}\right)\right)
			\s 
			+ A^3 \bar B \, e^{2ikx} \left(2 \, \mbf F_2\left(\mbf W_2^{[2]}, \mbf W_0^{[1,1]}\right) + 2 \, \mbf F_2\left(\bs \phi_1^{[1]}, \mbf W_3^{[3]} + \mbf W_{1,2}^{[2,1]}\right) \right.
			\s 
			\left. + 3 \, \mbf F_3\left(\bs \phi_1^{[1]}, \bs \phi_1^{[1]}, \mbf W_0^{[1,1]} + 2 \, \mbf W_2^{[2]}\right) + 4 \, \mbf F_4\left(\bs \phi_1^{[1]}, \bs \phi_1^{[1]}, \bs \phi_1^{[1]}, \bs \phi_1^{[1]}\right)\right) + \ldots + c.c., \label{secondequationfourthorder}
		\end{multline}
		and
		\begin{multline}
			\partial_A \mbf u^{[2,0]} \, h_1^{[1,2]} + \partial_B \mbf u^{[0,2]} \, h_2^{[2,1]} + i\omega A \, \partial_A \mbf u^{[2,2]} - i\omega B \, \partial_B \mbf u^{[2,2]} + \mbox{conj.}
			\s 
			= \left(\jac \mbf f(\mbf 0) + \mathds D \, \partial_{xx}\right) \mbf u^{[2,2]}
			\s 
			+ |A|^2 \, |B|^2 \left(4 \, \mbf F_2\left(\mbf W_0^{[2]}, \mbf W_0^{[2]}\right) + \mbf F_2\left(\mbf W_2^{[1,1]}, \mbf W_2^{[1,1]}\right) + \mbf F_2\left(\mbf W_0^{[1,1]}, \ol{\mbf W_0^{[1,1]}}\right) \right.
			\s 
			+ 4 \, \mbf F_2\left(\bs \phi_1^{[1]}, \ol{\mbf W_1^{[2,1]}}\right) + 6 \, \mbf F_3\left(\bs \phi_1^{[1]}, \ol{\bs \phi_1^{[1]}}, 2 \, \mbf W_0^{[2]} + \mbf W_2^{[1,1]}\right) + 6 \, \mbf F_3\left(\bs \phi_1^{[1]}, \bs \phi_1^{[1]}, \ol{\mbf W_0^{[1,1]}}\right)
            \s
            \left. + 12 \, \mbf F_4\left(\bs \phi_1^{[1]}, \bs \phi_1^{[1]}, \ol{\bs \phi_1^{[1]}}, \ol{\bs \phi_1^{[1]}}\right)\right)
			\s
			+ |B|^2 \, A^2 \, e^{2ikx} \left(2 \,  \mbf F_2\left(\mbf W_0^{[1,1]}, \mbf W_2^{[1,1]}\right) + 4 \, \mbf F_2\left(\mbf W_0^{[2]}, \mbf W_2^{[2]}\right) + 2 \, \mbf F_2\left(\bs \phi_1^{[1]}, \mbf W_1^{[2,1]} + \mbf W_3^{[2,1]}\right)\right.
			\s 
			+ 2 \, \mbf F_2\left(\ol{\bs \phi_1^{[1]}}, \mbf W_{1,2}^{[2,1]}\right) + 6 \, \mbf F_3\left(\bs \phi_1^{[1]}, \bs \phi_1^{[1]}, \mbf W_0^{[2]} + \mbf W_2^{[1,1]}\right) + 6 \, \mbf F_3\left(\bs \phi_1^{[1]}, \ol{\bs \phi_1^{[1]}}, \mbf W_2^{[2]} + \mbf W_0^{[1,1]}\right)
            \\
            \left. + 12 \, \mbf F_4\left(\bs \phi_1^{[1]}, \bs \phi_1^{[1]}, \bs \phi_1^{[1]}, \ol{\bs \phi_1^{[1]}}\right)\right) + \ldots + c.c., \label{thirdequationfourthorder}
		\end{multline}
        where ``$\ldots$'' represents terms that will not be necessary to obtain what we want, and the other two equations correspond to complex conjugate versions of \eqref{firstequationfourthorder} and \eqref{secondequationfourthorder}. The solutions to these equations are given by
		\begin{align*}
			\mbf u^{[4,0]} &= |A|^4 \, \mbf W_0^{[4]} + |A|^2 \, A^2 \, e^{2ikx} \, \mbf W_2^{[4]} + \ldots + c.c.,
			\s 
			\mbf u^{[3,1]} &= |A|^2 \, A \bar B \, \mbf W_0^{[3,1]} + |A|^2 \, AB \, e^{2ikx} \, \mbf W_2^{[3,1]} + A^3 \bar B \, e^{2ikx} \, \mbf W_{2,2}^{[3,1]} + \ldots + c.c.,
			\s 
			\mbf u^{[2,2]} &= |A|^2 \, |B|^2 \, \mbf W_0^{[2,2]} + |B|^2 \, A^2 \, e^{2ikx} \, \mbf W_2^{[2,2]} + |A|^2 \, B^2 \, e^{2ikx} \, \ol{\mbf W_2^{[2,2]}} + \ldots + c.c.,
			\s 
			\mbf u^{[1,3]} &= |B|^2 \, \bar A B \, \ol{\mbf W_0^{[3,1]}} + |B|^2 \, AB \, e^{2ikx} \, \ol{\mbf W_2^{[3,1]}} + \bar A B^3 \, e^{2ikx} \, \ol{\mbf W_{2,2}^{[3,1]}} + \ldots + c.c.,
			\s 
			\mbf u^{[0,4]} &= |B|^4 \, \mbf W_0^{[4]} + |B|^2 \, B^2 \, e^{2ikx} \, \ol{\mbf W_2^{[4]}} + \ldots + c.c.,
		\end{align*}
        where ``$\ldots$'' stands for terms that are multiples of $e^{4ikx}$ and will not be necessary for the computation of the fifth-order coefficients so they are omitted.
        
		Thus, by replacing each term of $\mbf u^{[4, 0]}$, $\mbf u^{[3, 1]}$ and $\mbf u^{[2, 2]}$ into \eqref{firstequationfourthorder}, \eqref{secondequationfourthorder} and \eqref{thirdequationfourthorder}, respectively, we can see that
		\begin{multline*}
			\jac \mbf f(\mbf 0) \, \mbf W_0^{[4]} = 2 \, C_1^{[3,0]} \, \mbf W_0^{[2]} - 2 \, \mbf F_2\left(\mbf W_0^{[2]}, \mbf W_0^{[2]}\right) - \mbf F_2\left(\mbf W_2^{[2]}, \ol{\mbf W_2^{[2]}}\right) - 2 \, \mbf F_2\left(\bs \phi_1^{[1]}, \ol{\mbf W_1^{[3]}}\right)
			\s 
			- 3 \, \mbf F_3\left(\bs \phi_1^{[1]}, \bs \phi_1^{[1]}, \ol{\mbf W_2^{[2]}}\right) - 6 \, \mbf F_3\left(\bs \phi_1^{[1]}, \ol{\bs \phi_1^{[1]}}, \mbf W_0^{[2]}\right) - 3 \, \mbf F_4\left(\bs \phi_1^{[1]}, \bs \phi_1^{[1]}, \ol{\bs \phi_1^{[1]}}, \ol{\bs \phi_1^{[1]}}\right),
		\end{multline*}
		\begin{multline*}
			\left(\jac \mbf f(\mbf 0) - 4 \, k^2 \, \mathds D - 2 \, i\omega I\right) \mbf W_2^{[4]} = 2 \, C_1^{[3,0]} \, \mbf W_2^{[2]} - 4 \, \mbf F_2\left(\mbf W_0^{[2]}, \mbf W_2^{[2]}\right) - 2 \, \mbf F_2\left(\bs \phi_1^{[1]}, \mbf W_1^{[3]}\right)
			\s 
			- 2 \, \mbf F_2\left(\ol{\bs \phi_1^{[1]}}, \mbf W_3^{[3]}\right) - 6 \, \mbf F_3\left(\bs \phi_1^{[1]}, \bs \phi_1^{[1]}, \mbf W_0^{[2]}\right) - 6 \, \mbf F_3\left(\bs \phi_1^{[1]}, \ol{\bs \phi_1^{[1]}}, \mbf W_2^{[2]}\right) - 4 \, \mbf F_4\left(\bs \phi_1^{[1]}, \bs \phi_1^{[1]}, \bs \phi_1^{[1]}, \ol{\bs \phi_1^{[1]}}\right),
		\end{multline*}
		\begin{multline*}
			\left(\jac \mbf f(\mbf 0) - 2 \, i\omega I\right) \mbf W_0^{[3,1]} = \left(C_1^{[3,0]} + C_1^{[1,2]}\right) \mbf W_0^{[1,1]} - 4 \, \mbf F_2\left(\mbf W_0^{[2]}, \mbf W_0^{[1,1]}\right) - 2 \, \mbf F_2\left(\mbf W_2^{[2]}, \mbf W_2^{[1,1]}\right)
			\s 
			- 2 \, \mbf F_2\left(\bs \phi_1^{[1]}, \mbf W_1^{[3]} + \mbf W_1^{[2,1]}\right) - 2 \, \mbf F_2\left(\ol{\bs \phi_1^{[1]}}, \mbf W_{1,2}^{[2,1]}\right) - 3 \, \mbf F_3\left(\bs \phi_1^{[1]}, \bs \phi_1^{[1]}, 4 \, \mbf W_0^{[2]} + \mbf W_2^{[1,1]}\right)
			\s 
			- 6 \, \mbf F_3\left(\bs \phi_1^{[1]}, \ol{\bs \phi_1^{[1]}}, \mbf W_2^{[2]} + \mbf W_0^{[1,1]}\right) - 12 \, \mbf F_4\left(\bs \phi_1^{[1]}, \bs \phi_1^{[1]}, \bs \phi_1^{[1]}, \ol{\bs \phi_1^{[1]}}\right),
		\end{multline*}
		\begin{multline*}
			\left(\jac \mbf f(\mbf 0) - 4 \, k^2 \, \mathds D\right) \mbf W_2^{[3,1]} = \left(C_1^{[3,0]} + \ol{C_1^{[1,2]}}\right) \mbf W_2^{[1,1]} - 4 \, \mbf F_2\left(\mbf W_0^{[2]}, \mbf W_2^{[1,1]}\right) - 2 \, \mbf F_2\left(\mbf W_2^{[2]}, \ol{\mbf W_0^{[1,1]}}\right)
			\s 
			- 2 \, \mbf F_2\left(\bs \phi_1^{[1]}, \ol{\mbf W_1^{[2,1]}}\right) - 2 \, \mbf F_2\left(\ol{\bs \phi_1^{[1]}}, \mbf W_1^{[3]} + \mbf W_3^{[2,1]}\right) - 3 \, \mbf F_3\left(\bs \phi_1^{[1]}, \bs \phi_1^{[1]}, \ol{\mbf W_0^{[1,1]}}\right)
			\s 
			- 6 \, \mbf F_3\left(\bs \phi_1^{[1]}, \ol{\bs \phi_1^{[1]}}, 2 \, \mbf W_0^{[2]} + \mbf W_2^{[1,1]}\right) - 6 \, \mbf F_3\left(\ol{\bs \phi_1^{[1]}}, \ol{\bs \phi_1^{[1]}}, \mbf W_2^{[2]}\right) - 12 \, \mbf F_4\left(\bs \phi_1^{[1]}, \bs \phi_1^{[1]}, \ol{\bs \phi_1^{[1]}}, \ol{\bs \phi_1^{[1]}}\right).
		\end{multline*}
		\begin{multline*}
			\jac \mbf f(\mbf 0) \, \mbf W_0^{[2,2]} = 4 \, C_1^{[1,2]} \, \mbf W_0^{[2]} - 4 \, \mbf F_2\left(\mbf W_0^{[2]}, \mbf W_0^{[2]}\right) - \mbf F_2\left(\mbf W_2^{[1,1]}, \mbf W_2^{[1,1]}\right)
			\\
			- \mbf F_2\left(\mbf W_0^{[1,1]}, \ol{\mbf W_0^{[1,1]}}\right) - 4 \, \mbf F_2\left(\bs \phi_1^{[1]}, \ol{\mbf W_1^{[2,1]}}\right) - 6 \, \mbf F_3\left(\bs \phi_1^{[1]}, \ol{\bs \phi_1^{[1]}}, 2 \, \mbf W_0^{[2]} + \mbf W_2^{[1,1]}\right)
            \\
            - 6 \, \mbf F_3\left(\bs \phi_1^{[1]}, \bs \phi_1^{[1]}, \ol{\mbf W_0^{[1,1]}}\right) - 12 \, \mbf F_4\left(\bs \phi_1^{[1]}, \bs \phi_1^{[1]}, \ol{\bs \phi_1^{[1]}}, \ol{\bs \phi_1^{[1]}}\right),
		\end{multline*}
		and
		\begin{multline*}
			\left(\jac \mbf f(\mbf 0) - 4 \, k^2 \, \mathds D - 2 \, i\omega I\right) \mbf W_2^{[2,2]} = 2 \, C_1^{[1,2]} \, \mbf W_2^{[2]} - 2 \,  \mbf F_2\left(\mbf W_0^{[1,1]}, \mbf W_2^{[1,1]}\right) - 4 \, \mbf F_2\left(\mbf W_0^{[2]}, \mbf W_2^{[2]}\right)
			\s 
			- 2 \, \mbf F_2\left(\bs \phi_1^{[1]}, \mbf W_1^{[2,1]} + \mbf W_3^{[2,1]}\right) - 2 \, \mbf F_2\left(\ol{\bs \phi_1^{[1]}}, \mbf W_{1,2}^{[2,1]}\right) - 6 \, \mbf F_3\left(\bs \phi_1^{[1]}, \bs \phi_1^{[1]}, \mbf W_0^{[2]} + \mbf W_2^{[1,1]}\right)
            \\
            - 6 \, \mbf F_3\left(\bs \phi_1^{[1]}, \ol{\bs \phi_1^{[1]}}, \mbf W_2^{[2]} + \mbf W_0^{[1,1]}\right) - 12 \, \mbf F_4\left(\bs \phi_1^{[1]}, \bs \phi_1^{[1]}, \bs \phi_1^{[1]}, \ol{\bs \phi_1^{[1]}}\right).
		\end{multline*}
        The other vectors can be obtained from the equations as well but they will not affect terms at order 5, so we omit the equations they solve. 

        With this, we are ready to proceed to the final order we are interested in.
                
	\paragraph{Fifth order}
        We start the development of the equations we get at this order by stating a Lemma based on \cite{Kuznetsov}:
		\newtheorem{lemma}{Lemma}
		\begin{lemma} \label{simplificationorder5}
			The equation
			\begin{align*}
				\dot{z}&=\lambda z + C^{[2,1]} \, z^2 \bar z + C^{[5,0]} \, z^5+C^{[4,1]} \, z^4\bar z+C^{[3,2]} \, z^3\bar z^2 + C^{[2,3]} \, z^2\bar z^3+C^{[1,4]} \, z\bar z^4+C^{[0,5]} \, \bar z^5,
			\end{align*}
			where $\lambda=\lambda(\alpha)=\mu(\alpha)+i\omega(\alpha)$, $\mu(0)=0$, $\omega(0)=\omega_0>0$, and $C^{[j,l]}=C^{[j,l]}(\alpha$), can be transformed by an invertible parameter-dependent change of coordinate into an equation with only one quintic term, for all $|\alpha|$ sufficiently small.
		\end{lemma}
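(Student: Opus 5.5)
The plan is the standard near-identity normal-form argument for Hopf bifurcations, as in Kuznetsov's treatment of the Bautin point. We assume the equation already has its linear part diagonal, with eigenvalue $\lambda(\alpha)$, and that all quadratic and cubic non-resonant terms have been removed earlier, so that only the cubic term $z^2\bar z$ remains. We then look for a polynomial change of variables
\begin{align*}
z = w + \sum_{\substack{j+l=5\\ (j,l)\neq(3,2)}} h_{jl}(\alpha)\, w^j \bar w^l ,
\end{align*}
with coefficients depending smoothly on $\alpha$. Such a change is a near-identity diffeomorphism near the origin, so it is invertible for small $|w|$ and small $|\alpha|$.

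The first step is to substitute the change into the equation and collect terms of degree five. Differentiating gives
\begin{align*}
\dot z = \dot w + \sum h_{jl}\left(j w^{j-1}\bar w^l \dot w + l w^j \bar w^{l-1}\dot{\bar w}\right).
\end{align*}
Since $\dot w = \lambda w + O(|w|^3)$, we have $\lambda z = \lambda w + \lambda\sum h_{jl}w^j\bar w^l$. The cubic term $C^{[2,1]}z^2\bar z$ changes only at order seven, because the correction to $z$ is already of order five. Consequently, the coefficient of $w^j\bar w^l$ in the new equation at degree five is
\begin{align*}
C^{[j,l]} - \bigl(j\lambda + l\bar\lambda - \lambda\bigr) h_{jl}.
\end{align*}

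The second step is to check the resonance condition. At $\alpha=0$ we have $\lambda=i\omega_0$, so
\begin{align*}
j\lambda + l\bar\lambda - \lambda = i\omega_0 (j-l-1).
\end{align*}
This is nonzero for every $(j,l)$ with $j+l=5$ except $(3,2)$. By continuity it remains bounded away from zero for all $|\alpha|$ small. We may therefore set
\begin{align*}
h_{jl} = \frac{C^{[j,l]}}{(j-1)\lambda + l\bar\lambda},
\end{align*}
which eliminates all five non-resonant quintic terms. The coefficient $C^{[3,2]}$ of the resonant term is left unchanged. What remains is the single quintic term $w^3\bar w^2 = |w|^4 w$.

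The main point needing care is confirming that lower-order terms do not regenerate non-resonant quintic terms. In particular, the products of the cubic term with the correction are of order seven, so they do not interfere. Likewise, $\dot w$ contains the cubic term, but it is multiplied by $h_{jl}$ and therefore enters only at order seven. Since everything beyond order five is absorbed into $O(|w|^6)$, the lemma follows. In our PDE setting, this justifies setting all non-resonant fifth-order $h$-coefficients to zero, exactly as was done at third order.
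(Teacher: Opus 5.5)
Your proof is correct and follows essentially the same route as the paper. Both use a near-identity quintic change of coordinates: you write it as $z = w + \sum h_{jl} w^j\bar w^l$ rather than the paper's $w = z - \sum h^{[j,l]} z^j \bar z^l$, which makes no difference at this order. Both compute the homological coefficient $(j-1)\lambda + l\bar\lambda$, which at $\alpha=0$ vanishes only for $(j,l)=(3,2)$, and both choose $h^{[3,2]}=0$.

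Your general formula also gives the correct sign for the $\bar w^5$ term, $h_{05} = C^{[0,5]}/(5\bar\lambda - \lambda)$. The paper's displayed $h^{[0,5]} = C^{[0,5]}/(\lambda - 5\bar\lambda)$ has a sign slip.
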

		\begin{proof}
            Consider the following locally invertible coordinate:
			\begin{align*}
				w=z- h^{[5,0]} \, z^5-h^{[4,1]} \, z^4\bar z-h^{[3,2]} \, z^3\bar z^2 - h^{[2,3]} \, z^2\bar z^3-h^{[1,4]} \, z\bar z^4-h^{[0,5]} \, \bar z^5.
			\end{align*}
			With this, we have
			\begin{align*}
				\dot w &= \begin{multlined}[t][10cm]
					\lambda z + C^{[2,1]} \, z^2 \bar z + C^{[5,0]} \, z^5+C^{[4,1]} \, z^4\bar z+C^{[3,2]} \, z^3\bar z^2 + C^{[2,3]} \, z^2\bar z^3+C^{[1,4]} \, z\bar z^4+C^{[0,5]} \, \bar z^5
					\s 
					- 5 \lambda \, h^{[5,0]} \, z^5-h^{[4,1]} \left(4\lambda \, z^4\bar z+\bar \lambda \, z^4\bar z\right)-h^{[3,2]} \left(3\lambda \, z^3\bar z^2 + 2 \bar \lambda \, z^3\bar z^2\right)
					\s 
					- h^{[2,3]} \left(2\lambda \, z^2\bar z^3 + 3 \bar \lambda \, z^2\bar z^3\right)-h^{[1,4]} \left(\lambda \, z\bar z^4+4\bar \lambda \, z\bar z^4\right)-5 \bar \lambda \, h^{[0,5]} \, \bar z^5 + \mathcal O\left(z^6\right)
				\end{multlined}
				\s
				&=\begin{multlined}[t][10cm]
					\lambda w + C^{[2,1]} \, z^2 \bar z + \left(C^{[5,0]}-4 \lambda \, h^{[5,0]}\right) \, z^5+\left(C^{[4,1]}-3\lambda \, h^{[4,1]}-\bar \lambda \, h^{[4,1]}\right) \, z^4\bar z
					\s 
					+\left(C^{[3,2]}-2\lambda \, h^{[3,2]}-2\bar \lambda \, h^{[3,2]}\right) \, z^3\bar z^2 + \left(C^{[2,3]}-\lambda \, h^{[2,3]}-3\bar \lambda \, h^{[2,3]}\right) \, z^2\bar z^3
					\s 
					+\left(C^{[1,4]}-4\bar \lambda \, h^{[1,4]}\right) \, z\bar z^4+\left(C^{[0,5]}+\lambda \, h^{[0,5]}-5\bar \lambda  \, h^{[0,5]}\right) \, \bar z^5 + \mathcal O\left(z^6\right)
				\end{multlined}
				\s 
				&=\begin{multlined}[t][10cm]
					\lambda w + C^{[2,1]} \, w^2 \bar w + \left(C^{[5,0]}-4 \lambda \, h^{[5,0]}\right) \, w^5+\left(C^{[4,1]}- \left(3\lambda +\bar \lambda \right) h^{[4,1]}\right) \, w^4\bar w
					\s 
					+\left(C^{[3,2]}-\left(2\lambda +2\bar \lambda\right) h^{[3,2]}\right) \, w^3\bar w^2 + \left(C^{[2,3]}-\left(\lambda +3\bar \lambda \right) h^{[2,3]}\right) \, w^2\bar w^3
					\s 
					+\left(C^{[1,4]}-4\bar \lambda \, h^{[1,4]}\right) \, w\bar w^4+\left(C^{[0,5]}+\left(\lambda -5\bar \lambda \right) h^{[0,5]}\right) \, \bar w^5 + \mathcal O\left(w^6\right).
				\end{multlined}
			\end{align*}
			Therefore, if we choose
			\begin{align*}
				h^{[5,0]} = \frac{C^{[5,0]}}{4\lambda}, \quad h^{[4,1]} = \frac{C^{[4,1]}}{3\lambda + \bar \lambda}, \quad h^{[2,3]} = \frac{C^{[2,3]}}{\lambda + 3\bar \lambda}, \quad h^{[1,4]} = \frac{C^{[1,4]}}{4\bar \lambda}, \quad h^{[0,5]} = \frac{C^{[0,5]}}{\lambda-5\bar \lambda},
			\end{align*}
			then our equation gets reduced to
			\begin{align*}
				\dot{w}&=\begin{multlined}[t][10cm]
					\lambda w + C^{[2,1]} \, w^2 \bar w +\left(C^{[3,2]}-\left(2\lambda +2\bar \lambda\right) h^{[3,2]}\right) \, w^3\bar w^2.
				\end{multlined}
			\end{align*}
			However, the last term cannot be reduced since $\lambda(0)+\bar \lambda(0)=0$. Therefore, we choose $h^{[3,2]}=0$, which leaves us the following equation:
			\begin{align*}
				\dot{w}&=\begin{multlined}[t][10cm]
					\lambda w + C^{[2,1]} \, w^2 \bar w +C^{[3,2]} \, w^3\bar w^2+\mathcal O\left(w^6\right).
				\end{multlined}
			\end{align*}
		\end{proof}
		Keeping this lemma in mind note that, at order 5, \eqref{eq:geneq} is given by
		\begin{multline*}
			\partial_A \mbf u^{[1,0]}\left(h_1^{[5,0]} + h_1^{[4,1]} + h_1^{[3,2]} + h_1^{[2,3]} + h_1^{[1,4]} + h_1^{[0,5]}\right)
			\s 
			+ \partial_B \mbf u^{[0,1]} \left(h_2^{[5,0]} + h_2^{[4,1]} + h_2^{[3,2]} + h_2^{[2,3]} + h_2^{[1,4]} + h_2^{[0,5]}\right)
			\s 
			+ \partial_A \mbf u^{[3,0]}\left(h_1^{[3,0]} + h_1^{[1,2]}\right) + \partial_B \mbf u^{[0,3]}\left(h_2^{[2,1]} + h_2^{[0,3]}\right) + \partial_A \mbf u^{[2,1]}\left(h_1^{[3,0]} + h_1^{[1,2]}\right)
			\s 
			+ \partial_B \mbf u^{[1,2]}\left(h_2^{[2,1]} + h_2^{[0,3]}\right) + \partial_A \mbf u^{[1,2]}\left(h_1^{[3,0]} + h_1^{[1,2]}\right) + \partial_B \mbf u^{[2,1]}\left(h_2^{[2,1]} + h_2^{[0,3]}\right)
			\s 
			+ \partial_A \mbf u^{[5,0]} \, h_1^{[1,0]} + \partial_B \mbf u^{[0,5]} \, h_2^{[0,1]} + \partial_A \mbf u^{[4,1]} \, h_1^{[1,0]} + \partial_B \mbf u^{[1,4]} \, h_2^{[0,1]}
			\s 
			+ \partial_A \mbf u^{[3,2]} \, h_1^{[1,0]} + \partial_B \mbf u^{[2,3]} \, h_2^{[0,1]} + \partial_A \mbf u^{[2,3]} \, h_1^{[1,0]} + \partial_B \mbf u^{[3,2]} \, h_2^{[0,1]}
			\s 
			+ \partial_A \mbf u^{[1,4]} \, h_1^{[1,0]} + \partial_B \mbf u^{[4,1]} \, h_2^{[0,1]} + \mbox{conj.}
			\s 
			= \left(\jac \mbf f(\mbf 0) + \mathds D \, \partial_{xx}\right)\left(\mbf u^{[5,0]} + \mbf u^{[4,1]} + \mbf u^{[3,2]} + \mbf u^{[2,3]} + \mbf u^{[1,4]} + \mbf u^{[0,5]}\right)
			\s 
			+ 2 \, \mbf F_2\left(\mbf u^{[1,0]} + \mbf u^{[0,1]}, \mbf u^{[4,0]} + \mbf u^{[3,1]} + \mbf u^{[2,2]} + \mbf u^{[1,3]} + \mbf u^{[0,4]}\right)
			\s 
			+ 2 \, \mbf F_2\left(\mbf u^{[2,0]} + \mbf u^{[1,1]} + \mbf u^{[0,2]}, \mbf u^{[3,0]} + \mbf u^{[2,1]} + \mbf u^{[1,2]} + \mbf u^{[0,3]}\right)
			\s 
			+ 3 \, \mbf F_3\left(\mbf u^{[1,0]} + \mbf u^{[0,1]}, \mbf u^{[2,0]} + \mbf u^{[1,1]} + \mbf u^{[0,2]}, \mbf u^{[2,0]} + \mbf u^{[1,1]} + \mbf u^{[0,2]}\right)
			\s 
			+ 3 \, \mbf F_3\left(\mbf u^{[1,0]} + \mbf u^{[0,1]}, \mbf u^{[1,0]} + \mbf u^{[0,1]}, \mbf u^{[3,0]} + \mbf u^{[2,1]} + \mbf u^{[1,2]} + \mbf u^{[0,3]}\right)
			\s 
			+ 4 \, \mbf F_4\left(\mbf u^{[1,0]} + \mbf u^{[0,1]}, \mbf u^{[1,0]} + \mbf u^{[0,1]}, \mbf u^{[1,0]} + \mbf u^{[0,1]}, \mbf u^{[2,0]} + \mbf u^{[1,1]} + \mbf u^{[0,2]}\right)
			\s 
			+ \mbf F_5\left(\mbf u^{[1,0]} + \mbf u^{[0,1]}, \mbf u^{[1,0]} + \mbf u^{[0,1]}, \mbf u^{[1,0]} + \mbf u^{[0,1]}, \mbf u^{[1,0]} + \mbf u^{[0,1]}, \mbf u^{[1,0]} + \mbf u^{[0,1]}\right).
		\end{multline*}
		Therefore, when splitting this equation according to the amplitudes, we get three equations given by:
		\begin{multline}
			\partial_A \mbf u^{[1,0]} \, h_1^{[5,0]} + \partial_B \mbf u^{[0,1]} \, h_2^{[5,0]} + \partial_A \mbf u^{[3,0]} \, h_1^{[3,0]} + i\omega A \, \partial_A \mbf u^{[5,0]} + \mbox{conj.}
			\s 
			= \left(\jac \mbf f(\mbf 0) + \mathds D \, \partial_{xx}\right) \mbf u^{[5,0]} + 2 \, \mbf F_2\left(\mbf u^{[1,0]}, \mbf u^{[4,0]}\right)
			\s 
			+ 2 \, \mbf F_2\left(\mbf u^{[2,0]}, \mbf u^{[3,0]}\right) + 3 \, \mbf F_3\left(\mbf u^{[1,0]}, \mbf u^{[2,0]}, \mbf u^{[2,0]}\right)
			\s
			+ 3 \, \mbf F_3\left(\mbf u^{[1,0]}, \mbf u^{[1,0]}, \mbf u^{[3,0]}\right) + 4 \, \mbf F_4\left(\mbf u^{[1,0]}, \mbf u^{[1,0]}, \mbf u^{[1,0]}, \mbf u^{[2,0]}\right)
			\s 
			+ \mbf F_5\left(\mbf u^{[1,0]}, \mbf u^{[1,0]}, \mbf u^{[1,0]}, \mbf u^{[1,0]}, \mbf u^{[1,0]}\right), \label{firstequationfifthorder}
		\end{multline}
		\begin{multline}
			\partial_A \mbf u^{[1,0]} \, h_1^{[4,1]} + \partial_B \mbf u^{[0,1]} \, h_2^{[4,1]} + \partial_A \mbf u^{[2,1]} \, h_1^{[3,0]} + \partial_B \mbf u^{[2,1]} \, h_2^{[2,1]}
			\s 
			+i\omega A \, \partial_A \mbf u^{[4,1]} - i\omega B \, \partial_B \mbf u^{[4,1]} + \mbox{conj.}
			\s 
			= \left(\jac \mbf f(\mbf 0) + \mathds D \, \partial_{xx}\right) \mbf u^{[4,1]} + 2 \, \mbf F_2\left(\mbf u^{[0,1]}, \mbf u^{[4,0]}\right)
			\s 
			+ 2 \, \mbf F_2\left(\mbf u^{[1,0]}, \mbf u^{[3,1]}\right) + 2 \, \mbf F_2\left(\mbf u^{[2,0]}, \mbf u^{[2,1]}\right) + 2 \, \mbf F_2\left(\mbf u^{[1,1]}, \mbf u^{[3,0]}\right)
			\s 
			+ 3 \, \mbf F_3\left(\mbf u^{[0,1]}, \mbf u^{[2,0]}, \mbf u^{[2,0]}\right) + 6 \, \mbf F_3\left(\mbf u^{[1,0]}, \mbf u^{[2,0]}, \mbf u^{[1,1]}\right)
			\s 
			+ 3 \, \mbf F_3\left(\mbf u^{[1,0]}, \mbf u^{[1,0]}, \mbf u^{[2,1]}\right) + 6 \, \mbf F_3\left(\mbf u^{[1,0]}, \mbf u^{[0,1]}, \mbf u^{[3,0]}\right)
			\s
			+ 4 \, \mbf F_4\left(\mbf u^{[1,0]}, \mbf u^{[1,0]}, \mbf u^{[1,0]}, \mbf u^{[1,1]}\right) + 12 \, \mbf F_4\left(\mbf u^{[1,0]}, \mbf u^{[1,0]}, \mbf u^{[0,1]}, \mbf u^{[2,0]}\right)
			\s 
			+ 5 \, \mbf F_5\left(\mbf u^{[1,0]}, \mbf u^{[1,0]}, \mbf u^{[1,0]}, \mbf u^{[1,0]}, \mbf u^{[0,1]}\right), \label{secondequationfifthorder}
		\end{multline}
		and
		\begin{multline}
			\partial_A \mbf u^{[1,0]} \, h_1^{[3,2]} + \partial_B \mbf u^{[0,1]} \, h_2^{[3,2]} + \partial_A \mbf u^{[3,0]} \, h_1^{[1,2]} + \partial_B \mbf u^{[1,2]} \, h_2^{[2,1]} + \partial_A \mbf u^{[1,2]} \, h_1^{[3,0]}
			\s 
			+ i\omega A \, \partial_A \mbf u^{[3,2]} - i\omega B \, \partial_B \mbf u^{[3,2]} + \mbox{conj.}
			\s 
			= \left(\jac \mbf f(\mbf 0) + \mathds D \, \partial_{xx}\right) \mbf u^{[3,2]} + 2 \, \mbf F_2\left(\mbf u^{[0,1]}, \mbf u^{[3,1]}\right)
			\s 
			+ 2 \, \mbf F_2\left(\mbf u^{[1,0]}, \mbf u^{[2,2]}\right) + 2 \, \mbf F_2\left(\mbf u^{[2,0]}, \mbf u^{[1,2]}\right) + 2 \, \mbf F_2\left(\mbf u^{[1,1]}, \mbf u^{[2,1]}\right)
			\s 
			+ 2 \, \mbf F_2\left(\mbf u^{[0,2]}, \mbf u^{[3,0]}\right) + 3 \, \mbf F_3\left(\mbf u^{[1,0]}, \mbf u^{[1,1]}, \mbf u^{[1,1]}\right) + 6 \, \mbf F_3\left(\mbf u^{[1,0]}, \mbf u^{[2,0]}, \mbf u^{[0,2]}\right)
			\s 
			+ 6 \, \mbf F_3\left(\mbf u^{[0,1]}, \mbf u^{[2,0]}, \mbf u^{[1,1]}\right) + 3 \, \mbf F_3\left(\mbf u^{[1,0]}, \mbf u^{[1,0]}, \mbf u^{[1,2]}\right) + 3 \, \mbf F_3\left(\mbf u^{[0,1]}, \mbf u^{[0,1]}, \mbf u^{[3,0]}\right)
			\s 
			+ 6 \, \mbf F_3\left(\mbf u^{[1,0]}, \mbf u^{[0,1]}, \mbf u^{[2,1]}\right) + 4 \, \mbf F_4\left(\mbf u^{[1,0]}, \mbf u^{[1,0]}, \mbf u^{[1,0]}, \mbf u^{[0,2]}\right)
			\s 
			+ 12 \, \mbf F_4\left(\mbf u^{[1,0]}, \mbf u^{[1,0]}, \mbf u^{[0,1]}, \mbf u^{[1,1]}\right) + 12 \, \mbf F_4\left(\mbf u^{[1,0]}, \mbf u^{[0,1]}, \mbf u^{[0,1]}, \mbf u^{[2,0]}\right)
			\s 
			+ 10 \, \mbf F_5\left(\mbf u^{[1,0]}, \mbf u^{[1,0]}, \mbf u^{[1,0]}, \mbf u^{[0,1]}, \mbf u^{[0,1]}\right). \label{thirdequationfifthorder}
		\end{multline}
		Here, we are not concerned about the solutions of these systems, but the fifth-order coefficients associated with the amplitudes. To determine those, we need to state a solvability condition that comes from the parts of the solutions that are explicitly written below:
		\begin{align*}
			\mbf u^{[5,0]} &= |A|^4 \, A \, e^{ikx} \, \mbf W_1^{[5]} + \ldots + c.c.,
			\s 
			\mbf u^{[4,1]} &= |A|^4 \, \bar B \, e^{-ikx} \, \mbf W_1^{[4,1]} + \ldots + c.c.,
			\s 
			\mbf u^{[3,2]} &= |A|^2 \, |B|^2 \, A \, e^{ikx} \, \mbf W_1^{[3,2]} + \ldots + c.c.,
		\end{align*}
		where
        \begin{multline*}
			\left(\jac \mbf f(\mbf 0) - k^2 \, \mathds D - i \omega I\right) \, \mbf W_1^{[5]} = \bs \phi_1^{[1]} \, \frac{h_1^{[5,0]}}{|A|^4 A} + \ol{\bs \phi_1^{[1]}} \, \frac{h_2^{[5,0]}}{|A|^4 A}
			\s 
			- \left(\vphantom{\frac{1}{2}} - \left(2 \, C_1^{[3,0]} + \ol{C_1^{[3,0]}}\right) \mbf W_1^{[3]} + 2 \, \mbf F_2\left(\bs \phi_1^{[1]}, \mbf W_0^{[4]} + \ol{\mbf W_0^{[4]}}\right)\right.
			\s 
			+ 2 \, \mbf F_2\left(\ol{\bs \phi_1^{[1]}}, \mbf W_2^{[4]}\right) + 4 \, \mbf F_2\left(\mbf W_0^{[2]}, \mbf W_1^{[3]}\right) + 2 \, \mbf F_2\left(\mbf W_2^{[2]}, \ol{\mbf W_1^{[3]}}\right) + 2 \, \mbf F_2\left(\ol{\mbf W_2^{[2]}}, \mbf W_3^{[3]}\right)
			\s 
			+ 12 \, \mbf F_3\left(\bs \phi_1^{[1]}, \mbf W_0^{[2]}, \mbf W_0^{[2]}\right) + 6 \, \mbf F_3\left(\bs \phi_1^{[1]}, \mbf W_2^{[2]}, \ol{\mbf W_2^{[2]}}\right) + 12 \, \mbf F_3\left(\ol{\bs \phi_1^{[1]}}, \mbf W_0^{[2]}, \mbf W_2^{[2]}\right)
			\s 
			+ 3 \, \mbf F_3\left(\bs \phi_1^{[1]}, \bs \phi_1^{[1]}, \ol{\mbf W_1^{[3]}}\right) + 6 \, \mbf F_3\left(\bs \phi_1^{[1]}, \ol{\bs \phi_1^{[1]}}, \mbf W_1^{[3]}\right) + 3 \, \mbf F_3\left(\ol{\bs \phi_1^{[1]}}, \ol{\bs \phi_1^{[1]}}, \mbf W_3^{[3]}\right)
			\s 
			+ 4 \, \mbf F_4\left(\bs \phi_1^{[1]}, \bs \phi_1^{[1]}, \bs \phi_1^{[1]}, \ol{\mbf W_2^{[2]}}\right) + 24 \, \mbf F_4\left(\bs \phi_1^{[1]}, \bs \phi_1^{[1]}, \ol{\bs \phi_1^{[1]}}, \mbf W_0^{[2]}\right)
			\s 
			\left. + 12 \, \mbf F_4\left(\bs \phi_1^{[1]}, \ol{\bs \phi_1^{[1]}}, \ol{\bs \phi_1^{[1]}}, \mbf W_2^{[2]}\right) + 10 \, \mbf F_5\left(\bs \phi_1^{[1]}, \bs \phi_1^{[1]}, \bs \phi_1^{[1]}, \ol{\bs \phi_1^{[1]}}, \ol{\bs \phi_1^{[1]}}\right)\right),
		\end{multline*}
        \begin{multline*}
			\left(\jac \mbf f(\mbf 0) - k^2 \, \mathds D - i\omega I\right) \, \mbf W_1^{[4,1]} = \ol{\bs \phi_1^{[1]}} \, \frac{\ol{h_1^{[4,1]}}}{|A|^4 \bar B} + \bs \phi_1^{[1]} \, \frac{\ol{h_2^{[4,1]}}}{|A|^4 \bar B}
			\s 
			- \left(\vphantom{\frac{1}{2}} - \left(C_1^{[1,2]} + C_1^{[3,0]} + \ol{C_1^{[3,0]}}\right) \mbf W_1^{[2,1]}\right.
			\s 
			+ 2 \, \mbf F_2\left(\bs \phi_1^{[1]},\mbf W_0^{[4]} + \ol{\mbf W_0^{[4]}} + \ol{\mbf W_2^{[3,1]}}\right) + 2 \, \mbf F_2\left(\ol{\bs \phi_1^{[1]}}, \mbf W_0^{[3,1]}\right) + 4 \, \mbf F_2\left(\mbf W_0^{[2]}, \mbf W_1^{[2,1]}\right)
			\s 
			+ 2 \, \mbf F_2\left(\ol{\mbf W_2^{[2]}}, \mbf W_{1,2}^{[2,1]}\right) + 2 \, \mbf F_2\left(\mbf W_2^{[2]}, \ol{\mbf W_3^{[2,1]}}\right) + 2 \, \mbf F_2\left(\mbf W_1^{[3]}, \mbf W_2^{[1,1]}\right) + 2 \, \mbf F_2\left(\ol{\mbf W_1^{[3]}}, \mbf W_0^{[1,1]}\right)
			\s 
			+ 12 \, \mbf F_3\left(\bs \phi_1^{[1]}, \mbf W_0^{[2]}, \mbf W_0^{[2]} + \mbf W_2^{[1,1]}\right) + 6 \, \mbf F_3\left(\bs \phi_1^{[1]}, \ol{\mbf W_2^{[2]}}, \mbf W_2^{[2]} + \mbf W_0^{[1,1]}\right)
			\s 
			+ 12 \, \mbf F_3\left(\ol{\bs \phi_1^{[1]}}, \mbf W_0^{[2]}, \mbf W_0^{[1,1]}\right) + 6 \, \mbf F_3\left(\ol{\bs \phi_1^{[1]}}, \mbf W_2^{[2]}, \mbf W_2^{[1,1]}\right) + 6 \, \mbf F_3\left(\bs \phi_1^{[1]}, \ol{\bs \phi_1^{[1]}}, \mbf W_1^{[3]} + \mbf W_1^{[2,1]}\right)
			\s 
			+ 3 \, \mbf F_3\left(\ol{\bs \phi_1^{[1]}}, \ol{\bs \phi_1^{[1]}}, \mbf W_{1,2}^{[2,1]}\right) + 3 \, \mbf F_3\left(\bs \phi_1^{[1]}, \bs \phi_1^{[1]}, 2 \, \ol{\mbf W_1^{[3]}} + \ol{\mbf W_3^{[2,1]}}\right)
			\s 
			+ 12 \, \mbf F_4\left(\bs \phi_1^{[1]}, \bs \phi_1^{[1]}, \ol{\bs \phi_1^{[1]}}, 4 \, \mbf W_0^{[2]} + \mbf W_2^{[1,1]}\right) + 12 \, \mbf F_4\left(\bs \phi_1^{[1]}, \bs \phi_1^{[1]}, \bs \phi_1^{[1]}, \ol{\mbf W_2^{[2]}}\right)
            \s
            \left. + 12 \, \mbf F_4\left(\bs \phi_1^{[1]}, \ol{\bs \phi_1^{[1]}}, \ol{\bs \phi_1^{[1]}}, \mbf W_2^{[2]} + \mbf W_0^{[1,1]}\right) + 30 \, \mbf F_5\left(\bs \phi_1^{[1]}, \bs \phi_1^{[1]}, \bs \phi_1^{[1]}, \ol{\bs \phi_1^{[1]}}, \ol{\bs \phi_1^{[1]}}\right)\right),
		\end{multline*}
		and
        \begin{multline*}
			\left(\jac \mbf f(\mbf 0) - k^2 \, \mathds D - i \omega I\right) \, \mbf W_1^{[3,2]} = \bs \phi_1^{[1]} \, \frac{h_1^{[3,2]}}{|A|^2 \, |B|^2 A} + \ol{\bs \phi_1^{[1]}} \,  \frac{h_2^{[3,2]}}{|A|^2 \, |B|^2 A}
			\s 
			- \left(-\left(C_1^{[3,0]} + C_1^{[1,2]} + \ol{C_1^{[1,2]}}\right) \mbf W_1^{[2,1]} - \left(2 \, C_1^{[1,2]} + \ol{C_1^{[1,2]}}\right) \mbf W_1^{[3]} \right.
			\s 
			+ 2 \, \mbf F_2\left(\bs \phi_1^{[1]}, \mbf W_0^{[2,2]} + \ol{\mbf W_0^{[2,2]}} + \mbf W_2^{[3,1]}\right) + 2 \, \mbf F_2\left(\ol{\bs \phi_1^{[1]}}, \mbf W_2^{[2,2]} + \mbf W_0^{[3,1]}\right) + 4 \, \mbf F_2\left(\mbf W_0^{[2]}, \mbf W_1^{[2,1]}\right)
			\s 
			+ 2 \, \mbf F_2\left(\ol{\mbf W_1^{[2,1]}}, \mbf W_2^{[2]} + \mbf W_0^{[1,1]}\right) + 2 \, \mbf F_2\left(\mbf W_2^{[1,1]}, \mbf W_1^{[2,1]} + \mbf W_3^{[2,1]}\right) + 2 \, \mbf F_2\left(\ol{\mbf W_0^{[1,1]}}, \mbf W_{1,2}^{[2,1]}\right)
            \s
            + 4 \, \mbf F_2\left(\mbf W_0^{[2]}, \mbf W_1^{[3]}\right) + 6 \, \mbf F_3\left(\bs \phi_1^{[1]}, \mbf W_2^{[1,1]}, \mbf W_2^{[1,1]}\right) + 6 \, \mbf F_3\left(\bs \phi_1^{[1]}, \ol{\mbf W_0^{[1,1]}}, \mbf W_2^{[2]} + \mbf W_0^{[1,1]}\right)
			\s 
			+ 6 \, \mbf F_3\left(\ol{\bs \phi_1^{[1]}}, 2 \, \mbf W_0^{[2]} + \mbf W_2^{[1,1]}, \mbf W_2^{[2]} + \mbf W_0^{[1,1]}\right) + 12 \, \mbf F_3\left(\bs \phi_1^{[1]}, \mbf W_0^{[2]}, 2 \, \mbf W_0^{[2]} + \mbf W_2^{[1,1]}\right)
			\s 
			+ 9 \, \mbf F_3\left(\bs \phi_1^{[1]}, \bs \phi_1^{[1]}, \ol{\mbf W_1^{[2,1]}}\right) + 6 \, \mbf F_3\left(\bs \phi_1^{[1]}, \ol{\bs \phi_1^{[1]}}, \mbf W_1^{[3]} + 2 \, \mbf W_1^{[2,1]} + \mbf W_3^{[2,1]}\right)
			\s 
			+ 6 \, \mbf F_3\left(\ol{\bs \phi_1^{[1]}}, \ol{\bs \phi_1^{[1]}}, \mbf W_{1,2}^{[2,1]}\right) + 36 \, \mbf F_4\left(\bs \phi_1^{[1]}, \bs \phi_1^{[1]}, \ol{\bs \phi_1^{[1]}}, 2 \, \mbf W_0^{[2]} + \mbf W_2^{[1,1]}\right)
			\s 
			+ 12 \, \mbf F_4\left(\bs \phi_1^{[1]}, \bs \phi_1^{[1]}, \bs \phi_1^{[1]}, \ol{\mbf W_0^{[1,1]}}\right) + 24 \, \mbf F_4\left(\bs \phi_1^{[1]}, \ol{\bs \phi_1^{[1]}}, \ol{\bs \phi_1^{[1]}}, \mbf W_2^{[2]} + \mbf W_0^{[1,1]}\right)
            \s
            \left. + 60 \, \mbf F_5\left(\bs \phi_1^{[1]}, \bs \phi_1^{[1]}, \bs \phi_1^{[1]}, \ol{\bs \phi_1^{[1]}}, \ol{\bs \phi_1^{[1]}}\right)\right).
		\end{multline*}
		With this, considering Lemma \ref{simplificationorder5} we can choose, for simplicity,
		\begin{align*}
			h_2^{[5,0]} = h_1^{[4,1]} = h_2^{[3,2]} = h_1^{[0,5]} = h_2^{[1,4]} = h_1^{[2,3]} = 0,
		\end{align*}
		and we have, again, applying the inner product with the vector function $\bs{\hat \psi}_{1,\pm}^{[1]}$, that $h_1^{[5,0]} = C_1^{[5,0]} \, |A|^4 \, A$, $\ol{h_2^{[4,1]}} = C_1^{[1,4]} \, |A|^4 \, \bar B$, $h_1^{[3, 2]} = C_1^{[3, 2]} \, |A|^2 \, |B|^2 \, A$, $h_2^{[0, 5]} = \ol{C_1^{[5, 0]}} \, |B|^4 \, B$, $h_1^{[1, 4]} = C_1^{[1, 4]} \, |B|^4 \, A$, and $h_2^{[2, 3]} = \ol{C_1^{[3, 2]}} \, |A|^2 \, |B|^2 \, B$, where
		\begin{multline*}
			C_1^{[5,0]} = \frac{1}{\bs \phi_1^{[1]}\cdot \bs \psi_1^{[1]}} \, \bs \psi_1^{[1]}\cdot\left( - \left(2 \, C_1^{[3,0]} + \ol{C_1^{[3,0]}}\right) \mbf W_1^{[3]} + 2 \, \mbf F_2\left(\bs \phi_1^{[1]}, \mbf W_0^{[4]} + \ol{\mbf W_0^{[4]}}\right)\right.
			\s 
			+ 2 \, \mbf F_2\left(\ol{\bs \phi_1^{[1]}}, \mbf W_2^{[4]}\right) + 4 \, \mbf F_2\left(\mbf W_0^{[2]}, \mbf W_1^{[3]}\right) + 2 \, \mbf F_2\left(\mbf W_2^{[2]}, \ol{\mbf W_1^{[3]}}\right) + 2 \, \mbf F_2\left(\ol{\mbf W_2^{[2]}}, \mbf W_3^{[3]}\right)
			\s 
			+ 12 \, \mbf F_3\left(\bs \phi_1^{[1]}, \mbf W_0^{[2]}, \mbf W_0^{[2]}\right) + 6 \, \mbf F_3\left(\bs \phi_1^{[1]}, \mbf W_2^{[2]}, \ol{\mbf W_2^{[2]}}\right) + 12 \, \mbf F_3\left(\ol{\bs \phi_1^{[1]}}, \mbf W_0^{[2]}, \mbf W_2^{[2]}\right)
			\s 
			+ 3 \, \mbf F_3\left(\bs \phi_1^{[1]}, \bs \phi_1^{[1]}, \ol{\mbf W_1^{[3]}}\right) + 6 \, \mbf F_3\left(\bs \phi_1^{[1]}, \ol{\bs \phi_1^{[1]}}, \mbf W_1^{[3]}\right) + 3 \, \mbf F_3\left(\ol{\bs \phi_1^{[1]}}, \ol{\bs \phi_1^{[1]}}, \mbf W_3^{[3]}\right)
			\s 
			+ 4 \, \mbf F_4\left(\bs \phi_1^{[1]}, \bs \phi_1^{[1]}, \bs \phi_1^{[1]}, \ol{\mbf W_2^{[2]}}\right) + 24 \, \mbf F_4\left(\bs \phi_1^{[1]}, \bs \phi_1^{[1]}, \ol{\bs \phi_1^{[1]}}, \mbf W_0^{[2]}\right)
			\s 
			\left. + 12 \, \mbf F_4\left(\bs \phi_1^{[1]}, \ol{\bs \phi_1^{[1]}}, \ol{\bs \phi_1^{[1]}}, \mbf W_2^{[2]}\right) + 10 \, \mbf F_5\left(\bs \phi_1^{[1]}, \bs \phi_1^{[1]}, \bs \phi_1^{[1]}, \ol{\bs \phi_1^{[1]}}, \ol{\bs \phi_1^{[1]}}\right)\right),
		\end{multline*}
		\begin{multline*}
			C_1^{[1,4]} = \frac{1}{\bs \phi_1^{[1]} \cdot \bs \psi_1^{[1]}} \, \bs \psi_1^{[1]} \cdot \left(- \left(C_1^{[1,2]} + C_1^{[3,0]} + \ol{C_1^{[3,0]}}\right) \mbf W_1^{[2,1]} \vphantom{\frac{1}{4}}\right.
			\s 
			2 \, \mbf F_2\left(\bs \phi_1^{[1]}, \mbf W_0^{[4]} + \ol{\mbf W_0^{[4]}} + \ol{\mbf W_2^{[3,1]}}\right) + 2 \, \mbf F_2\left(\ol{\bs \phi_1^{[1]}}, \mbf W_0^{[3,1]}\right) + 4 \, \mbf F_2\left(\mbf W_0^{[2]}, \mbf W_1^{[2,1]}\right)
			\s 
			+ 2 \, \mbf F_2\left(\ol{\mbf W_2^{[2]}}, \mbf W_{1,2}^{[2,1]}\right) + 2 \, \mbf F_2\left(\mbf W_2^{[2]}, \ol{\mbf W_3^{[2,1]}}\right) + 2 \, \mbf F_2\left(\mbf W_1^{[3]}, \mbf W_2^{[1,1]}\right) + 2 \, \mbf F_2\left(\ol{\mbf W_1^{[3]}}, \mbf W_0^{[1,1]}\right)
			\s 
			+ 12 \, \mbf F_3\left(\bs \phi_1^{[1]}, \mbf W_0^{[2]}, \mbf W_0^{[2]} + \mbf W_2^{[1,1]}\right) + 6 \, \mbf F_3\left(\bs \phi_1^{[1]}, \ol{\mbf W_2^{[2]}}, \mbf W_2^{[2]} + \mbf W_0^{[1,1]}\right)
			\s 
			+ 12 \, \mbf F_3\left(\ol{\bs \phi_1^{[1]}}, \mbf W_0^{[2]}, \mbf W_0^{[1,1]}\right) + 6 \, \mbf F_3\left(\ol{\bs \phi_1^{[1]}}, \mbf W_2^{[2]}, \mbf W_2^{[1,1]}\right) + 6 \, \mbf F_3\left(\bs \phi_1^{[1]}, \ol{\bs \phi_1^{[1]}}, \mbf W_1^{[3]} + \mbf W_1^{[2,1]}\right)
			\s 
			+ 3 \, \mbf F_3\left(\ol{\bs \phi_1^{[1]}}, \ol{\bs \phi_1^{[1]}}, \mbf W_{1,2}^{[2,1]}\right) + 3 \, \mbf F_3\left(\bs \phi_1^{[1]}, \bs \phi_1^{[1]}, 2 \, \ol{\mbf W_1^{[3]}} + \ol{\mbf W_3^{[2,1]}}\right)
			\s 
			+ 12 \, \mbf F_4\left(\bs \phi_1^{[1]}, \bs \phi_1^{[1]}, \ol{\bs \phi_1^{[1]}}, 4 \, \mbf W_0^{[2]} + \mbf W_2^{[1,1]}\right) + 12 \, \mbf F_4\left(\bs \phi_1^{[1]}, \bs \phi_1^{[1]}, \bs \phi_1^{[1]}, \ol{\mbf W_2^{[2]}}\right)
            \s
            \left. + 12 \, \mbf F_4\left(\bs \phi_1^{[1]}, \ol{\bs \phi_1^{[1]}}, \ol{\bs \phi_1^{[1]}}, \mbf W_2^{[2]} + \mbf W_0^{[1,1]}\right) + 30 \, \mbf F_5\left(\bs \phi_1^{[1]}, \bs \phi_1^{[1]}, \bs \phi_1^{[1]}, \ol{\bs \phi_1^{[1]}}, \ol{\bs \phi_1^{[1]}}\right)\right),
		\end{multline*}
		and
		\begin{multline*}
			C_1^{[3,2]} = \frac{1}{\bs \phi_1^{[1]}\cdot \bs \psi_1^{[1]}} \, \bs \psi_1^{[1]} \cdot \left(- \left(C_1^{[3,0]} + C_1^{[1,2]} + \ol{C_1^{[1,2]}}\right) \mbf W_1^{[2,1]} - \left(2 \, C_1^{[1,2]} + \ol{C_1^{[1,2]}}\right) \mbf W_1^{[3]} \right.
			\s
			+ 2 \, \mbf F_2\left(\bs \phi_1^{[1]}, \mbf W_0^{[2,2]} + \ol{\mbf W_0^{[2,2]}} + \mbf W_2^{[3,1]}\right) + 2 \, \mbf F_2\left(\ol{\bs \phi_1^{[1]}}, \mbf W_2^{[2,2]} + \mbf W_0^{[3,1]}\right) + 4 \, \mbf F_2\left(\mbf W_0^{[2]}, \mbf W_1^{[2,1]}\right)
			\s 
			+ 2 \, \mbf F_2\left(\ol{\mbf W_1^{[2,1]}}, \mbf W_2^{[2]} + \mbf W_0^{[1,1]}\right) + 2 \, \mbf F_2\left(\mbf W_2^{[1,1]}, \mbf W_1^{[2,1]} + \mbf W_3^{[2,1]}\right) + 2 \, \mbf F_2\left(\ol{\mbf W_0^{[1,1]}}, \mbf W_{1,2}^{[2,1]}\right)
            \s
            + 4 \, \mbf F_2\left(\mbf W_0^{[2]}, \mbf W_1^{[3]}\right) + 6 \, \mbf F_3\left(\bs \phi_1^{[1]}, \mbf W_2^{[1,1]}, \mbf W_2^{[1,1]}\right) + 6 \, \mbf F_3\left(\bs \phi_1^{[1]}, \ol{\mbf W_0^{[1,1]}}, \mbf W_2^{[2]} + \mbf W_0^{[1,1]}\right)
			\s 
			+ 6 \, \mbf F_3\left(\ol{\bs \phi_1^{[1]}}, 2 \, \mbf W_0^{[2]} + \mbf W_2^{[1,1]}, \mbf W_2^{[2]} + \mbf W_0^{[1,1]}\right) + 12 \, \mbf F_3\left(\bs \phi_1^{[1]}, \mbf W_0^{[2]}, 2 \, \mbf W_0^{[2]} + \mbf W_2^{[1,1]}\right)
			\s 
			+ 9 \, \mbf F_3\left(\bs \phi_1^{[1]}, \bs \phi_1^{[1]}, \ol{\mbf W_1^{[2,1]}}\right) + 6 \, \mbf F_3\left(\bs \phi_1^{[1]}, \ol{\bs \phi_1^{[1]}}, \mbf W_1^{[3]} + 2 \, \mbf W_1^{[2,1]} + \mbf W_3^{[2,1]}\right)
			\s 
			+ 6 \, \mbf F_3\left(\ol{\bs \phi_1^{[1]}}, \ol{\bs \phi_1^{[1]}}, \mbf W_{1,2}^{[2,1]}\right) + 36 \, \mbf F_4\left(\bs \phi_1^{[1]}, \bs \phi_1^{[1]}, \ol{\bs \phi_1^{[1]}}, 2 \, \mbf W_0^{[2]} + \mbf W_2^{[1,1]}\right)
			\s 
			+ 12 \, \mbf F_4\left(\bs \phi_1^{[1]}, \bs \phi_1^{[1]}, \bs \phi_1^{[1]}, \ol{\mbf W_0^{[1,1]}}\right) + 24 \, \mbf F_4\left(\bs \phi_1^{[1]}, \ol{\bs \phi_1^{[1]}}, \ol{\bs \phi_1^{[1]}}, \mbf W_2^{[2]} + \mbf W_0^{[1,1]}\right)
            \s
            \left. + 60 \, \mbf F_5\left(\bs \phi_1^{[1]}, \bs \phi_1^{[1]}, \bs \phi_1^{[1]}, \ol{\bs \phi_1^{[1]}}, \ol{\bs \phi_1^{[1]}}\right)\right).
		\end{multline*}
\end{document}